\def\argmax{\mathop{\rm \arg\!\max}}
\newtheorem{theorem}{Theorem}
\newtheorem{corollary}{Corollary}
\newtheorem{definition}{Definition}
\newtheorem{lemma}{Lemma}
\newtheorem{remark}{Remark}
\def\bn{{\bf n}}
\def\bp{{\bf p}}
\def\bq{{\bf q}}
\def\br{{\bf r}}
\def\bs{{\bf s}}
\def\bu{{\bf u}}
\def\bv{{\bf v}}
\def\bx{{\bf x}}
\def\by{{\bf y}}
\def\bz{{\bf z}}
\def\bA{{\bf A}}
\def\bB{{\bf B}}
\def\bD{{\bf D}}
\def\bG{{\bf G}}
\def\bH{{\bf H}}
\def\bI{{\bf I}}
\def\bK{{\bf K}}
\def\bP{{\bf P}}
\def\bQ{{\bf Q}}
\def\bR{{\bf R}}
\def\bW{{\bf W}}
\def\cA{\mbox{$\mathcal{A}$}}
\def\cB{\mbox{$\mathcal{B}$}}
\def\cC{\mbox{$\mathcal{C}$}}
\def\cK{\mbox{$\mathcal{K}$}}
\def\cN{\mbox{$\mathcal{N}$}}
\def\cO{\mbox{$\mathcal{O}$}}
\def\cP{\mbox{$\mathcal{P}$}}
\def\cR{\mbox{$\mathcal{R}$}}
\def\cS{\mbox{$\mathcal{S}$}}
\def\cT{\mbox{$\mathcal{T}$}}
\def\cV{\mbox{$\mathcal{V}$}}
\def\bbC{\mbox{$\mathbb{C}$}}
\def\bbE{\mbox{$\mathbb{E}$}}
\def\bbR{\mbox{$\mathbb{R}$}}
\def\bbV{\mbox{$\mathbb{V}$}}
\def\ubG{\mbox{$\underline{\bf{G}}$}}
\def\ubH{\mbox{$\underline{\bf{H}}$}}
\def\ubW{\mbox{$\underline{\bf{W}}$}}
\def\ubn{\mbox{$\underline{\bf{n}}$}}
\def\ubq{\mbox{$\underline{\bf{q}}$}}
\def\ubr{\mbox{$\underline{\bf{r}}$}}
\def\ubs{\mbox{$\underline{\bf{s}}$}}
\def\ubu{\mbox{$\underline{\bf{u}}$}}
\def\ubv{\mbox{$\underline{\bf{v}}$}}
\def\ubx{\mbox{$\underline{\bf{x}}$}}
\def\uby{\mbox{$\underline{\bf{y}}$}}
\def\ubz{\mbox{$\underline{\bf{z}}$}}
\newcommand{\thickhline}{%
    \noalign {\ifnum 0=`}\fi \hrule height 1pt
    \futurelet \reserved@a \@xhline
}
\newcolumntype{"}{@{\hskip\tabcolsep\vrule width 1pt\hskip\tabcolsep}}
\title{Base Station Antenna Selection for Low-Resolution ADC Systems}
\author{
Jinseok Choi, Junmo Sung, Narayan Prasad, Xiao-Feng Qi, \\Brian L. Evans, and Alan Gatherer \footnote{
J. Choi, J. Sung, and B. L. Evans conducted this research in the Wireless Networking and Communications Group (WNCG), Department of Electrical and Computer Engineering, The University of Texas at Austin, Austin, TX, USA. (e-mail: \{jinseokchoi89,  junmo.sung\}@utexas.edu, bevans@ece.utexas.edu).
J. Sung is also currently with Samsung in Dallas, TX, USA.
N. Prasad and X. Qi are with the Radio Algorithms Research Group, Futurewei Technologies, NJ Research Center, Bridgewater, NJ, USA. (email: \{narayan.prasad, xiao.feng.qi\}@futurewei.com).
A. Gatherer is with Wireless Access Lab., Futurewei Technologies, Legacy Dr, Plano, TX,  USA. (e-mail: alan.gatherer@futurewei.com).
The authors at The University of Texas at Austin were supported by WNCG Industrial Affiliates Funding from Futurewei Technologies received in 2018.
A preliminary version of this work was presented at IEEE ICASSP 2018 \cite{choi2018antenna}.
}
}
\begin{document}
\maketitle

%%%%%%%%%%%%%%%%%%%%%%%%%%%%
\begin{abstract}
This paper investigates antenna selection at a base station with large antenna arrays and low-resolution analog-to-digital converters. For downlink transmit antenna selection for narrowband channels, we show (1) a selection criterion that maximizes sum rate with zero-forcing precoding equivalent to that of a perfect quantization system; (2) maximum sum rate increases with number of selected antennas; (3) derivation of the sum rate loss function from using a subset of antennas; and (4) unlike high-resolution converter systems, sum rate loss reaches a maximum at a point of total transmit power and decreases beyond that point to converge to zero. For wideband orthogonal-frequency-division-multiplexing (OFDM) systems, our results hold when entire subcarriers share a common subset of antennas. For uplink receive antenna selection for narrowband channels, we (1) generalize a greedy antenna selection criterion to capture tradeoffs between channel gain and quantization error; (2) propose a quantization-aware fast antenna selection algorithm using the criterion; and (3) derive a lower bound on sum rate achieved by the proposed algorithm based on submodular functions. For wideband OFDM systems, we extend our algorithm and derive a lower bound on its sum rate. Simulation results validate theoretical analyses and show increases in sum rate over conventional algorithms.

%%%% General problem statement
%In this paper, we investigate antenna selection at a base station (BS) with a large number of antenna arrays in low-resolution analog to digital converter (ADC) systems.
%We explore both downlink (DL) transmit antenna selection and uplink (UL) receive antenna selection problems.
%For narrowband and wideband channels, we show that the DL transmit antenna selection criterion that maximizes the sum rate with zero-forcing (ZF) precoding is equivalent to that of the perfect quantization system.
%We further show that the maximum sum rate increases with the number of selected transmit antennas and derive the sum rate loss function from using a subset of the antennas.
%Unlike high-resolution ADC systems, there exists a point of a total transmit power that maximizes the sum rate loss.
%In addition, we show that the rate loss decreases beyond the point and converges to zero.
%For an UL receive antenna selection problem, we generalize an existing greedy selection criterion to incorporate quantization effects, which captures the tradeoff between the channel gain and increase in quantization error from selecting an additional antenna.
%Using the derived criterion, we propose a quantization-aware fast antenna selection algorithm.
%We also derive a lower bound of the sum rate achieved by the proposed algorithm based on properties of submodularity functions.
%Simulation results validate theoretical analyses and show that the proposed UL algorithms outperform conventional algorithms in achievable sum rate.
\end{abstract}
\begin{IEEEkeywords}
Downlink and uplink antenna selection, low-resolution ADCs, OFDM communications, maximum sum rate, greedy antenna selection, 
\end{IEEEkeywords}
%%%%%%%%%%%%%%%%%%%%%%%%%%%%%%%%%%%%%%%%%%%%%%%%%%%%%%%%

%%%%%%%%%%%%%%%%%%%%%%%%%%%%%%%%%%%%%%%%%%%%%%%%%%%%%%%
\section{Introduction}
\label{sec:intro}
%%%%%%%%%%%%%%%%%%%%%%%%%%%%%%%%%%%%%%%%%%%%%%%%%%%%%%%

% BACKGROUND & MOTIVATION 
Large-scale multiple-input multiple-output (MIMO) systems have been considered as a potential technology for future wireless systems because they offer orders of magnitude improvement in spectral efficiency \cite{marzetta2010noncooperative, larsson2014massive}.
The large number of antennas, however, comes with practical challenges such as hardware cost and power consumption \cite{lu2014overview}.
Antenna selection can be a potential solution to reduce the large power consumption by efficiently reducing the number of radio frequency (RF) chains \cite{mendez2016hybrid}.
In addition, since the power consumption of analog-to-digital converters (ADCs) scales exponentially in the number of quantization bits \cite{walden1999analog}, reducing the resolution of ADCs provides additional power savings for future communication systems \cite{mezghani2007ultra, mo2015capacity}.
% in the receiver with many antennas.
%Consequently, receivers with low-resolution ADCs (1-3 bits) have emerged as a possible solution to this problem .
In this regard, we investigate base station (BS) antenna selection problems in low-resolution ADC systems for uplink (UL) and downlink (DL) communications.

%%% LITERATURE REVIEW %%%

%%%%%%%%%%
\subsection{Prior Work}
%%%%%%%%%%

% part 1) Transmit Antenna selection literatures
Antenna selection problems have been widely studied without quantization error for high-resolution ADC systems.
% where quantization error is negligible.
For the transmit antenna selection, it was shown that single antenna selection achieves full diversity gain which the transmitter without antenna selection (the transmitter uses all antennas) achieves \cite{chen2005analysis}, and it is optimal in the low signal-to-noise ratio (SNR) \cite{sanayei2007capacity}.
%, the single transmit antenna selection was proved to be optimal .
To find the best transmit antenna subset, convex optimization techniques were adopted by relaxing a binary integer problem to a real number problem \cite{gao2013antenna,khademi2014convex}.
Transmit antenna selection was also jointly studied with other problems \cite{zhang2008performance, amadori2017large}.
An outage probability was derived for single user selection and antenna selection in \cite{zhang2008performance}, and a precoder was designed jointly with antenna selection \cite{amadori2017large}.
%based on constructive interference exploitation in \cite{amadori2017large}.
Energy and spectral efficiency tradeoff was maximized in \cite{liu2017energy} by solving a multi-objective antenna selection problem.
For special systems such as spatial modulation systems, a Euclidian distance-based antenna selection method was developed \cite{yang2016transmit}. 

% part 2) Receive Antenna selection literatures
Receive antenna selection methods were also developed for last decade \cite{gorokhov2003receive, molisch2005cap, dua2006receive, vaze2012submodularity, liu2009low, nara2009antenna}.
In \cite{gorokhov2003receive}, a greedy antenna selection method was developed by minimizing capacity loss.
It was shown in \cite{gorokhov2003receive} that the diversity order of the receive antennal selection system is same as the full diversity order.
% similar to the transmit antenna selection \cite{chen2005analysis}.
In \cite{molisch2005cap}, a correlation-based method and mutual information-based method were developed, showing that selecting receive antennas more than the number of transmit antennas can nearly achieve the performance of full receive antenna systems.
Convex optimization approach was also taken in receive antenna selection \cite{dua2006receive}. 
To provide a lower bound of greedy selection methods, modularity and submodularity concepts were used in \cite{vaze2012submodularity}.
In \cite{zhang2009receive} a sampling-based selection method was proposed by employing cross entropy optimization technique.

% part 3) Antenna selection literatures for special channels
Antenna selection problems have been studied for various channels.
For correlated channels, selection algorithms were proposed by exploiting partial channel state information (CSI) such as a channel covariance matrix \cite{dai2006optimal}.
Antenna selection problems were also solved for millimeter wave channels jointly with precoder design \cite{amadori2015low,li2019joint}.
In orthogonal frequency division multiplexing (OFDM) systems, both transmit antenna selection \cite{torabi2008antenna, le2016antenna} and 
receive antenna selection algorithms \cite{liu2009low, nara2009antenna} were developed.
An adaptive Markov chain Monte Carlo (MCMC) method was adopted for antenna selection \cite{liu2009low}, and optimal power allocation between training and data symbols with antenna selection was derived to minimize performance loss due to channel estimation error \cite{nara2009antenna}.
An outage probability was analyzed for per-subcarrier antenna selection in \cite{torabi2008antenna}, and an adaptive antenna selection method that balances between per-subcarrier and bulk selection was proposed in \cite{le2016antenna}.

% part 4) Justification (antenna selection vs. phase shifters)
%For large-scale MIMO receivers, analog processing with phase shifters was compared against analog processing with switches, i.e., antenna selection, to reduce the number of ADCs \cite{mendez2016hybrid}.
%It was shown in \cite{mendez2016hybrid} that antenna selection provides similar spectral efficiencies as analog processing with phase shifters for equal power consumption in millimeter wave channels.
%The antenna selection used in \cite{mendez2016hybrid} does not exploit the sparsity of mmWave channels, whereas the analog processing with phase shifters does.
%Hence, antenna selection is less limited by the channel environment such as low scattering than analog processing with phase shifters when reducing the number of RF chains. 
%On that account, antenna selection is more applicable to general channels for the massive MIMO receiver. 
%Indeed, for channels measured at $2.6$ GHz, a great number of RF chains could be turned off by using antenna selection without a substantial performance loss \cite{gao2015massive}.

% part 5) Antenna selection with low-resolution ADCs
Most prior work on antenna selection, however, focused on MIMO systems without any quantization errors. 
Accordingly, antenna selection for low-resolution ADC systems that incorporates coarse quantization effect needs to be investigated.
In \cite{chen2019joint}, a cross entropy maximization approach in \cite{zhang2009receive} was extended for low-resolution ADC systems by jointly solving the user scheduling problem.
Transmit antenna selection was analyzed for single antenna selection by utilizing Weibul distribution in low-resolution ADC systems \cite{choi2019analysis}.
In \cite{choi2019analysis}, it was shown that although the TAS gain is limited when compared to the gain for perfect quantization, the TAS gain can still provide a large increase of ergodic rate.
Although the proposed receive antenna selection algorithm in \cite{chen2019joint} demonstrated its high performance, it can require high complexity when the number of candidate antennas are large due to its parameters such as the number of iterations and sampling.
In addition, the transmit antenna selection in \cite{choi2019analysis} considers single antenna selection and thus, it is difficult to be generalized to multiple antenna selection.

%%% CONTRIBUTION %%%
\subsection{Contributions}
%%%%%%%%%%%%%%%%%%%%%

% General Problem statement
In this paper, we extend our previous work \cite{choi2018antenna} to investigate antenna selection at a BS with a large number of antenna arrays in low-resolution ADC systems where both the BS and mobile stations (MSs) are equipped with low-resolution ADCs.
We investigate DL transmit antenna selection and UL receive antenna selection.
The contributions are summarized as follows:
\begin{itemize}[leftmargin=*]
	% DL Narrowband Communications
	\item For narrowband channels, we show that the DL transmit antenna selection problem with zero-forcing (ZF) precoding in low-resolution ADC systems is equivalent to that in high-resolution ADC systems when antennas are selected to maximize the DL sum rate. 
	Observing the quantization effect in the SNR, we further analyze the DL sum rate with antenna selection by incorporating quantization effects.
	We show that selecting more transmit antennas provides larger maximum sum rate for low-resolution ADC systems as well as high-resolution ADC systems.
	Unlike the rate loss in high-resolution ADC systems, we prove that the rate loss decreases beyond a certain point of transmit power and converges to zero in low-resolution ADC systems. 
%	that the achievable rate loss due to reduced number of transmit antennas monotonically increases with transmit power, 
%  UL Narrowband Communications
	\item For an UL receive antenna selection problem in the narrowband, we generalize an existing criterion for a greedy capacity-maximization antenna selection method to incorporate quantization effects.
	The derived objective function offers an opportunity to select an antenna with the best tradeoff between the additional channel gain and increase in quantization error. 
	We also derive a lower bound of the sum rate achieved by the proposed greedy algorithm by using a concept of submodularity.
%Using the generalized objective function, we propose an antenna selection algorithm based on a conventional antenna selection algorithm without an increase in overall complexity.
	In addition, we modify the adaptive MCMC antenna selection \cite{liu2009low} for the low-resolution ADC systems to provide a numerical upper bound of the sum rate.
%	 as exhaustive search for the optimal antenna subset in the large antenna array regime is infeasible.
% Wideband OFDM
	\item We extend the antenna selection problem to the wideband OFDM systems. 
	We first derive the wideband OFDM systems under coarse quantization for both DL and UL communications. 
	Then, we show that the derived results in the DL narrowband  communications also hold for the DL OFDM communication when subcarriers share a common antenna subset.
	For the UL OFDM communications, we modify the proposed received antenna selection algorithms and derive the lower bound of the capacity with the greedy algorithm.
%Simulation
	\item Simulation results validate the theoretical results and demonstrate that the proposed algorithm outperforms conventional algorithms in achievable rate.
	The proposed receive antenna selection algorithm provides near optimal sum rate performance in the large antenna array regime.
\end{itemize}

%%% NOTATIONS %%%
{\it Notation}: $\bf{A}$ is a matrix and $\bf{a}$ is a column vector. 
$\mathbf{A}^{H}$ and $\mathbf{A}^T$  denote conjugate transpose and transpose. 
$[{\bf A}]_{i,:}$ and $ \mathbf{a}_i$ indicate the $i$th row and column vector of $\bf A$. 
We denote $a_{i,j}$ or $[\bA]_{i,j}$ as the $\{i,j\}$th element of $\bf A$ and $a_{i}$ as the $i$th element of $\bf a$. 
%$\lambda_i\{{\bf A}\}$ denotes the $i$-th largest singular value of ${\bf A}$. 
$\mathcal{CN}(\mu, \sigma^2)$ is the circularly complex Gaussian distribution with mean $\mu$ and variance $\sigma^2$. 
$\mathbb{E}[\cdot]$ and $\bbV[\cdot]$ represent an expectation and variance operators, respectively.
The correlation matrix is ${\bf R}_{\bf xy} = \mathbb{E}[{\bf x}{\bf y}^H]$.
The diagonal matrix $\rm diag\{\bf A\}$ has $\{a_{i,i}\}$ at its $i$th diagonal entry, and $\rm diag \{\bf a\}$ or ${\rm diag}\{{\bf a}^T\}$ has $\{a_i\}$ at its $i$th diagonal entry. 
${\rm BlkDiag}\{\bA_1,\dots,\bA_N\}$ is a block diagonal matrix with block diagonal entries $\bA_1,\cdots,\bA_N$.
${\rm BlkCirc}\{\bA_0, \bA_1,\cdots, \bA_N\}$ is a block circulant matrix with $[\bA_0, \bA_1,\cdots, \bA_N]$ at its first block row. 
${\bf I}_N$ is the $N\times N$ identity matrix and $\bf 0$ is a matrix that has all zeros in its entries with a proper dimension.
%denotes an identity matrix with a proper dimension and 
$\|\bf A\|$ represents $L_2$ norm. 
$|\cdot|$ indicates an absolute value, cardinality, and determinant for a scalar value $a$, a set $\cA$, and a matrix $\bA$, respectively.
A trace operator is ${\rm tr}\{\cdot\}$.

%%%%%%%%%%%%%%%%%%%%%%%%%%%%%%%%%%%%%%%%%%%%%%%%%%%%%%%%
\section{System Model}
\label{sec:sys_model}
%%%%%%%%%%%%%%%%%%%%%%%%%%%%%%%%%%%%%%%%%%%%%%%%%%%%%%%%

We consider single-cell multiuser systems in which a BS serves $N_{\rm MS}$ MSs.
As shown in Fig.~\ref{fig:system}, The BS is equipped with $N_{\rm BS}$ antennas and low-resolution ADCs.
% since ADCs are generally more power-demanding and complicated than DACs to achieve a given speed and resolution \cite{yao2010design}.
Each MS is equipped with a single antenna and low-resolution ADCs.
We assume that the number of the BS antennas is much larger than the number of MSs, $N_{\rm BS} \gg N_{\rm MS}$.
The CSI is assumed to be known at the BS.

% FIGURE: SYSTEM MODEL
\begin{figure}[!t]\centering
	\includegraphics[scale = 0.45]{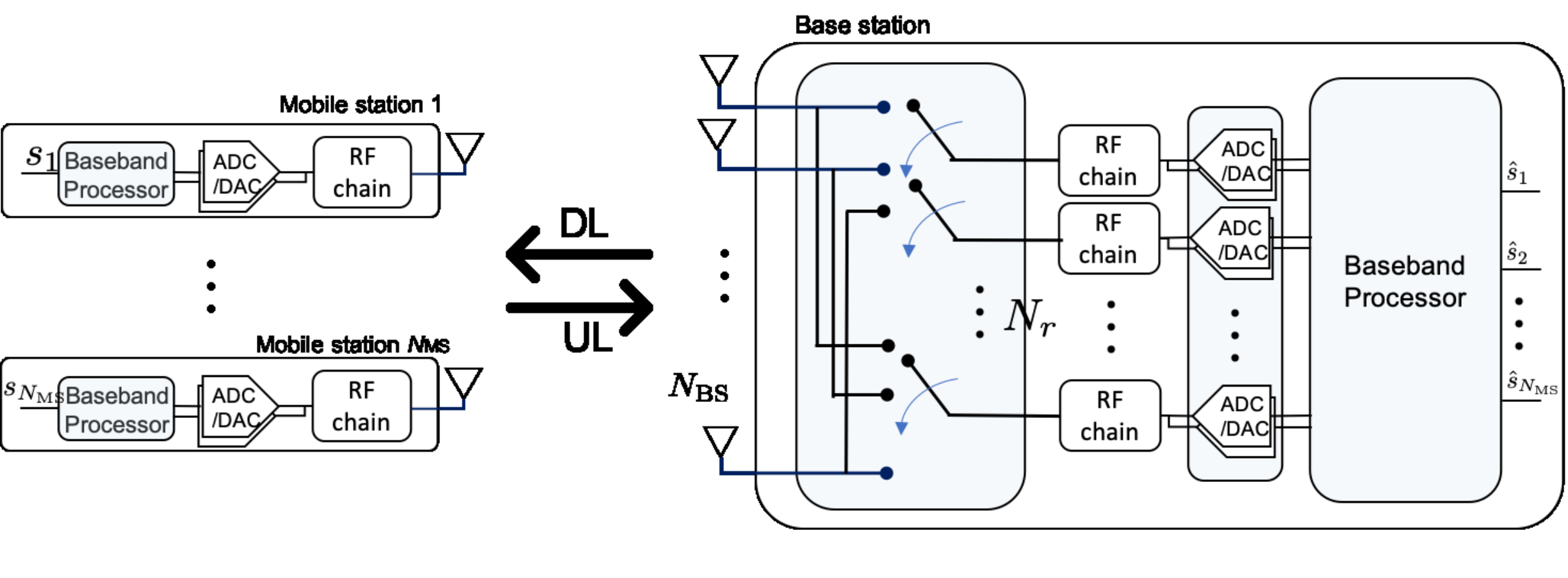}
	\caption{A multiuser communication system in which a base station (BS) serves $N_{\rm MS}$ mobile stations (MSs). The BS is equipped with $N_{\rm BS}$ antennas and low-resolution ADCs.	Each MS is equipped with a single antenna and low-resolution ADCs.} 
	\label{fig:system}
\end{figure}

%%%%%%%%%%%%%%%%%%%%%%%%%%%%
\subsection{Downlink Narrowband System}
\label{subsec:downlink}
%%%%%%%%%%%%%%%%%%%%%%%%%%%%

The BS selects $N_t$ transmit antennas and employs a ZF precoding to null multiuser interference signals by using the CSI.
The vector of the precoded transmit signals $\bx^{\rm dl}\in \bbC^{N_t}$ is given as 
\begin{align}
	\nonumber
	\bx^{\rm dl} = \bW_{\rm BB}(\cT)\bP^{1/2}\bs^{\rm dl}
\end{align} 
where $\bW_{\rm BB}(\cT)\in \bbC^{N_t\times N_{\rm MS}}$ is the precoder with the selected antennas in the subset of antenna indices $\cT$, $\bP = {\rm diag}\{p_1,\dots,p_{N_{\rm MS}}\}$ is the matrix of transmit power for $\bs^{\rm dl}$, and $\bs^{\rm dl} \in \bbC^{N_{\rm MS}}$ is the user symbol vector.
The transmit power is constrained by the total power constraint $P$ as
\begin{align}
	\label{eq:txpower_const_dl}
	{\rm tr}(\bbE[\bx^{\rm dl}\bx^{{\rm dl}\,H}]) = {\rm tr}(\bW_{\rm BB}(\cT)\bP\bW_{\rm BB}^H(\cT))  \leq P.
\end{align}
With ZF precoding, the precoder $\bW_{\rm BB}(\cT)$ becomes $\bW_{\rm BB}(\cT) = \bH_\mathcal{T}^{{\rm dl}\,H}(\bH_\mathcal{T}^{\rm dl}\bH_\mathcal{T}^{{\rm dl}\,H})^{-1}$.
Accordingly, the vector of received analog baseband signals at the MSs is given as 
\begin{align}
	\label{eq:r_dl_zf}
	\br^{\rm dl} =  \bH^{\rm dl}_\mathcal{T} \bx^{\rm dl} + \bn^{\rm dl}= \bP^{1/2}\bs^{\rm dl} + \bn^{\rm dl} 
\end{align}
where $ \bH^{\rm dl}_\mathcal{T} \in \bbC^{N_{\rm MS} \times N_t}$ is the  DL narrowband channel matrix, which consists of $N_t$ selected columns of the DL channel $\bH^{\rm dl}\in \bbC^{N_{\rm MS} \times N_{\rm BS}}$, and $\bn^{\rm dl} \sim \mathcal{CN}(\mathbf{0}, \mathbf{I}_{N_{\rm MS}})$ is the additive white circularly complex Gaussian noise (AWGN) vector.

Using the additive quantization noise model (AQNM) \cite{fletcher2007robust}, which provides a reasonable accuracy for low to medium SNR \cite{orhan2015low}, the quantized DL received signal vector is expressed as 
\begin{align}
	\nonumber
	 \by^{\rm dl}  &= \mathcal{Q}\bigl({\rm Re}\{{\br}^{\rm dl}\}\bigr)  + j\mathcal{Q}\bigl({\rm Im}\{{\br}^{\rm dl}\}\bigr)\\
	\label{eq:y_dl} 
	&=	\alpha_b \bP^{1/2} \bs^{\rm dl}  + \alpha_b  \bn^{\rm dl} +  \bq^{\rm dl} 
\end{align}
where $\mathcal{Q}(\cdot)$ is the element-wise quantizer function.
Here, $\alpha_b$ is defined as $\alpha_b = 1- \beta_b$ and considered to be the quantization gain $(\alpha_b <1)$, and $\beta_b$ is the normalized mean squared quantization error $\beta_b = \frac{\mathbb{E}[|{r}_i - {y}_i|^2]}{\mathbb{E}[|r_i|^2]}$.
Assuming a scalar minimum mean squared error (MMSE) quantizer and Gaussian signaling ${\bs}^{\rm dl} \sim \mathcal{CN}(\mathbf{0}, \mathbf{I}_{N_{\rm MS}})$, $\beta_b$ is approximated as $\beta_b \approx \frac{\pi\sqrt{3}}{2} 2^{-2b}$ for $b > 5$ \cite{gersho2012vector}, where $b$ is the number of quantization bits for each real and imaginary part.
The values of $\beta_b$ for $b \leq 5$ are shown in Table 1 in \cite{fan2015uplink}.
The vector ${\bq}^{\rm dl} \in \bbC^{N_{\rm MS}}$ represents the additive quantization noise that is uncorrelated with the quantization input ${\br}^{\rm dl}$ \cite{fletcher2007robust}.
We assume that the quantization noise follows the complex Gaussian distribution with a zero mean ${\bq}^{\rm dl} \sim \mathcal{CN}({\bf 0},{\bf R}_{\bq^{\rm dl} \bq^{\rm dl}})$ \cite{fan2015uplink}.
The covariance matrix of $\bq^{\rm dl}$ is derived as \cite{fan2015uplink}
\begin{align}
	\label{eq:Rqq_dl}
	\bR_{\bq^{\rm dl}\bq^{\rm dl}} = \alpha_b(1-\alpha_b){\rm diag}\big\{\bbE\big[\br^{\rm dl}\br^{{\rm dl}\,H}\big]\big\}= \alpha_b(1-\alpha_b)(\bP + \bI_{N_{\rm MS}}).
\end{align}

%% TABLE %%%%%%%%%%%%%%%%
%\begin{table}[!t]
%\centering
%\caption{The Values of $\beta_b$ for Quantization Bits $b$ }
%\label{tb:beta}
%\begin{tabular}{ l c c c c c }
% 	\thickhline
%%	\hline
%	$ b$  & 1 & 2 & 3 & 4 & 5\\
%  	\hline
% 	$\beta_b$   & 0.3634 & 0.1175 & 0.03454 & 0.009497 & 0.002499 \\
%% 	\hline
%    \thickhline
%\end{tabular}
%\vspace{-0.5em}
%\end{table}
%%%%%%%%%%%%%%%%%%%%%%%%%

%%%%%%%%%%%%%%%%%%%%%%%%%%%%
\subsection{Uplink Narrowband System}
\label{subsec:uplink}
%%%%%%%%%%%%%%%%%%%%%%%%%%%%

The BS selects $N_r$ receive antennas and receives signals from $N_{\rm MS}$ MSs. 
The selected antennas are connected to RF chains followed by  low-resolution ADCs.
% as shown in Fig. \ref{fig:system}.
The UL narrowband channel matrix between the BS and MSs is denoted as $\bH^{\rm ul} \in \bbC^{N_{\rm BS}\times N_{\rm MS}}$.
The received baseband analog signals at the $N_r$ selected antennas ${\bf r}^{\rm ul} \in \mathbb{C}^{N_r}$ can be expressed as
%We consider a single-cell MIMO uplink network where a BS employs uniform linear array (ULA) and analog combiners with $N_r$ antennas as shown in Fig. \ref{fig:system}.
%We assume $N_{u}$ users with a single antenna, and the BS selects $N_{\rm sc}$ users to serve.
%$N_{\rm RF}$ RF chains are connected to $N_{\rm RF}$ pairs of ADCs.
%The received baseband analog signal ${\bf r} \in \mathbb{C}^{N_r}$ with a narrowband channel assumption is given as
\begin{align}
	\label{eq:r_ul}
	{\bf r}^{\rm ul} = \sqrt{\rho} {\bH}_{\mathcal{K}}^{\rm ul}{\bs}^{\rm ul} + {\bn}^{\rm ul}
\end{align} 
where $\rho$, $\bH_{\mathcal{K}}^{\rm ul}\in \bbC^{N_r \times N_{\rm MS}}$, ${\bs}^{\rm ul} \in \bbC^{N_{\rm MS}}$, and ${\bn}^{\rm ul} \in \bbC^{N_r}$ denotes the transmit power, the channel matrix for the selected antennas in the subset of antenna indices $\cK$, the user symbol vector, and the AWGN vector, respectively. 
%Here, ${\mathcal{K}}$ represents the index set of selected antennas with the cardinality of $|\mathcal{K}| = N_r$.
%The matrix ${\bf H}$ is the $N_r \times N_s$ channel matrix.
We assume $\bs^{\rm ul}\sim \cC\cN({\bf 0},\bI_{N_{\rm MS}})$ and ${\bn}^{\rm ul}\sim \cC\cN({\bf 0},\bI_{N_{r}})$. 

After the antenna selection, each real and imaginary component of the complex output $r_i^{\rm ul}$, where $r_i^{\rm ul}$ denotes the $i$th element of $\br^{\rm ul}$ in \eqref{eq:r_ul}, is quantized at the pair of ADCs.
%For analytical tractability, the additive quantization noise model (AQNM) is used to linearize the quantization process as a function of quantization bits $b$.
%AQNM \cite{orhan2015low} shows reasonable accuracy for $b \in \{1, 2, 3\}$ in low and medium SNR ranges.
Adopting the AQNM \cite{fletcher2007robust}, the quantized UL received baseband signals becomes
\begin{align} 
	\nonumber
	{\by}^{\rm ul} & = \mathcal{Q}\bigl({\rm Re}\{{\br}^{\rm ul}\}\bigr)  + j\mathcal{Q}\bigl({\rm Im}\{{\br}^{\rm ul}\}\bigr)\\ 
	\label{eq:y_ul}
	&= \alpha_b \sqrt{\rho} {\bH}_\mathcal{K}^{\rm ul}{\bs}^{\rm ul}+ \alpha_b {\bn}^{\rm ul} +{\bq}^{\rm ul}
\end{align} 
%where $\mathcal{Q}(\cdot)$ is the element-wise quantizer function.
%Here, $\alpha_b$ is defined as $\alpha_b = 1- \beta_b$ and considered to be the quantization gain $(\alpha_b <1)$, and $\beta_b$ is the normalized mean squared quantization error $\beta_b = \frac{\mathbb{E}[|{r}_i - {y}_i|^2]}{\mathbb{E}[|r_i|^2]}$.
%Assuming a scalar minimum mean squared error quantizer and Gaussian signaling for ${\bf s} \sim \mathcal{CN}(\mathbf{0}, \mathbf{I})$ where $\mathcal{CN}(\mathbf{0}, \mathbf{I})$ represents complex Gaussian distribution with a zero mean vector ${\bf 0}$ and the identity matrix ${\bf I}$ with proper dimension for a covariance matrix, $\beta$ is approximated as $\beta \approx \frac{\pi\sqrt{3}}{2} 2^{-2b}$ for $b > 5$.
%Note that $b$ is the number of quantization bits for each real and imaginary part.
%The values of $\beta$ for $b \leq 5$ are shown in Table \ref{tb:beta}.
where ${\bf q}^{\rm ul}$ represents the additive quantization noise that is uncorrelated with ${\bf r}^{\rm ul}$.
We assume  ${\bf q}^{\rm ul} \sim \mathcal{CN}({\bf 0},{\bf R}_{\bq^{\rm ul} \bq^{\rm ul}})$ \cite{fan2015uplink}.
The covariance matrix of $\bq^{\rm ul}$ is given by
\begin{align}
	\label{eq:Rqq_ul}
	\mathbf{R}_{\bq^{\rm ul}\bq^{\rm ul}}= \alpha_b(1-\alpha_b)\,{\rm diag}(\rho{\bH}_\mathcal{K}^{\rm ul}{\bH}_\mathcal{K}^{{\rm ul}\,H} + {\bI}_{N_r}).
\end{align}
%where ${\rm diag}(\rho{\bf H}_\mathcal{K}{\bf H}_\mathcal{K}^H + \mathbf{I})$ represents the diagonal matrix of the diagonal entries of $\rho{\bf H}_\mathcal{K}{\bf H}_\mathcal{K}^H + {\bf I}$.
In the following sections, we explore antenna selection for the considered DL and UL systems.

%%%%%%%%%%%%%%%%%%%%%%%%%%%%%%%%%%%%%%%%%%%%%%%%
\section{Downlink Transmit Antenna Selection}
\label{sec:DL_txantenna}
%%%%%%%%%%%%%%%%%%%%%%%%%%%%%%%%%%%%%%%%%%%%%%%%

In this section, we first show that a transmit antenna selection problem with ZF precoding for narrowband channels in low-resolution ADC systems is equivalent to that in high-resolution ADC systems.
% since it can be shown that the cost function of transmit antenna selection does not involves the coarse quantization effect.
The resulting achievable rate, however, involves the quantization error and thus, we analyze the sum rate in low-resolution ADC systems.
% with transmit antenna selection.

%%%%%%%%%%%%%%%%%%%%%%%%%%%%%%%%%%
\subsection{Sum Rate Maximization Problem}
\label{subsec:problem_dl}
%%%%%%%%%%%%%%%%%%%%%%%%%%%%%%%%%%

From the quantized signals $\by^{\rm dl}$ in \eqref{eq:y_dl} and quantization covariance matrix $\bR_{\bq^{\rm dl}\bq^{\rm dl}}$ in \eqref{eq:Rqq_dl}, the DL achievable rate for user $i$ with selected transmit antennas in $\cT$ becomes
\begin{align}
	\label{eq:rate_dl}
	\gamma_i^{\rm dl}(\cT)= \log_2\left(1 +\frac{\alpha_b^2 p_i}{\alpha_b^2 + \alpha_b(1-\alpha_b)(1+p_i)}\right).
\end{align}
We consider an equal power distribution. 
Assuming equal power distribution, $p_i = p_\mathcal{T}$, $\forall i$, and ZF precoding with maximum transmit power from \eqref{eq:txpower_const_dl}, we have 
\begin{align}
	\label{eq:txpower_dl}
	p_\mathcal{T} = \frac{P}{{\rm tr}\big(\bW_{\rm BB}^H(\cT)\bW_{\rm BB}(\cT)\big)} =  \frac{P}{{\rm tr}\big((\bH_\mathcal{T}\bH_\mathcal{T}^H)^{-1}\big)}.
\end{align}
Using \eqref{eq:rate_dl} and \eqref{eq:txpower_dl}, the DL achievable sum rate reduces to
\begin{align}
	\label{eq:sumrate_dl}
	\cR^{\rm dl}(\cT) = N_{\rm MS}\log_2\left(1+\frac{\alpha_b p_\mathcal{T}}{1+(1-\alpha_b)p_\mathcal{T}}\right). 
\end{align}  
We now formulate the transmit antennas selection problem by adopting the achievable sum rate in \eqref{eq:sumrate_dl} as an objective function.
Let $\cS = \{1,2,\dots, N_{\rm BS}\}$ be the index set of the BS antennas.
Then, the transmit antenna selection problem for maximum sum rate is formulated as
{\color{black}
\begin{align}
%	\label{eq:DL_problem}
	\nonumber
	\cP 1:\quad\quad\cT^\star = \argmax_{\mathcal{T} \subseteq \mathcal{S}:N_{\rm MS} \leq|\mathcal{T}|\leq N_t } \cR^{\rm dl}(\cT).
\end{align}
where $N_t$ is the given maximal number of transmit antennas that can be selected.}
%Note that $\cP1$ is equivalent to selecting $|\cT|$ transmit antennas that maximize $p_\mathcal{T}$ in \eqref{eq:txpower_dl}.
\begin{remark}
	The transmit antenna selection problem $\cP1$ with ZF precoding and equal power allocation for narrowband channels is equivalent to that in high-resolution ADC systems.
\end{remark}
Accordingly, we show that any state-of-the-art transmit antenna selection methods for multiuser communications with the ZF precoding \cite{khademi2014convex, lin2012performance} can be applicable  in low-resolution ADC systems.
The achievable rate $\cR^{\rm dl}(\cT)$, however, includes the quantization effect as a noise that is proportional to the transmit power, which differs from perfect quantization systems.  
In this regards, we provide theoretical analysis for the transmit antenna selection problem to characterize the sum rate and draw intuitions for the low-resolution ADC regime in the following subsection.
%Maximizing \eqref{eq:sumrate_dl} is equivalent to maximizing $p_\mathcal{T}$, which is equivalent to the cost function for transmit antenna selection with high-resolution ADCs.
%Therefore, in the following subsections, we focus on analyzing the sum rate of transmit antenna selection %, and provide an algorithm that is applicable for both the low-resolution ADC and high-resolution ADC systems.

%%%%%%%%%%%%%%%%%%%%%%%%%%%%
\subsection{Sum Rate Analysis of Transmit Antenna Selection}
\label{subsec:DL_analysis}
%%%%%%%%%%%%%%%%%%%%%%%%%%%%

Here, we first derive a property of the sum rate in the considered low-resolution ADC system with respect to the number of selected antennas.
%Here, we extend the finding to low-resolution ADC systems by generalizing the result in terms of quantization precision.
To this end, we introduce Lemma~\ref{lem:trace}.
\begin{lemma}
	\label{lem:trace}
	For any matrix $\bH \in \bbC^{m\times n}$ with ${\rm rank}(\bH) = m$, the following inequality holds:
	\begin{align}
		\nonumber
		{\rm tr}\left(\bQ\tilde{\bH}(\bI_\ell - \tilde{\bH}^H\bQ\tilde{\bH})^{-1}\tilde{\bH}^H\bQ\right) >0 
	\end{align}
	where $\bQ = (\tau\bI_m + \bH\bH^H)^{-1}$ with $\tau \geq 0$ and $\tilde{\bH}$ is a $m \times \ell $ sub-matrix of $\bH$ which consists of the columns of $\bH$ for $1\leq \ell \leq (n-m)$.
\end{lemma}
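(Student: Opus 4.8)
The plan is to read the claim as the trace of a positive semidefinite quadratic form, so that the real work is to show that the inner inverse $(\bI_\ell-\tilde{\bH}^H\bQ\tilde{\bH})^{-1}$ exists and is positive definite; strict positivity of the trace then follows almost for free. First I record that, since ${\rm rank}(\bH)=m$, the $m\times m$ Gram matrix $\bH\bH^H$ is positive definite, hence $\tau\bI_m+\bH\bH^H\succ 0$ for every $\tau\geq 0$ and $\bQ=(\tau\bI_m+\bH\bH^H)^{-1}$ is Hermitian positive definite. Grouping the columns of $\bH$, I write $\bH\bH^H=\tilde{\bH}\tilde{\bH}^H+\bH_{\rm r}\bH_{\rm r}^H$, where $\bH_{\rm r}$ collects the $n-\ell\geq m$ columns not contained in $\tilde{\bH}$.

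The crux is to prove $\bI_\ell-\tilde{\bH}^H\bQ\tilde{\bH}\succ 0$, which is not obvious because $\tilde{\bH}^H\bQ\tilde{\bH}$ is $\ell\times\ell$ with possibly $\ell>m$ and is built from the rank-deficient factor $\tilde{\bH}\tilde{\bH}^H$. I would resolve this with the push-through (Woodbury) identity. Setting $\bD=\tau\bI_m+\bH_{\rm r}\bH_{\rm r}^H$, so that $\bD+\tilde{\bH}\tilde{\bH}^H=\tau\bI_m+\bH\bH^H=\bQ^{-1}$, the identity gives
\begin{align}
\nonumber
\tilde{\bH}^H\bQ\tilde{\bH}=\tilde{\bH}^H\bigl(\bD+\tilde{\bH}\tilde{\bH}^H\bigr)^{-1}\tilde{\bH}=\bI_\ell-\bigl(\bI_\ell+\tilde{\bH}^H\bD^{-1}\tilde{\bH}\bigr)^{-1}.
\end{align}
Hence $\bN:=(\bI_\ell-\tilde{\bH}^H\bQ\tilde{\bH})^{-1}=\bI_\ell+\tilde{\bH}^H\bD^{-1}\tilde{\bH}\succeq\bI_\ell$, which at once shows the inverse in the statement is well defined and positive definite. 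The one genuinely delicate point, and the step I expect to be the main obstacle, is the invertibility of $\bD$ needed to apply the identity: for $\tau>0$ it is automatic, and for $\tau=0$ it holds because $\bH_{\rm r}$ has at least $m$ columns and, for the generic full-rank channels of interest, full row rank, so that $\bH_{\rm r}\bH_{\rm r}^H\succ 0$.

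Finally, using that $\bQ$ is Hermitian and letting $\bN^{1/2}$ denote the positive-definite square root of $\bN$, I would write the target as a squared Frobenius norm,
\begin{align}
\nonumber
{\rm tr}\!\left(\bQ\tilde{\bH}\bN\tilde{\bH}^H\bQ\right)={\rm tr}\!\left(\bigl(\bQ\tilde{\bH}\bN^{1/2}\bigr)\bigl(\bQ\tilde{\bH}\bN^{1/2}\bigr)^H\right)=\bigl\|\bQ\tilde{\bH}\bN^{1/2}\bigr\|_F^2\geq 0.
\end{align}
Because $\bQ$ and $\bN^{1/2}$ are invertible and $\tilde{\bH}\neq 0$ (the selected antennas carry nonzero channel columns), the matrix $\bQ\tilde{\bH}\bN^{1/2}$ cannot vanish, so its Frobenius norm is strictly positive and the trace is $>0$, as claimed.
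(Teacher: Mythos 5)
The paper does not actually prove this lemma---its ``proof'' is a one-line citation to Lemma~2 of \cite{lin2012performance}---so your argument cannot coincide with it; what you have written is a self-contained proof, and its core is correct. Your push-through/Woodbury step is valid: writing $\bQ^{-1}=\bD+\tilde{\bH}\tilde{\bH}^H$ with $\bD=\tau\bI_m+\bH_{\rm r}\bH_{\rm r}^H$ gives $\bI_\ell-\tilde{\bH}^H\bQ\tilde{\bH}=(\bI_\ell+\tilde{\bH}^H\bD^{-1}\tilde{\bH})^{-1}\succ 0$ whenever $\bD$ is invertible, and then ${\rm tr}(\bQ\tilde{\bH}\bN\tilde{\bH}^H\bQ)=\|\bQ\tilde{\bH}\bN^{1/2}\|_F^2>0$ provided $\tilde{\bH}\neq{\bf 0}$. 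This buys something the citation does not: it makes the paper self-contained and it exposes exactly which hypotheses carry the load, namely invertibility of $\bD$ and nontriviality of the selected columns.

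Both caveats you flag are genuine, and it is worth recording that they are limitations of the lemma \emph{as stated}, not merely of your proof. For $\tau=0$, ${\rm rank}(\bH)=m$ does not force $\bH_{\rm r}\bH_{\rm r}^H\succ 0$ (take $m=2$, $n=4$, $\ell=1$, with the three unselected columns mutually parallel); in that case one can check, by viewing $\tilde{\bH}^H(\bH\bH^H)^{-1}\tilde{\bH}$ as the leading $\ell\times\ell$ block of the orthogonal projection $\bH^H(\bH\bH^H)^{-1}\bH$, that $\bI_\ell-\tilde{\bH}^H\bQ\tilde{\bH}$ is then \emph{singular}: any $\bv\neq{\bf 0}$ with $\bH_{\rm r}^H\bv={\bf 0}$ yields a fixed vector $\tilde{\bH}^H\bv\neq{\bf 0}$ of that block. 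So without your extra rank assumption the inner inverse in the statement does not even exist, and no proof could avoid it. Likewise, if a selected column is zero the trace equals zero, not a positive number. In the paper's actual use of the lemma (Theorem~\ref{thm:monotonic}, with $\tau=0$ and generic Rayleigh channels) both conditions hold almost surely, so your proof, with these hypotheses made explicit, fully serves the purpose the citation serves.
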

\begin{proof}
	See Lemma 2 in \cite{lin2012performance}.
\end{proof}
\begin{theorem}
	\label{thm:monotonic}
%	Let the number of transmit antennas and receive antennas be fixed and equal transmit power distribution with a fixed transmit power constraint be used. 
	The maximum sum rate of MSs with low-resolution ADCs in \eqref{eq:sumrate_dl} is monotonically increasing with the number of selected transmit antennas in ZF precoding DL systems \eqref{eq:r_dl_zf}:
	\begin{align}
		\nonumber
		\cR^{\rm dl}(\cT_{\rm opt1}) < \cR^{\rm dl}(\cT_{\rm opt2})
	\end{align}
	where $\cT_{\rm opt1}$ and $\cT_{\rm opt2}$ are the optimal antenna subsets with $|\cT_{\rm opt1}| < |\cT_{\rm opt2}|$.
\end{theorem}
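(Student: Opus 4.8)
The plan is to reduce the set-level monotonicity claim to a scalar monotonicity in the effective transmit power $p_\mathcal{T}$, and then to control $p_\mathcal{T}$ through the quantity ${\rm tr}((\bH_\mathcal{T}\bH_\mathcal{T}^H)^{-1})$ using Lemma~\ref{lem:trace}. First I would observe that the per-user rate in \eqref{eq:sumrate_dl} is $\log_2(1+f(p_\mathcal{T}))$ with $f(p)=\alpha_b p/(1+(1-\alpha_b)p)$, and since $f'(p)=\alpha_b/(1+(1-\alpha_b)p)^2>0$ for $\alpha_b\in(0,1)$, the sum rate $\cR^{\rm dl}(\cT)$ is strictly increasing in $p_\mathcal{T}$. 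By \eqref{eq:txpower_dl}, $p_\mathcal{T}=P/{\rm tr}((\bH_\mathcal{T}\bH_\mathcal{T}^H)^{-1})$, so it suffices to prove that enlarging the antenna subset strictly decreases this trace.

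Next I would establish the nested statement: if $\mathcal{T}_1\subset\mathcal{T}_2$, then ${\rm tr}((\bH_{\mathcal{T}_2}\bH_{\mathcal{T}_2}^H)^{-1})<{\rm tr}((\bH_{\mathcal{T}_1}\bH_{\mathcal{T}_1}^H)^{-1})$. Writing $\mathcal{T}_2=\mathcal{T}_1\cup\mathcal{T}_\Delta$ and letting $\tilde{\bH}$ collect the $\ell=|\mathcal{T}_\Delta|$ extra columns, we have $\bH_{\mathcal{T}_2}\bH_{\mathcal{T}_2}^H=\bH_{\mathcal{T}_1}\bH_{\mathcal{T}_1}^H+\tilde{\bH}\tilde{\bH}^H$. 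Applying the Woodbury identity with $\bQ=(\bH_{\mathcal{T}_2}\bH_{\mathcal{T}_2}^H)^{-1}$ (the $\tau=0$ instance of Lemma~\ref{lem:trace}, taking its full matrix to be $\bH_{\mathcal{T}_2}$) gives $(\bH_{\mathcal{T}_1}\bH_{\mathcal{T}_1}^H)^{-1}=\bQ+\bQ\tilde{\bH}(\bI_\ell-\tilde{\bH}^H\bQ\tilde{\bH})^{-1}\tilde{\bH}^H\bQ$. Taking traces and using Lemma~\ref{lem:trace} to sign the correction term yields ${\rm tr}((\bH_{\mathcal{T}_1}\bH_{\mathcal{T}_1}^H)^{-1})>{\rm tr}(\bQ)={\rm tr}((\bH_{\mathcal{T}_2}\bH_{\mathcal{T}_2}^H)^{-1})$. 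Combined with the first step, this gives $\cR^{\rm dl}(\mathcal{T}_1)<\cR^{\rm dl}(\mathcal{T}_2)$ whenever $\mathcal{T}_1\subset\mathcal{T}_2$. Note that the feasibility constraint $|\mathcal{T}_1|\geq N_{\rm MS}$ is exactly the lemma's requirement $\ell\leq n-m$, so no induction over single additions is needed.

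Finally I would lift the nested result to the optimal subsets, which is the only genuinely non-mechanical step, since $\cT_{\rm opt1}$ and $\cT_{\rm opt2}$ need not be nested. Because $|\cT_{\rm opt1}|<|\cT_{\rm opt2}|$, I would pick any index $j\in\cS\setminus\cT_{\rm opt1}$ and set $\cT_{\rm opt1}^+=\cT_{\rm opt1}\cup\{j\}$; this is feasible for the larger-budget problem that defines $\cT_{\rm opt2}$, since $N_{\rm MS}\leq|\cT_{\rm opt1}^+|=|\cT_{\rm opt1}|+1\leq|\cT_{\rm opt2}|$. The nested result gives $\cR^{\rm dl}(\cT_{\rm opt1})<\cR^{\rm dl}(\cT_{\rm opt1}^+)$, while optimality of $\cT_{\rm opt2}$ over all feasible subsets gives $\cR^{\rm dl}(\cT_{\rm opt1}^+)\leq\cR^{\rm dl}(\cT_{\rm opt2})$; chaining the two inequalities proves $\cR^{\rm dl}(\cT_{\rm opt1})<\cR^{\rm dl}(\cT_{\rm opt2})$.

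The main obstacle I anticipate is ensuring the inverses are well defined throughout, i.e. that every selected sub-channel $\bH_\mathcal{T}$ retains full row rank $N_{\rm MS}$, so that $(\bH_\mathcal{T}\bH_\mathcal{T}^H)^{-1}$ exists and the hypotheses of Lemma~\ref{lem:trace} (which requires ${\rm rank}(\bH)=m$) hold for both $\mathcal{T}_1$ and $\mathcal{T}_2$. Here I would argue that the feasibility constraint $|\mathcal{T}|\geq N_{\rm MS}$ together with the genericity of the channel realizations guarantees that any selected set of at least $N_{\rm MS}$ columns spans $\bbC^{N_{\rm MS}}$; once both inverses exist, the intermediate factor $\bI_\ell-\tilde{\bH}^H\bQ\tilde{\bH}$ is automatically nonsingular, as the Woodbury expansion is then an exact identity.
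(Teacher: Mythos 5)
Your proof is correct and follows essentially the same route as the paper's: both reduce the comparison to ${\rm tr}\big((\bH_\mathcal{T}\bH_\mathcal{T}^H)^{-1}\big)$ via the matrix-inversion (Woodbury) identity applied to $\bH_{\mathcal{T}_2}\bH_{\mathcal{T}_2}^H = \bH_{\mathcal{T}_1}\bH_{\mathcal{T}_1}^H + \tilde{\bH}\tilde{\bH}^H$, sign the correction term with Lemma~\ref{lem:trace} at $\tau=0$, and lift from nested subsets to the optimal ones by the same feasibility-plus-optimality chaining. Your only departure is organizational: you exploit strict monotonicity of $p \mapsto \alpha_b p/(1+(1-\alpha_b)p)$ to avoid writing out the explicit rate-difference expression, which the paper computes in full anyway because that rate-loss formula is reused in its subsequent corollaries.
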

\begin{proof}
	Let $\cT_1$ and $\cT_2$ be antenna subsets with $\cT_1 \subset	\cT_2 \subseteq \cS$, and $\bar\cT$  be $\bar\cT = \cT_2 - \cT_1$.
	The average sum rate difference between the sum rates with the two antenna subsets, $\cT_1$ and $\cT_2$, is
	\begin{align}
		\nonumber
		\frac{\cR^{\rm dl}_D(\bar\cT)}{N_{\rm MS}} &= \frac{\cR^{\rm dl}(\cT_2) - \cR^{\rm dl}(\cT_1)}{N_{\rm MS}} \\
		\label{eq:monotonic_proof1}
		& =  \log_2\left(1+\frac{\alpha_b p_{\mathcal{T}_2}}{1+(1-\alpha_b)p_{\mathcal{T}_2}}\right)- \log_2\left(1+\frac{\alpha_b p_{\mathcal{T}_1}}{1+(1-\alpha_b)p_{\rm \mathcal{T}_1}}\right).
	\end{align}
	Using $p_{\mathcal{T}_i}= {P}/{{\rm tr}((\bH_{\mathcal{T}_i}^{\rm dl}\bH_{\mathcal{T}_i}^{{\rm dl}\, H})^{-1})}$ for $i = 1, 2$, we rewrite \eqref{eq:monotonic_proof1} as 
	\begin{align}
		\nonumber
		\frac{\cR_D^{\rm dl}(\bar\cT)}{N_{\rm MS}} 
%		&= \log_2\left(1+\frac{\frac{\alpha_b P}{{\rm tr}((\bH_{\mathcal{T}_2}^{\rm dl}\bH_{\mathcal{T}_2}^{{\rm dl}\, H})^{-1})}}{1+\frac{(1-\alpha_b)P}{{\rm tr}((\bH_{\mathcal{T}_2}^{\rm dl}\bH_{\mathcal{T}_2}^{{\rm dl}\, H})^{-1})}}\right) - \log_2\left(1+\frac{\frac{\alpha_b P}{{\rm tr}((\bH_{\mathcal{T}_1}^{\rm dl}\bH_{\mathcal{T}_1}^{{\rm dl}\, H})^{-1})}}{1+\frac{(1-\alpha_b)P}{{\rm tr}((\bH_{\mathcal{T}_1}^{\rm dl}\bH_{\mathcal{T}_1}^{{\rm dl}\, H})^{-1})} }\right)\\
%		\nonumber
		= \log_2\left(\frac{({\rm tr}((\bH^{\rm dl}_{\mathcal{T}_2}\bH_{\mathcal{T}_2}^{{\rm dl}\, H})^{-1})+P)({\rm tr}((\bH_{\mathcal{T}_1}^{\rm dl}\bH_{\mathcal{T}_1}^{{\rm dl}\, H})^{-1})+(1-\alpha_b)P)}{({\rm tr}((\bH_{\mathcal{T}_2}^{\rm dl}\bH_{\mathcal{T}_2}^{{\rm dl}\, H})^{-1})+(1-\alpha_b)P)({\rm tr}((\bH_{\mathcal{T}_1}^{\rm dl}\bH_{\mathcal{T}_1}^{{\rm dl}\, H})^{-1})+P)}\right).
	\end{align}	
	Let $\bQ= (\bH_{\mathcal{T}_2}^{\rm dl}\bH_{\mathcal{T}_2}^{{\rm dl}\, H})^{-1}$ and ${\pmb \Psi}_{\bar{\mathcal{T}}} = \bQ\bH_{\bar{\mathcal{T}}}^{\rm dl}(\bI_{|\bar{\mathcal{T}}|} - \bH_{\bar{\mathcal{T}}}^{{\rm dl}\, H}\bQ\bH_{\bar{\mathcal{T}}}^{\rm dl})^{-1}\bH_{\bar{\mathcal{T}}}^{{\rm dl}\, H}\bQ$. 
	Then, leveraging the matrix inversion lemma, the rate difference  $\cR^{\rm dl}_D(\bar\cT)$, which we also call as the rate loss, becomes
	\begin{align}
		\nonumber
		&\cR^{\rm dl}_D(\bar\cT)
%		&= N_u \log_2\left(\frac{\frac{{\rm tr}(\bQ)+P}{{\rm tr}(\bQ)+(1-\alpha)P}}{\frac{{\rm tr}(\bQ)+{\rm tr}({\pmb \Psi}_{\bar{\mathcal{K}}})+P}{{\rm tr}(\bQ)+{\rm tr}({\pmb \Psi}_{\bar{\mathcal{K}}})+(1-\alpha)P}}\right) \\
=\! N_{\rm MS} \log_2\!\left(\frac{({\rm tr}(\bQ)+P)({\rm tr}(\bQ)+{\rm tr}({\pmb \Psi}_{\bar{\mathcal{T}}})+(1-\alpha_b)P)} {({\rm tr}(\bQ)+(1-\alpha_b)P)({\rm tr}(\bQ)+{\rm tr}({\pmb \Psi}_{\bar{\mathcal{T}}})+P)}\right) \\
		\label{eq:monotonic_proof2}
%		& = N_u \log_2\left(1 + \frac{\alpha P {\rm tr}({\pmb \Psi}_{\bar {\mathcal{K}}})}{{\rm tr}(\bQ)^2+{\rm tr}(\bQ){\rm tr}({\pmb \Psi}_{\bar {\mathcal{K}}}) + {\rm tr}(\bQ)P + (1-\alpha)\big(P^2 + {\rm tr}({\pmb \Psi}_{\bar {\mathcal{K}}})P+{\rm tr}(\bQ)P\big)}\right) 
		& = \!N_{\rm MS} \log_2\!\left(\!1 + \frac{\alpha_b {\rm tr}({\pmb \Psi}_{\bar {\mathcal{T}}})P}{{\rm tr}(\bQ)^2+\big({\rm tr}({\pmb \Psi}_{\bar {\mathcal{T}}}) + P\big){\rm tr}(\bQ) + (1-\alpha_b)\big(P^2 + P({\rm tr}({\pmb \Psi}_{\bar {\mathcal{T}}})+{\rm tr}(\bQ))\big)}\!\right)\\
		\label{eq:monotonic_proof3}
		& \stackrel{(a)}>0
	\end{align}
	where $(a)$ holds from the following reasons: we have ${\rm tr}(\bQ) > 0$, and from Lemma \ref{lem:trace} with $\tau = 0$, we have ${\rm tr}({\pmb \Psi}_{\bar{\mathcal{T}}}) > 0$ for any channel matrix $
	\bH_{\mathcal{T}_2}^{\rm dl}$ with ${\rm rank}(\bH_{\mathcal{T}_2}^{\rm dl}) = N_{\rm MS}
	$ and its $N_{\rm MS} \times |\bar{\mathcal{T}}|$ sub-matrix $\bH_{\bar{\mathcal{T}}}^{\rm dl}$ with $1\leq |\bar{\mathcal{T}}| \leq  (|\mathcal{T}_2| - N_{\rm MS})$.
	In addition, $\alpha_b$ is always less than one ($\alpha_b <1$) since it is the quantization gain defined as $\alpha_b = 1 -{\mathbb{E}[|{r}_i - {y}_{i}|^2]}/{\mathbb{E}[|{r}_i|^2]}$. 
%	Therefore, we have $\cR_L^{\rm dl}(\bar\cT)>0$. 
	
	Now, let $\cT_2$ be the antenna subset that satisfies $\cT_{\rm opt1} \subset \cT_2$ and $|\cT_{\rm opt1}| < |\cT_2| = |\cT_{\rm opt2}|$. 
	Then, we obtain the following inequalities: 
	\begin{align}
		\nonumber
		\cR^{\rm dl}(\cT_{\rm opt1}) < \cR^{\rm dl}(\cT_2) \leq \cR^{\rm dl}(\cT_{\rm opt2})
	\end{align}
	where $\cR^{\rm dl}(\cT_{\rm opt1}) < \cR^{\rm dl}(\cT_2)$ follows from leveraging $\cR^{\rm dl}_D(\bar\cT) > 0 $ in \eqref{eq:monotonic_proof2} and $\cR^{\rm dl}(\cT_2) \leq \cR^{\rm dl}(\cT_{\rm opt2})$ comes from the optimality definition of $\cT_{\rm opt2}$. 
	This completes the proof.
\end{proof}
Although adding more transmit antennas is not guaranteed to increase the sum rate \cite{vaze2012submodularity} in general because of a transmit power constraint, Theorem \ref{thm:monotonic} shows that the maximum sum rate increases with the number of selected transmit antennas $N_t$ even with the coarse quantization at the user mobile.
This result was also shown to be true for high-resolution ADC systems  \cite{lin2012performance}.
Now we will show that the sum rate loss $\cR^{\rm dl}_{D}(\bar \cT)$ has a different property compared to the high-resolution ADC systems where the loss monotonically increases with $P$ and converges to an upper bound \cite{lin2012performance}. 
Having $\cT_2 = \cS$, $\cR^{\rm dl}_D(\bar{\mathcal{T}})$ can be considered as the sum rate loss due to antennas selection and minimized to zero by increasing the transmit power constraint $P$.
% as shown in Corollary~\ref{cor:converge}.
\begin{corollary}
	\label{cor:converge}
	Let $\cT_1 \subset \cT_2 \subseteq \cS$, then the achievable sum rate loss $\cR^{\rm dl}_D(\bar{\cT}) = \cR^{\rm dl}(\cT_2) - \cR^{\rm dl}(\cT_1)$ goes to zero under coarse quantization as the transmit power constraint $P$ increases
	\begin{align}
		\nonumber
		\cR^{\rm dl}_D(\bar{\cT}) \to 0 \quad \text{as } P \to \infty.
	\end{align}
	In addition, the achievable rate converges to $\cR^{\rm dl}(\cT) \to N_{\rm MS} \log_2\left(1+\frac{\alpha_b}{1-\alpha_b}\right)$ as $P \to \infty$.
%	\begin{align}
%		\label{eq:Rconverge}
%		\cR^{\rm dl}(\cT) \to N_{\rm MS} \log_2\left(1+\frac{\alpha_b}{1-\alpha_b}\right) \quad \text{as } P \to \infty.
%	\end{align}
\end{corollary}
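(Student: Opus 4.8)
The plan is to establish both limits directly from the closed-form expressions already derived, treating $P$ as the only varying quantity. The key structural observation is that ${\rm tr}(\bQ)$ and ${\rm tr}({\pmb \Psi}_{\bar{\mathcal{T}}})$ depend only on the fixed channel submatrices and are therefore constants with respect to $P$; moreover both are strictly positive by the same argument used in the proof of Theorem~\ref{thm:monotonic}.

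For the vanishing of the rate loss, I would start from the single-logarithm form of $\cR^{\rm dl}_D(\bar{\cT})$ in \eqref{eq:monotonic_proof2} and perform a degree count in $P$ on the rational argument. The numerator $\alpha_b {\rm tr}({\pmb \Psi}_{\bar{\mathcal{T}}})P$ is linear in $P$, whereas the denominator contains the term $(1-\alpha_b)P^2$ and is therefore quadratic in $P$. Dividing numerator and denominator by $P^2$ and letting $P \to \infty$, every term in the denominator except $(1-\alpha_b)P^2$ becomes negligible, so the fraction behaves like $\alpha_b {\rm tr}({\pmb \Psi}_{\bar{\mathcal{T}}})/[(1-\alpha_b)P] \to 0$. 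Continuity of the logarithm then gives $\cR^{\rm dl}_D(\bar{\cT}) \to N_{\rm MS}\log_2(1) = 0$.

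For the limiting rate itself, I would instead return to \eqref{eq:sumrate_dl} together with \eqref{eq:txpower_dl}. Since ${\rm tr}((\bH_\mathcal{T}\bH_\mathcal{T}^H)^{-1})$ is a fixed positive constant, $p_\mathcal{T} \to \infty$ as $P \to \infty$; dividing numerator and denominator of the SNR-like ratio by $p_\mathcal{T}$ gives $\alpha_b p_\mathcal{T}/[1+(1-\alpha_b)p_\mathcal{T}] \to \alpha_b/(1-\alpha_b)$, and hence $\cR^{\rm dl}(\cT) \to N_{\rm MS}\log_2(1 + \alpha_b/(1-\alpha_b))$.

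The computation itself is routine; the only point requiring care, and the conceptual heart of the result, is that the $(1-\alpha_b)P^2$ term in the denominator of \eqref{eq:monotonic_proof2} is exactly what forces the loss to zero, and this term is present precisely because $\alpha_b < 1$ under coarse quantization. In the perfect-quantization limit $\alpha_b \to 1$ the quadratic term disappears, the denominator becomes linear in $P$, and the fraction converges to a strictly positive constant instead, recovering the saturating rate-loss behavior reported for high-resolution systems. I would flag this contrast explicitly, since it is the whole point of the corollary.
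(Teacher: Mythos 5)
Your proposal is correct and follows essentially the same route as the paper: the paper's own proof is a one-line observation that the loss in \eqref{eq:monotonic_proof2} vanishes and the rate in \eqref{eq:sumrate_dl} saturates as $P \to \infty$, and your degree-count in $P$ (linear numerator versus $(1-\alpha_b)P^2$ in the denominator) together with the limit $p_\mathcal{T} \to \infty$ is precisely the computation the paper leaves implicit. Your closing remark on the $\alpha_b \to 1$ contrast is also consistent with the paper's discussion following Corollary~\ref{cor:RLmax}.
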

\begin{proof}
 	If $P \to \infty$, the achievable sum rate loss in \eqref{eq:monotonic_proof2} goes to zero and the sum rate in \eqref{eq:sumrate_dl} converges to $N_{\rm MS} \log_2\left(1+\frac{\alpha_b}{1-\alpha_b}\right)$.
\end{proof}
%\begin{remark}
%	\label{rm:converge}
%	When the transmit power constraint goes infinity, the achievable sum rate converges to a limited rate due to the quantization error.
%	\begin{align}
%		R(\cK) \to N_u \log_2\left(1+\frac{\alpha}{1-\alpha}\right) \quad \text{as } P \to \infty.
%	\end{align}
%\end{remark}
%Unlike the conventional systems with infinite resolution ADCs, the convergence in Remark \ref{rm:converge} shows the limited achievable rate of the considered system due to the coarse quantization at the user mobile. 

%From Corollary \ref{cor:converge}, it can be shown that the sum rate with antenna selection converges to the same rate as the sum rate without antenna selection as the transmit power constraint increases.
Unlike the high-resolution ADC system, this result suggests that antenna selection can have the marginal rate loss from the system using the entire antennas by increasing $P$.
\begin{corollary}
	\label{cor:RLmax}
	Let $\cT_1 \subset \cT_2 \subseteq \cS$. 
	Then, the transmit power constraint that leads to the maximum sum rate loss from not using antennas in $\bar{\cT} = \cT_2 - \cT_1$ is 
	\begin{align}
		\label{eq:Pmax}
		P_D^{\rm max} = \sqrt{\frac{ {\rm tr}(\bQ){\rm tr}(\bK)}{1-\alpha_b}}
%		P_L^{\rm max} = \sqrt{\frac{ {\rm tr}\big((\bH_{\mathcal{K}_2}\bH_{\mathcal{K}_2}^H)^{-1}\big){\rm tr}\big((\bH_{\mathcal{K}_1}\bH_{\mathcal{K}_1}^H)^{-1}\big)}{1-\alpha}}
	\end{align}
	where $\bQ = (\bH^{\rm dl}_{\mathcal{T}_2}\bH^{{\rm dl}\,H}_{\mathcal{T}_2})^{-1}$ and $\bK = (\bH^{\rm dl}_{\mathcal{T}_1}\bH^{{\rm dl}\,H}_{\mathcal{T}_1})^{-1}$, and the maximum sum rate loss is
	\begin{align}
		\label{eq:RLmax}
		&\cR_D^{\rm dl, max}(\bar{\mathcal{T}}) =  N_{\rm MS}\log_2\left(1+\frac{\alpha_b \big({\rm tr}(\bK)-{\rm tr}(\bQ)\big)}{{\rm tr}(\bQ)+(1-\alpha_b){\rm tr}(\bK)+ 2\sqrt{(1-\alpha_b){\rm tr}(\bQ){\rm tr}(\bK)}}\right).	
%		& N_u\log_2\left(1+\frac{\alpha \big({\rm tr}\big((\bH_{\mathcal{K}_2}\bH_{\mathcal{K}_2}^H)^{-1}\big)-{\rm tr}((\bH_{\mathcal{K}_1}\bH_{\mathcal{K}_1}^H)^{-1})\big)}{{\rm tr}\big((\bH_{\mathcal{K}_2}\bH_{\mathcal{K}_2}^H)^{-1}\big)+(1-\alpha){\rm tr}((\bH_{\mathcal{K}_1}\bH_{\mathcal{K}_1}^H)^{-1})\big) + 2\sqrt{(1-\alpha){\rm tr}\big((\bH_{\mathcal{K}_2}\bH_{\mathcal{K}_2}^H)^{-1}\big){\rm tr}((\bH_{\mathcal{K}_1}\bH_{\mathcal{K}_1}^H)^{-1})\big)}}\right)	
	\end{align}
\end{corollary}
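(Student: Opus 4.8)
The plan is to exploit that $\log_2(1+x)$ is strictly increasing, so it suffices to maximize over $P\ge 0$ the fractional term appearing inside the logarithm in \eqref{eq:monotonic_proof2}. The first step is to recall the identity produced by the matrix inversion lemma in the proof of Theorem~\ref{thm:monotonic}: there ${\rm tr}(\bK)={\rm tr}((\bH^{\rm dl}_{\mathcal{T}_1}\bH^{{\rm dl}\,H}_{\mathcal{T}_1})^{-1})$ was rewritten as ${\rm tr}(\bQ)+{\rm tr}({\pmb \Psi}_{\bar{\mathcal{T}}})$, so that ${\rm tr}({\pmb \Psi}_{\bar{\mathcal{T}}})={\rm tr}(\bK)-{\rm tr}(\bQ)$. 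Writing $a={\rm tr}(\bQ)$ and $c={\rm tr}({\pmb \Psi}_{\bar{\mathcal{T}}})={\rm tr}(\bK)-{\rm tr}(\bQ)$, the quantity to be maximized is
\begin{align}
	\nonumber
	f(P)=\frac{\alpha_b c\,P}{a^2+(c+P)a+(1-\alpha_b)\bigl(P^2+P(c+a)\bigr)}.
\end{align}

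Next I would divide numerator and denominator by $P$ to isolate the $P$-dependence in a form amenable to the AM--GM inequality:
\begin{align}
	\nonumber
	f(P)=\frac{\alpha_b c}{(1-\alpha_b)P+\dfrac{a(a+c)}{P}+\bigl[a(2-\alpha_b)+(1-\alpha_b)c\bigr]}.
\end{align}
The bracketed term is independent of $P$, so maximizing $f$ reduces to minimizing $(1-\alpha_b)P+a(a+c)/P$ over $P>0$. By AM--GM this sum is at least $2\sqrt{(1-\alpha_b)a(a+c)}$, with equality exactly when $(1-\alpha_b)P=a(a+c)/P$, i.e. when $P=\sqrt{a(a+c)/(1-\alpha_b)}=\sqrt{{\rm tr}(\bQ){\rm tr}(\bK)/(1-\alpha_b)}$, which is precisely $P_D^{\rm max}$ in \eqref{eq:Pmax}. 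Since one summand is increasing and the other decreasing in $P$, this critical point is the unique minimizer of the denominator and hence the unique maximizer of $f$; positivity of $P_D^{\rm max}$ is immediate as $\bQ$ and $\bK$ are positive definite and $\alpha_b<1$.

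Finally I would substitute $P_D^{\rm max}$ back. At the optimum the denominator of $f$ equals $2\sqrt{(1-\alpha_b)\,{\rm tr}(\bQ){\rm tr}(\bK)}+a(2-\alpha_b)+(1-\alpha_b)c$, and the constant bracket collapses, using $c={\rm tr}(\bK)-{\rm tr}(\bQ)$, to ${\rm tr}(\bQ)+(1-\alpha_b){\rm tr}(\bK)$. Hence $f(P_D^{\rm max})=\alpha_b({\rm tr}(\bK)-{\rm tr}(\bQ))/\bigl({\rm tr}(\bQ)+(1-\alpha_b){\rm tr}(\bK)+2\sqrt{(1-\alpha_b){\rm tr}(\bQ){\rm tr}(\bK)}\bigr)$, and inserting this into $N_{\rm MS}\log_2(1+f)$ yields \eqref{eq:RLmax}. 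The routine steps are the monotone-transform reduction and the one-line AM--GM argument; I expect the main obstacle to be the algebraic bookkeeping that carries the identity ${\rm tr}({\pmb \Psi}_{\bar{\mathcal{T}}})={\rm tr}(\bK)-{\rm tr}(\bQ)$ through the simplification, in particular the collapse of $a(2-\alpha_b)+(1-\alpha_b)c$ to ${\rm tr}(\bQ)+(1-\alpha_b){\rm tr}(\bK)$, since the stated result is expressed in ${\rm tr}(\bQ)$ and ${\rm tr}(\bK)$ rather than in ${\rm tr}({\pmb \Psi}_{\bar{\mathcal{T}}})$.
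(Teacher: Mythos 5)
Your proof is correct, and it takes a genuinely different route from the paper's. The paper proves this corollary by differentiation: it computes $d\cR^{\rm dl}_D(\bar\cT)/dP$ explicitly (their \eqref{eq:dRL}), sets the numerator factor ${\rm tr}(\bQ)^2+{\rm tr}(\bQ){\rm tr}(\pmb\Psi_{\bar\cT})+(\alpha_b-1)P^2$ to zero, and then uses the same identity ${\rm tr}(\bK)={\rm tr}(\bQ)+{\rm tr}(\pmb\Psi_{\bar\cT})$ you invoke to rewrite the critical point and the resulting loss in terms of ${\rm tr}(\bQ)$ and ${\rm tr}(\bK)$. You instead exploit monotonicity of $\log_2(1+x)$, divide the denominator by $P$, and apply AM--GM to the two $P$-dependent terms. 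Your route is calculus-free and buys global optimality and uniqueness immediately from the AM--GM equality condition, whereas the paper must separately verify the sign of the derivative on either side of $P_D^{\rm max}$ (which it does in the discussion following the corollary). Conversely, the paper's derivative formula \eqref{eq:dRL} is not wasted effort: it is reused to establish that the rate loss is increasing for $P<P_D^{\rm max}$ and decreasing for $P>P_D^{\rm max}$, a structural fact your argument also yields (via the monotone behavior of the denominator about its minimizer) but does not state. One small point of hygiene: your reduction from maximizing $f$ to minimizing the denominator tacitly uses $c={\rm tr}(\pmb\Psi_{\bar\cT})>0$, which holds by Lemma~\ref{lem:trace} as recorded in the proof of Theorem~\ref{thm:monotonic}; it is worth citing that explicitly, since for $c\le 0$ the reduction would flip.
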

%	where $\bQ= (\bH_{\mathcal{K}_2}\bH_{\mathcal{K}_2})^{-1}$ and ${\pmb \Psi}_{\bar{\mathcal{K}}} = \bQ\bH_{\bar{\mathcal{K}}}(\bI - \bH_{\bar{\mathcal{K}}}^H\bQ\bH_{\bar{\mathcal{K}}})^{-1}\bH_{\bar{\mathcal{K}}}^H\bQ$. 
\begin{proof}
	Let $\bQ= (\bH^{\rm dl}_{\mathcal{T}_2}\bH_{\mathcal{T}_2}^{{\rm dl}\,H})^{-1}$ and ${\pmb \Psi}_{\bar{\mathcal{T}}} = \bQ\bH^{\rm dl}_{\bar{\mathcal{T}}}(\bI_{|\bar{\mathcal{T}}|} - \bH_{\bar{\mathcal{T}}}^{{\rm dl}\,H}\bQ\bH^{\rm dl}_{\bar{\mathcal{T}}})^{-1}\bH_{\bar{\mathcal{T}}}^{{\rm dl}\,H}\bQ$. 
	The derivative of \eqref{eq:monotonic_proof2} with respect to the transmit power constraint is derived as 
	\begin{align}
		\label{eq:dRL}
		\frac{d \cR^{\rm dl}_D(\bar{\cT})}{d P} = \frac{\alpha_b N_{\rm MS}{\rm tr}(\pmb \Psi_{\bar{\mathcal{T}}})\big({\rm tr}(\bQ)^2 + {\rm tr}(\bQ){\rm tr}({\pmb \Psi_{\bar{\mathcal{T}}}})+(\alpha_b -1)P^2\big)}{\Gamma_{\bar{\mathcal{T}}}}
	\end{align}
	where $ \Gamma_{\bar{\mathcal{T}}} = \ln2({\rm tr}(\bQ)+P)({\rm tr}(\bQ)+{\rm tr}({\pmb \Psi_{\bar{\mathcal{T}}}})+P)({\rm tr}(\bQ)+(1-\alpha_b)P)({\rm tr}(\bQ)+{\rm tr}({\pmb \Psi_{\bar{\mathcal{T}}}})+(1-\alpha_b)P)$.
	Since $0<\alpha_b <1$ and ${\rm tr}(\pmb \Psi_{\bar{\mathcal{T}}}) >0$,
%	, we have $ \Gamma_{\bar{\mathcal{T}}} >0$. 
	 by setting \eqref{eq:dRL} to be zero, we derive $P^{\rm max}_D$ as
	\begin{align}
		\label{eq:Pmax_proof1}
		P_D^{\rm max} = \sqrt{\frac{ {\rm tr}(\bQ)^2 + {\rm tr}(\bQ){\rm tr}(\pmb \Psi_{\bar{\mathcal{T}}})}{1-\alpha_b}}.	
	\end{align}
	Using ${\rm tr}\big((\bH^{\rm dl}_{\mathcal{T}_1}\bH_{\mathcal{T}_1}^{{\rm dl}\,H})^{-1}\big) = {\rm tr}(\bQ) + {\rm tr}(\pmb \Psi_{\bar{\mathcal{T}}})$, the maximizer $P_D^{\rm max}$ \eqref{eq:Pmax_proof1} is rewritten as \eqref{eq:Pmax}.
	With respect to the transmit power constraint $P$, the maximum sum rate loss for $\cT_1$ and $\cT_2$ can be determined by putting $P = P_D^{\rm max}$ into \eqref{eq:monotonic_proof3}, which leads to \eqref{eq:RLmax}. 
	This completes the proof.
\end{proof}
%by deriving the power constraint that maximizes the sum rate loss for a given antenna subset.
According to Corollary~\ref{cor:RLmax}, the transmit antenna selection in low-resolution ADC systems always achieves the sum rate with the rate loss less than $\cR_D^{\rm dl, max}(\bar{\mathcal{T}})$ in \eqref{eq:RLmax} for a selected antenna subset.
Note that if there is no quantization error, i.e., $\alpha_b = 1$, $P_D^{\rm max}$ goes to infinity. 
Then, the sum rate loss cannot decrease with $P$ in the perfect quantization system, which corresponds to the upper bound of the sum rate loss in \cite{lin2012performance}.
%In addition, since $R_L^{\rm max}(\bar{\mathcal{K}})$ is a decreasing function with ${\rm tr}((\bH_{\mathcal{K}_1}\bH_{\mathcal{K}_1}^H)^{-1})$, we can
Since $ \Gamma_{\bar{\mathcal{T}}}$ and ${\rm tr}(\pmb \Psi_{\bar{\mathcal{T}}})$ are positive,  ${\partial \cR^{\rm dl}_D(\bar{\cT})}/{\partial P}$ in \eqref{eq:dRL} becomes positive when $P < P_D^{\rm max}$ and negative when $P > P_D^{\rm max}$, i.e., for $P < P_D^{\rm max}$, the sum rate loss increases as $P$ increases, and for $P > P_D^{\rm max}$, the loss decreases to zero as $P$ increases. 
Therefore, \eqref{eq:Pmax} can be considered as the reference power constraint that is required to reduce the sum rate loss while achieving a reasonable sum rate. 
%we consider $P_D^{\rm max}$ in \eqref{eq:Pmax} to be the reference power constraint for the antenna selection to reduce sum rate loss with reasonable sum rate as well as the rate loss-maximizing power constraint.
\begin{corollary}
	\label{cor:RlmaxvsRLmax}
	The maximum rate loss in low-resolution ADC systems is less than that in high-resolution ADC systems, i.e., $\cR_D^{\rm dl, max}(\bar{\cT};b) \leq \cR_D^{\rm dl, max}(\bar{\cT};\infty).$
%	\begin{align}
%		\nonumber
%		\cR_D^{\rm dl, max}(\bar{\cT};b) \leq \cR_D^{\rm dl, max}(\bar{\cT};\infty).
%	\end{align}
\end{corollary}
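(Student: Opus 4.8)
The plan is to read off the finite--resolution maximum rate loss directly from \eqref{eq:RLmax} and compare it against its $b\to\infty$ limit. Abbreviate $a := {\rm tr}(\bQ)$ and $c := {\rm tr}(\bK)$, both strictly positive, and write the maximum loss as $\cR_D^{\rm dl, max}(\bar{\mathcal{T}};b) = N_{\rm MS}\log_2\!\big(1+g(\alpha_b)\big)$ with
\begin{align}
	\nonumber
	g(\alpha) = \frac{\alpha\,(c-a)}{a + (1-\alpha)c + 2\sqrt{(1-\alpha)a c}}.
\end{align}
Since $\log_2(\cdot)$ is increasing, the corollary reduces to the scalar inequality $g(\alpha_b)\le g(1)$. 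The first ingredient I would establish is that $c>a>0$: this is immediate from the identity ${\rm tr}(\bK) = {\rm tr}(\bQ) + {\rm tr}(\pmb\Psi_{\bar{\mathcal{T}}})$ used in Corollary~\ref{cor:RLmax} together with ${\rm tr}(\pmb\Psi_{\bar{\mathcal{T}}})>0$ from Lemma~\ref{lem:trace}, so that $c-a={\rm tr}(\pmb\Psi_{\bar{\mathcal{T}}})>0$.

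Next I would pin down the high--resolution benchmark. Because $\alpha_b = 1-\beta_b \to 1$ as $b\to\infty$ (the quantization error vanishes), I take the limit of \eqref{eq:RLmax}: the terms $(1-\alpha_b)c$ and $2\sqrt{(1-\alpha_b)ac}$ both vanish, leaving $g(1)=(c-a)/a$ and hence $\cR_D^{\rm dl, max}(\bar{\mathcal{T}};\infty) = N_{\rm MS}\log_2\!\big({\rm tr}(\bK)/{\rm tr}(\bQ)\big)$. I would then cross--check that this genuinely equals the perfect--quantization maximum: setting $\alpha_b=1$ in \eqref{eq:monotonic_proof2} and letting $P\to\infty$ gives exactly $N_{\rm MS}\log_2\!\big(1+{\rm tr}(\pmb\Psi_{\bar{\mathcal{T}}})/{\rm tr}(\bQ)\big)=N_{\rm MS}\log_2\!\big({\rm tr}(\bK)/{\rm tr}(\bQ)\big)$. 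This reconciliation is the one point that needs care, since \eqref{eq:RLmax} was derived at the finite maximizer $P_D^{\rm max}$, whereas at $\alpha_b=1$ we have $P_D^{\rm max}=\infty$ (as noted after Corollary~\ref{cor:RLmax}); the limit and the supremum agreeing is what makes the benchmark well defined.

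Finally I would prove $g(\alpha_b)\le g(1)=(c-a)/a$ by a direct sign check. All denominators are positive and $c-a>0$, so cross--multiplying and cancelling $c-a$ reduces the claim to $\alpha_b\, a \le a + (1-\alpha_b)c + 2\sqrt{(1-\alpha_b)a c}$, i.e.
\begin{align}
	\nonumber
	-(1-\alpha_b)\,a \;\le\; (1-\alpha_b)c + 2\sqrt{(1-\alpha_b)a c}.
\end{align}
For $0<\alpha_b<1$ the left side is nonpositive while the right side is nonnegative, so the inequality holds, with equality only in the limit $\alpha_b\to1$. Applying monotonicity of $\log_2(\cdot)$ then yields $\cR_D^{\rm dl, max}(\bar{\mathcal{T}};b)\le\cR_D^{\rm dl, max}(\bar{\mathcal{T}};\infty)$. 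The main obstacle is therefore not the algebra, which collapses to a one--line sign argument, but correctly identifying the $b\to\infty$ benchmark and verifying it coincides with the true high--resolution maximum loss; if one prefers a stronger statement, the same $g$ can be shown nondecreasing in $\alpha_b$ on $(0,1]$ by differentiation, giving monotonicity of the maximum loss in the number of bits.
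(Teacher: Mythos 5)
Your proposal is correct and follows essentially the same route as the paper: both reduce the claim to comparing the maximum-loss expression \eqref{eq:RLmax}, viewed as a function of $\alpha_b$, with its value at $\alpha_b = 1$, using ${\rm tr}(\bK) - {\rm tr}(\bQ) = {\rm tr}({\pmb \Psi}_{\bar{\mathcal{T}}}) > 0$ from Lemma~\ref{lem:trace}. The only difference is in execution: the paper asserts that \eqref{eq:RLmax} is monotonically increasing in $\alpha_b$ and identifies $\alpha_b \to 1$ with the high-resolution system, whereas you make the endpoint comparison $g(\alpha_b) \le g(1)$ explicit via a one-line sign check and, as a useful extra, verify that the $\alpha_b \to 1$ limit of \eqref{eq:RLmax} indeed coincides with the high-resolution maximum loss (the supremum of \eqref{eq:monotonic_proof2} over $P$ at $\alpha_b = 1$), a point the paper leaves implicit since $P_D^{\rm max} \to \infty$ in that limit.
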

\begin{proof}
	Since ${\rm tr}(\pmb \Psi_{\bar{\mathcal{T}}}) = {\rm tr}(\bK) - {\rm tr}(\bQ) > 0$ from Lemma \ref{lem:trace}, where $\bQ = (\bH^{\rm dl}_{\mathcal{T}_2}\bH^{{\rm dl}\,H}_{\mathcal{T}_2})^{-1}$, $\bK = (\bH^{\rm dl}_{\mathcal{T}_1}\bH^{{\rm dl}\,H}_{\mathcal{T}_1})^{-1}$, and ${\pmb \Psi}_{\bar{\mathcal{T}}} = \bQ\bH^{\rm dl}_{\bar{\mathcal{T}}}(\bI_{|\bar{\mathcal{T}}|} - \bH_{\bar{\mathcal{T}}}^{{\rm dl}\,H}\bQ\bH^{\rm dl}_{\bar{\mathcal{T}}})^{-1}\bH_{\bar{\mathcal{T}}}^{{\rm dl}\,H}\bQ$, the maximum rate loss in \eqref{eq:RLmax} is a monotonically increasing function with respect to $\alpha_b$ with $0<\alpha_b <1$. 
	When $\alpha_b  \to 1$, the considered system becomes equivalent to the high-resolution ADC system.
\end{proof}
Based on Corollary~\ref{cor:RlmaxvsRLmax}, the transmit antenna selection can be more effective in low-resolution ADC systems as the rate loss is smaller than that in high-resolution ADC systems.

\section{Uplink Receive Antenna Selection}
\label{sec:UL_rxantenna}
%%%%%%%%%%%%%%%%%%%%%%%%%%%%%%%%%%%%%%%%%%%%%%%%

%In the previous section, we found that the criterion of the DL transmit antenna selection with ZF precoding and the low-resolution ADC MSs is same as the criterion with high-resolution ADC MSs whereas the achievable rate behaves differently. 
In this section, we examine the key difference of the receive antenna selection problem at the BS with low-resolution ADCs from the conventional problem and propose a quantization-aware receive antenna selection method.

%%%%%%%%%%%%%%%%%%%%%%%%%%%%%%%%%%%%%%%%%%%%%%%%%%
\subsection{Capacity Maximization Problem}
\label{subsec:problem}
%%%%%%%%%%%%%%%%%%%%%%%%%%%%%%%%%%%%%%%%%%%%%%%%%

For the considered UL narrowband system in \eqref{eq:y_ul}, the capacity can be expressed as
\begin{align}
	\label{eq:capacity}
	\cR^{\rm ul}(\cK) = \log_2 \Big|{\bf I}_{N_r} + \rho\alpha_b^2\big( \alpha_b^2 {\bf I}_{N_r} + {\bf R}_{\bq^{\rm ul}\bq^{\rm ul}}\big)^{-1}{\bf H}^{\rm ul}_\mathcal{K}{\bf H}^{{\rm ul}\,H}_\mathcal{K} \Big|
\end{align}
where ${\bf R}_{\bq^{\rm ul}\bq^{\rm ul}}$ is given in \eqref{eq:Rqq_ul}.
We note from \eqref{eq:capacity} that in the low-resolution ADC system, the capacity  involves the quantization noise covariance matrix ${\bf R}_{\bq^{\rm ul}\bq^{\rm ul}}$ as a penalty term for each antenna.
We use ${\bf f}^H_i$ to indicate the $i$th row of ${\bH}^{\rm ul}$ and ${\mathcal{K}(i)}$ to denote the $i$th selected antenna.
% as a function of the channel matrix ${\bf H}_L$.
\begin{remark}
	\label{rm:intuition}
	Since each diagonal entry of ${\bf R}_{ \bq^{\rm ul}\bq^{\rm ul}}$ contains an aggregated channel gains at each selected antenna $\|{\bf f}_{\mathcal{K}(i)}\|^2$, the tradeoff between the channel gain from adding antennas and its influence on quantization error needs to be considered in antenna selection.
\end{remark}

Using the capacity in \eqref{eq:capacity}, we formulate the UL receive antenna selection problem as follows:
\begin{align}
	\label{eq:problem_ul}
%	C\big({\bf H}_{\rm b}(\mathcal{S^\star})\big) 
	\cP 2:\quad\quad\mathcal{K}^\star = \argmax_{\mathcal{K} \subseteq \mathcal{S}:|\mathcal{K}| = N_r \geq N_{\rm MS}}  \cR^{\rm ul}\big (\cK\big),
\end{align}
%where ${\mathcal{S}}$ represents the index set of selected antennas and  ${\bf H}_{\mathcal{S}}$ is the channel matrix of the selected antennas in $\mathcal{S}$.
%Later, we use $\mathcal{S}(k)$ to indicate the $k$th scheduled user.
where $\cS = \{1,\dots,N_{\rm BS}\}$. 
Notice that the large number of BS antennas $N_{\rm BS}$ makes it almost infeasible to perform an exhaustive search. 
Accordingly, to avoid searching over all possible antenna subsets $\mathcal{K}$, we propose two algorithms: a quantization-aware antenna selection algorithm based on the greedy approach and a Markov chain Monte Carlo (MCMC)-based algorithm.
%which requires excessive complexity due to the large number of antennas $N_r$

%%%%%%%%%
\subsection{Greedy Approach}
\label{subsec:greedy}
%%%%%%%%

Now, let ${\bf D}_{\mathcal{K}} = {\rm diag}\{1+\rho(1-\alpha_b)\|{\bf f}_{\mathcal{K}(i)}\|^2\}$ be the diagonal matrix with $(1+\rho(1-\alpha_b)\|{\bf f}_{\mathcal{K}(i)}\|^2)$ for $i = 1, \dots, N_r$ at its diagonal entries.
Then, the capacity in \eqref{eq:capacity} can be rewritten as 
%\vspace{-0.5em}
\begin{align}
%	\nonumber
    &\cR^{\rm ul}(\cK)
%   \\ \nonumber
%	& =  \log_2 \Big|{\bf I} +  \rho\alpha\Bigl({\bf I} + \rho (1-\alpha){\rm diag}\bigl({\bf H}_K{\bf H}^H_K\bigr)\Bigr)^{-1}{\bf H}_K{\bf H}^H_K \Big|\\
	\label{eq:capacity2}
%	\nonumber
	 =  \log_2 \Big|{\bf I}_{N_r} +  \rho\alpha_b{\bf D}_{\mathcal{K}}^{-1}{\bf H}_{\mathcal{K}}^{\rm ul}{\bf H}^{{\rm ul}\,H}_{\mathcal{K}} \Big|.
\end{align}
Let $\cK_t$ be the set of selected antennas during the first $t$ greedy selections and $\bH_{\mathcal{K}_t\cup \{j\}}$ be the channel matrix of $t$ selected antennas during the first $t$ greedy selections and a candidate antenna $j\in \mathcal{S}\setminus \mathcal{K}_t$ at the next selection stage. 
Then, we formulate the greedy selection problem as
\begin{align}
	\label{eq:greedy-max}
%	J = \argmax_{j \in \mathcal{S}\setminus \mathcal{K}(1:n)}  \log_2 \Big|{\bf I}_K +  \rho\alpha_b{\bf D}_{\mathcal{K}(1:n)\cup \{j\}}^{-1}\bH_{\mathcal{K}(1:n)\cup \{j\}}\bH_{\mathcal{K}(1:n)\cup \{j\}}^H \Big|.
	J = \argmax_{j \in \mathcal{S}\setminus \mathcal{K}_t} \cR^{\rm ul}(\cK_t\cup \{j\}).
\end{align}
To reduce the complexity of solving the problem in \eqref{eq:greedy-max}, we decompose the capacity formula \eqref{eq:capacity2}.
At the $(t+1)$th selection stage with a candidate antenna $j$, we have
\begin{align}
	\nonumber
    \cR^{\rm ul}(\cK_{t}\cup \{j\}) & = \log_2 \Big|{\bf I}_{N_r} + \rho\alpha_b {\bf D}^{-1}_{{\mathcal{K}_t\cup \{j\}}}{\bf H}^{\rm ul}_{{\mathcal{K}_t\cup \{j\}}}{\bf H}^{{\rm ul}\,H}_{{\mathcal{K}_t\cup \{j\}}} \Big|\ \\ 
%    \nonumber
%	& = \log_2 \Big|{\bf I}_{N_{\rm MS}} + \rho\alpha_b{\bf H}^{{\rm ul}\,H}_{{\mathcal{K}_t\cup \{j\}}} {\bf D}^{-1}_{{\mathcal{K}_t\cup \{j\}}}{\bf H}^{\rm ul}_{{\mathcal{K}_t\cup \{j\}}} \Big|\\
	\label{eq:capacity3}
	& = \log_2 \biggl|{\bf I}_{N_{\rm MS}} + \rho\alpha_b\Bigl({\bf H}^{{\rm ul}\,H}_{\mathcal{K}_t} {\bf D}^{-1}_{\mathcal{K}_t}{\bf H}^{\rm ul}_{\mathcal{K}_t}\! +\! \frac{1}{d_{j}}{\bf f}_{j}{\bf f}^H_{j}\Bigr)\biggr|.
\end{align}
%where ${\bf H}_{n+1}$ is the channel matrix that corresponds to the $(n+1)$ selected antennas after the $(n+1)$th selection stage. 
Recall that ${\bf f}^H_{j}$ denotes the $j$th row of ${\bf H}^{\rm ul}$ and $d_{j}$ is the corresponding diagonal entry of ${\bf D}_{{\mathcal{K}_t\cup \{j\}}}$.
% and ${\bf h}_{n+1}^H$ is the $(n+1)$th row of ${\bf H}$.

Using the matrix determinant lemma $|{\bf A} + {\bf u}{\bf v}^H| = |{\bf A}|(1+{\bf v}^H{\bf A}^{-1}{\bf u})$, we rewrite \eqref{eq:capacity3} as
\begin{align}
%	\nonumber
	\cR^{\rm ul}(\cK_t\cup \{j\}) 
%	& = \log_2 \biggl|{\bf I}_{N_{\rm MS}} + \rho\alpha_b{\bf H}^{{\rm ul}\,H}_{\mathcal{K}_t} {\bf D}^{-1}_{\mathcal{K}_t}{\bf H}^{\rm ul}_{\mathcal{K}_t}\biggr|+ \log_2\biggl(1\!+\!\frac{\rho \alpha_b}{d_j}c_{t}(j) \biggr)\\
	\label{eq:rate_reduced_ul}
	= \cR^{\rm ul}(\cK_t) + \log_2\biggl(1\!+\!\frac{\rho \alpha_b}{d_j}c_{t}(j) \biggr)
\end{align}
where 
%\vspace{-0.5em}
\begin{align}
	\label{eq:capacity gain}
	c_{t}(j) ={\bf f}^H_{j} \Bigl({\bf I}_{N_{\rm MS}}\! +\! \rho\alpha_b{\bf H}^{{\rm ul}\,H}_{\mathcal{K}_t} {\bf D}^{-1}_{\mathcal{K}_t}{\bf H}^{\rm ul}_{\mathcal{K}_t}\Bigr)^{-1}{\bf f}_{j}.
\end{align}
To maximize $\cR^{\rm ul}({\bf H}_{{\mathcal{K}_t\cup \{j\}}})$ given the $t$ selected antennas, the next antenna $j$ which maximizes ${c_{t}(j)}/{d_j}$ needs to be selected at the $(t+1)$th selection stage as 
%Accordingly, from \eqref{eq:rate_reduced_ul}, the receive antenna selection problem at the $(t+1)$th selection stage becomes
\begin{align}
	\label{eq:problem2}
    J =\argmax_{j \in \mathcal{S}\setminus \mathcal{K}_t}  \frac{c_{t}(j)}{d_j}. 
\end{align}
Unlike the criterion with no quantization error in \cite{gharavi2004fast}, the derived criterion $c_{t}(j)/d_j$ incorporates $(i)$ the effect of the existing quantization error from the previously selected $t$ antennas to the next antenna $j$ in $c_{t}(j)$, and $(ii)$ the additional quantization error from the antenna $j$ as a penalty for selecting the antenna $j$ in the form of $1/d_j$.
%$c_{j,n}$ in \eqref{eq:capacity gain} includes the quantization error from the previously selected $n$ antennas and the penalty term $d_j$ is added to the objective function in \eqref{eq:problem2}.
%Since $d_j$ is a function of the aggregated channel gain for the $j$th antenna, $d_j = 1+\rho(1-\alpha_b)\|{\bf f}_j\|^2$, 
In this regard, solving the problem \eqref{eq:problem2} gives the antenna $J$ which offers the best tradeoff between the channel gain from selecting an antenna and its influence on the increase of the quantization error.
%This corresponds to the intuition in Remark \ref{rm:intuition}.
We note that \eqref{eq:problem2} is the generalized antenna selection criterion of the one in \cite{gharavi2004fast}; as the number of quantization bits $b$ increases, the quantization gain $\alpha_b$ increases as $\alpha_b \to 1$, which leads to $d_j \to 1$ and ${\bf D}_{\mathcal{K}_t} \to {\bf I}_t$.

%%%%%%%%%%%%%%%%%%%%%%%%%%%%%%%%%%%%%%%%%%%%%%%%%%%%%%%%%%%%%%%%%%%%%%%%%%%%%
\begin{algorithm}[t!]
\label{algo:QFAS}
% \vspace{.3em}
 {\bf Initialization}: $\mathcal{S} = \{1,\dots,N_{\rm BS}\}$, $\cK = \emptyset$ and ${\bf Q} = {\bf I}_{N_{\rm MS}}$.\\
 Compute initial antenna gain and compute penalty: \\
		$c(j) = \|{\bf f}_j\|^2$ and $d_j = 1 + \rho(1-\alpha_b)\|{\bf f}_j\|^2$ for $j \in \mathcal{S}$.\\
 \For{$t = 1:N_r$}{
    Select antenna $J$ using \eqref{eq:problem2}: $J=  \argmax_{j\in \mathcal{S}}  c(j)/d_j$.\\
    Update sets: $\mathcal{S} = \mathcal{S}\setminus\{J\}$ and $\cK = \cK\cup\{J\}$\\
    Compute: ${\bf a} = \bigl({c({J}) + \frac{d_J}{\rho \alpha_b}}\bigr)^{-\frac{1}{2}} {\bf Q} {\bf f}_J$ and ${\bf Q} = {\bf Q} - {\bf a}{\bf a}^H$.\\
    Update $c(j) = c(j) - |{\bf f}^H_j{\bf a}|^2$ for $j \in \mathcal{S}$.
      }
\Return{\ }{$\cK$;}
\caption{Quantization-aware Fast Antenna Selection (QFAS)}
\end{algorithm}
%%%%%%%%%%%%%%%%%%%

We now propose a quantization-aware fast antenna selection (QFAS) algorithm by using the derived criterion in \eqref{eq:problem2} and modifying the selection algorithm in \cite{gharavi2004fast} without increasing the overall complexity.
Unlike the perfect quantization case, the quantization error term $d_j$ needs to be computed prior to selection.
%, which requires the complexity of $O(N_rN_u)$.
At each selection stage, the proposed algorithm adopts \eqref{eq:problem2}.
To compute $c_{t}(j)$ in \eqref{eq:capacity gain}, we define ${\bf Q}_t =  \Bigl({\bf I}_{N_{\rm MS}}\! +\! \rho\alpha_b{\bf H}^H_{\mathcal{K}_t} {\bf D}^{-1}_{\mathcal{K}_t}{\bf H}_{\mathcal{K}_t}\Bigr)^{-1}.$
%\begin{align}
%	\nonumber
%	{\bf Q}_t =  \Bigl({\bf I}_{N_{\rm MS}}\! +\! \rho\alpha_b{\bf H}^H_{\mathcal{K}_t} {\bf D}^{-1}_{\mathcal{K}_t}{\bf H}_{\mathcal{K}_t}\Bigr)^{-1}.
%\end{align}
%${\bf Q}_n =  \Bigl({\bf I}\! +\! \rho\alpha{\bf H}^H_{n} {\bf D}^{-1}_{n}{\bf H}_{n}\Bigr)^{-1}$, 
Then, $c_{t}(j)$ is updated as
%\vspace{-0.5em}
\begin{align}
	\nonumber
	c_{t+1}(j) = {\bf f}^H_{j}{\bf Q}_{t+1}{\bf f}_{j}
%	 = {\bf f}^H_{j}\bigl({\bf Q}_t - {\bf a}{\bf a}\bigr)^H{\bf f}_{j}  
%	= c_{j,n} - |{\bf f}^H_j{\bf a}|^2.
	\stackrel{(a)}= c_{t}(j) - |{\bf f}^H_j{\bf a}|^2.
\end{align}
where $(a)$ follows from that ${\bf Q}_t$ can be efficiently updated by using the matrix inversion lemma as ${\bf Q}_{t+1} = {\bf Q}_t - {\bf a}{\bf a}^H$ with ${\bf a} = \bigl({c_t({J}) + \frac{d_J}{\rho \alpha_b}}\bigr)^{-{1}/{2}} {\bf Q}_t {\bf f}_J$. 
%\vspace{-0.5em}
%\begin{align}
%%	{\bf a} = \frac{1}{\sqrt{c_{J} + \frac{d_J}{\rho \alpha}}} {\bf Q} {\bf f}_J.
%	{\bf a} = \biggl({c_{J} + \frac{d_J}{\rho \alpha}}\biggr)^{-1/2} {\bf Q} {\bf f}_J.
%\end{align}
The proposed QFAS algorithm is described in Algorithm \ref{algo:QFAS}.
Note that the complexity for step 5 and 6 are $\cO(N_rN_{\rm MS}^2)$ and $\cO(N_rN_{\rm MS} N_{\rm BS})$, respectively.
The overall complexity becomes $\cO(N_r N_{\rm MS} N_{\rm BS})$ because of $(N_{\rm BS}\gg N_{\rm MS})$.
Thus, the proposed algorithm does not increase the overall complexity from the conventional algorithm \cite{gharavi2004fast}, which provides the opportunity to be practically implemented.
%In the following section, we demonstrate the performance of the proposed algorithm for the large-scale MIMO systems with coarse quantization.

Now, we analyze the performance of the proposed QFAS method by using submodularity.
%To this end, we first introduce some definitions.
\begin{definition}[Submodularity]
	If $\cV$ is a finite set, a submodular function is a set function $f:2^{\mathcal{V}}\to \bbR$ which meets the following condition: for every $\cA, \cB  \subseteq  \cV$ with $\cA \subseteq \cB$ and every element $v \in \cV \setminus \cB$, $f$ satisfies that $f(\cA \cup \{v\}) - f(\cA) \geq f(\cB \cup \{v\}) - f(\cB)$.
\end{definition}
\begin{definition}[Monotone]
	A set function $f:2^{\mathcal{V}}\to \bbR$ is monotone if for every $\cA \subseteq \cB \subseteq \cV$, we have that $f(\cA) \leq f(\cB)$.
	{\color{black} $f$ is said to be normalized if $f(\phi)=0$, where $\phi$ denotes the empty set.}
\end{definition}
From the definition of a submodular set function, it exhibits a diminishing return property.
The following theorem provides a performance lower bound of greedy methods for optimizing submodular objective functions. 
\begin{theorem}[\cite{nemhauser1978analysis}]\label{thm:submodular}
	{\color{black} For a normalized nonnegative and monotone submodular function} $f:2^{\mathcal{V}}\to \bbR_{+}$, let $\cA_{\rm G} \subseteq \cV$ be a set with $|\cA_{\rm G}| = k$ obtained by selecting elements one at a time and choosing an element that provides the largest marginal increase in the function value at each time.
%	Let $\cA^\star$ be the optimal set that maximizes the value of $f$ over all sets $\cA \subseteq \cV$ with $|\cA^\star| = k$.
	Let $\cA^\star$ be the optimal set that maximizes the value of $f$ with $|\cA^\star| = k$.
	Then, $f(\cA_{\rm G}) \geq (1-\frac{1}{e})f(\cA^\star)$. 
\end{theorem}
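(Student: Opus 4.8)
The plan is to prove this via the classical Nemhauser--Wolsey--Fisher recurrence on the ``residual gap'' between the optimal and greedy objective values, then iterate it. First I would index the greedy construction: let $\cA_0 = \phi$ and let $\cA_i$ denote the greedy set after $i$ selections, so $\cA_k = \cA_{\rm G}$. Normalization gives $f(\cA_0)=0$, which is what makes the final bound clean. The target is the one-step inequality, valid for each $0\le i < k$,
\begin{align}
	\nonumber
	f(\cA^\star) - f(\cA_{i+1}) \le \left(1 - \tfrac{1}{k}\right)\big(f(\cA^\star) - f(\cA_i)\big).
\end{align}

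Second I would derive this inequality. By monotonicity, $f(\cA^\star) \le f(\cA^\star \cup \cA_i)$. Writing $\cA^\star \setminus \cA_i = \{v_1,\dots,v_m\}$ with $m \le k$ and telescoping,
\begin{align}
	\nonumber
	f(\cA^\star \cup \cA_i) - f(\cA_i) = \sum_{\ell=1}^{m}\Big( f\big(\cA_i \cup \{v_1,\dots,v_\ell\}\big) - f\big(\cA_i \cup \{v_1,\dots,v_{\ell-1}\}\big)\Big).
\end{align}
Submodularity bounds each summand by the marginal gain of adding $v_\ell$ to $\cA_i$ alone, and since the greedy rule selects the element of maximal marginal gain at step $i+1$, each such gain is at most $f(\cA_{i+1}) - f(\cA_i)$. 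Hence
\begin{align}
	\nonumber
	f(\cA^\star) - f(\cA_i) \le m\big(f(\cA_{i+1}) - f(\cA_i)\big) \le k\big(f(\cA_{i+1}) - f(\cA_i)\big),
\end{align}
and rearranging yields the one-step recurrence.

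Third, setting $\delta_i = f(\cA^\star) - f(\cA_i)$, the recurrence reads $\delta_{i+1} \le (1 - 1/k)\delta_i$, so $\delta_k \le (1-1/k)^k \delta_0 = (1-1/k)^k f(\cA^\star)$ using $\delta_0 = f(\cA^\star)$ from normalization. Applying $(1-1/k)^k \le e^{-1}$ and rearranging gives $f(\cA_{\rm G}) = f(\cA_k) \ge (1-1/e)f(\cA^\star)$, as claimed.

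The hard part will be the second step: invoking submodularity correctly along the telescoped chain. Submodularity as stated compares marginal gains only across nested sets $\cA \subseteq \cB$, so I must verify that each partial union $\cA_i \cup \{v_1,\dots,v_{\ell-1}\}$ indeed contains $\cA_i$ (which it does by construction) before replacing each marginal gain by the gain over $\cA_i$, and that the candidate $v_\ell$ lies outside this partial union so the submodularity hypothesis applies. The remaining ingredients---monotonicity for $f(\cA^\star) \le f(\cA^\star \cup \cA_i)$, greedy maximality, and the elementary bound $(1-1/k)^k \le e^{-1}$---are routine.
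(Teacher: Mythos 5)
Your proof is correct: the paper does not prove this theorem itself but imports it from the cited reference (Nemhauser--Wolsey--Fisher), and your argument is precisely the classical proof from that source --- telescoping over $\cA^\star \setminus \cA_i$, bounding each increment by submodularity and greedy maximality to get $\delta_{i+1} \le (1-\tfrac{1}{k})\delta_i$ with $\delta_i = f(\cA^\star)-f(\cA_i)$, then iterating and using $(1-\tfrac{1}{k})^k \le e^{-1}$. The only unstated micro-step is that passing from $m\big(f(\cA_{i+1})-f(\cA_i)\big)$ to $k\big(f(\cA_{i+1})-f(\cA_i)\big)$ requires the greedy marginal gain to be nonnegative (and covers the degenerate case $m=0$), which follows immediately from monotonicity.
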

Based on Theorem \ref{thm:submodular}, it was shown in \cite{vaze2012submodularity} that the achievable rate of a point-to-point MIMO system is a submodular function, and hence, the greedy antenna selection algorithm for high-resolution ADC systems provides at least $\left(1-\frac{1}{e}\right)\cR^{\rm opt}$, where $\cR^{\rm opt}$ the achievable rate with the optimal antenna subset for high-resolution ADC systems.
We extend this result to the capacity with the quantization error in \eqref{eq:capacity}.
% and show the lower bound pxerformance of the proposed QFAS method.
\begin{corollary}\label{cor:submodular_flat}
	The capacity achieved by the proposed QFAS method is lower bounded by
	\begin{align}
		\label{eq:greedy_LB}
		\cR^{\rm ul}({\cK}_{\rm qfas}) \geq \left(1-\frac{1}{e}\right)\cR^{\rm ul}({\cK}^{\star}).
	\end{align}
\end{corollary}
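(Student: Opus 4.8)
The plan is to verify that the UL capacity $\cR^{\rm ul}(\cdot)$ in \eqref{eq:capacity} is a normalized, nonnegative, monotone, and submodular set function on $2^{\cS}$, and then to observe that QFAS is exactly the greedy rule selecting the element of largest marginal value at each stage; the bound \eqref{eq:greedy_LB} then follows immediately from Theorem \ref{thm:submodular}. First I would record the marginal-gain identity already established in \eqref{eq:rate_reduced_ul},
\begin{align}
	\nonumber
	\cR^{\rm ul}(\cK \cup \{v\}) - \cR^{\rm ul}(\cK) = \log_2\!\left(1 + \frac{\rho\alpha_b}{d_v}\,c_{\cK}(v)\right),
\end{align}
where $c_{\cK}(v) = {\bf f}_v^H {\bf M}_{\cK}^{-1} {\bf f}_v$ with ${\bf M}_{\cK} = {\bf I}_{N_{\rm MS}} + \rho\alpha_b {\bf H}_{\cK}^{{\rm ul}\,H}{\bf D}_{\cK}^{-1}{\bf H}_{\cK}^{\rm ul}$, matching the capacity-gain expression \eqref{eq:capacity gain}.

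Normalization and nonnegativity are direct: $\cR^{\rm ul}(\emptyset) = \log_2|{\bf I}_{N_{\rm MS}}| = 0$, and every marginal gain above is nonnegative because $c_{\cK}(v)\geq 0$ (a quadratic form in a positive-definite matrix) and $d_v > 0$. Summing the nonnegative marginal gains along any chain from $\emptyset$ then shows both that $\cR^{\rm ul}(\cK)\geq 0$ and that $\cR^{\rm ul}$ is monotone.

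The crux is submodularity, and the key structural observation is that the diagonal penalty matrix factors the Gram term element-wise,
\begin{align}
	\nonumber
	{\bf H}_{\cK}^{{\rm ul}\,H}{\bf D}_{\cK}^{-1}{\bf H}_{\cK}^{\rm ul} = \sum_{i\in\cK}\frac{1}{d_i}\,{\bf f}_i {\bf f}_i^H,
\end{align}
so that ${\bf M}_{\cK}$ is an identity plus a sum of rank-one positive-semidefinite terms, one per selected antenna, with each $d_i$ depending only on the single antenna $i$ (through $\|{\bf f}_i\|^2$) and not on the rest of $\cK$. Consequently, for any $\cA \subseteq \cB$ we have ${\bf M}_{\cB} = {\bf M}_{\cA} + \rho\alpha_b\sum_{i\in\cB\setminus\cA}\frac{1}{d_i}{\bf f}_i{\bf f}_i^H \succeq {\bf M}_{\cA} \succ {\bf 0}$, hence ${\bf M}_{\cB}^{-1}\preceq {\bf M}_{\cA}^{-1}$ and therefore $c_{\cB}(v) = {\bf f}_v^H {\bf M}_{\cB}^{-1}{\bf f}_v \leq {\bf f}_v^H {\bf M}_{\cA}^{-1}{\bf f}_v = c_{\cA}(v)$ for every $v\in\cS\setminus\cB$. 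Since $d_v$ is identical in the two marginal gains and $x\mapsto\log_2(1+\frac{\rho\alpha_b}{d_v}x)$ is increasing, the marginal gain of $v$ at $\cA$ dominates that at $\cB$, which is precisely the submodularity inequality.

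Finally I would note that maximizing the marginal gain $\log_2(1+\frac{\rho\alpha_b}{d_j}c_t(j))$ over candidate antennas $j$ is equivalent to maximizing $c_t(j)/d_j$, which is the selection rule \eqref{eq:problem2} implemented by QFAS; thus $\cK_{\rm qfas}$ is exactly the greedy set of size $N_r$ for the submodular objective $\cR^{\rm ul}$, built up from $\emptyset$. Applying Theorem \ref{thm:submodular} with $k = N_r$ then yields $\cR^{\rm ul}(\cK_{\rm qfas})\geq(1-\frac{1}{e})\cR^{\rm ul}(\cK^\star)$. The hard part will be the submodularity step, and specifically recognizing the element-wise factorization of ${\bf D}_{\cK}^{-1}$ together with the fact that the quantization penalty $d_v$ of a candidate antenna is independent of which other antennas have already been chosen; this is what lets the L\"owner monotonicity of ${\bf M}_{\cK}$ translate cleanly into the diminishing-returns property.
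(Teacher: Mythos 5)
Your proof is correct, and it takes a genuinely different route from the paper's. The paper proves submodularity information-theoretically, following \cite{vaze2012submodularity}: it introduces an auxiliary Gaussian vector $\bx_{\mathcal{K}}\sim\cC\cN({\bf 0},{\pmb \Gamma}_{\mathcal{K}})$ whose noise-normalized covariance ${\pmb \Gamma}_{\mathcal{K}}$ satisfies $h(\bx_{\mathcal{K}}) = N_r\ln(\pi e) + \tfrac{1}{\log_2 e}\cR^{\rm ul}(\cK)$, and then invokes the ``conditioning reduces entropy'' property of joint Gaussian entropy to obtain the diminishing-returns inequality. You instead work in the $N_{\rm MS}\times N_{\rm MS}$ dual domain: by Sylvester's identity, $\cR^{\rm ul}(\cK) = \log_2\bigl|{\bf M}_{\cK}\bigr|$ with ${\bf M}_{\cK} = {\bf I}_{N_{\rm MS}} + \rho\alpha_b\sum_{i\in\cK}d_i^{-1}{\bf f}_i{\bf f}_i^H$, and submodularity follows from L\"owner monotonicity of matrix inversion applied to the quadratic form $c_{\cK}(v)$. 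Both arguments ultimately rest on the same structural fact --- which the paper compresses into the phrase ``exploiting the form of $\bR_{\bq^{\rm ul}\bq^{\rm ul}}$ in \eqref{eq:Rqq_ul}'' --- namely that the quantization penalty $d_i = 1+\rho(1-\alpha_b)\|{\bf f}_i\|^2$ of an antenna depends only on that antenna and not on the rest of the selected set; in the paper's language this is what makes ${\pmb \Gamma}_{\cA}$ a principal submatrix of ${\pmb \Gamma}_{\cB}$ for $\cA\subseteq\cB$, and in yours it is what makes ${\bf M}_{\cK}$ an identity plus set-independent rank-one terms. Your version buys two things: it is elementary and self-contained (no auxiliary random vectors or entropy facts), and by reusing the marginal-gain formula \eqref{eq:rate_reduced_ul} it makes explicit something the paper leaves implicit --- that the QFAS rule $\max_j c_t(j)/d_j$ in \eqref{eq:problem2} coincides exactly with greedy marginal-gain maximization, which is the form Theorem~\ref{thm:submodular} actually requires. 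The paper's entropy route is shorter once the submodularity-of-entropy fact is granted and connects more directly to the prior antenna-selection literature.
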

\begin{proof}
	We first need to show that the achievable rate with the quantization error $\cR^{\rm ul}(\cK)$ in \eqref{eq:capacity} is submodular.
	{\color{black} Let ${\pmb \Gamma}_{\mathcal{K}} = {\bf I}_{N_r} + \rho\alpha_b^2\big( \alpha_b^2 {\bf I}_{N_r} + {\bf R}_{{\bf q}^{\rm ul}{\bf q}^{\rm ul}}\big)^{-1/2}{\bf H}^{\rm ul}_\mathcal{K}{\bf H}^{{\rm ul}\,H}_\mathcal{K}\big(\alpha_b^2 {\bf I}_{N_r} + {\bf R}_{{\bf q}^{\rm ul}{\bf q}^{\rm ul}}\big)^{-1/2}$.} 
	Let $\bx_{\mathcal{K}} \sim \cC\cN({\bf 0}, {\pmb \Gamma}_{\mathcal{K}})$. 
	Since ${\pmb \Gamma}_{\mathcal{K}}$ is nonsingular, the entropy of ${\bf x}_{\mathcal{K}}$ is given as  
	\begin{align}
		\nonumber
		h({\bx}_\mathcal{K}) = \ln|\pi e {\pmb \Gamma}_{\mathcal{K}}|= N_r\ln(\pi e) + \frac{1}{\log_2 e}\cR^{\rm ul}(\cK).
	\end{align}
	{\color{black}Exploiting the form of $\bR_{\bq^{\rm ul},\bq^{\rm ul}}$ in \eqref{eq:Rqq_ul}}, for any sets $\cA \subseteq \cB \subseteq \cS$ and element such that $\{s\}\notin \cB$ and $\{s\}\in \cS$, we have $h({\bx}_{\{s\}}|{\bx}_{\mathcal{A}}) \geq h(\bx_{\{s\}}|{\bx}_{\mathcal{B}})$, i.e., $h({\bx}_{\mathcal{A}\cup\{s\}})-h({\bx}_{\mathcal{A}}) \geq h({\bx}_{\mathcal{B}\cup\{s\}})-h({\bx}_{\mathcal{B}})$.
	The entropy is submodular and $\cR^{\rm ul}(\cK)$ in \eqref{eq:capacity} is also submodular.
	{\color{black} In addition, $\cR^{\rm ul}(\cK)$ is normalized} and monotone.
%	as adding receive antennas does not decrease $\cR^{\rm ul}(\cK)$.
	Since $\cR^{\rm ul}(\cK)$ \eqref{eq:capacity} is submodular, monotone, and nonnegative, the capacity with the greedy maximization in \eqref{eq:greedy-max} is lower bounded by \eqref{eq:greedy_LB} from Theorem~\ref{thm:submodular}. 
	Thus, the capacity with the proposed QFAS is also lower bounded by \eqref{eq:greedy_LB}.
\end{proof}

%%%%%%%%%%%%%%%%%%%%%%%%
\subsection{Markov Chain Monte Carlo Approach}
\label{subsec:mcmc}
%%%%%%%%%%%%%%%%%%%%%%%%

To find a numerical upper bound of the capacity for the antenna selection without exhaustive search, we provide an algorithm that finds an approximated optimal solution for the problem $\cP2$ in \eqref{eq:problem_ul}. 
We modify the adaptive MCMC-based selection method \cite{liu2009low} by adopting \eqref{eq:capacity} for formulating an original probability density function (PDF).
To develop the MCMC-based algorithm for low-resolution ADC systems, we define a binary vector $\pmb \omega \in \{0,1\}^{N_{\rm BS}}$ with $\|\pmb \omega\|_0 = N_r$ where $1$ indicates that the corresponding receive antenna is selected and vice versa.
Here, $\pmb \omega$ can be considered as a codeword of the codebook $\cV$ that contains all possible combinations of antenna subsets of size $N_r$, i.e.,  $|\cV| = {N_{\rm BS}\choose N_r}$.
Now, let the original PDF be
\begin{align}
	\label{eq:originalPDF}
	\pi(\pmb \omega) \triangleq  \exp\left(\frac{1}{\tau}{\mathcal{R}^{\rm ul}(\pmb \omega)}\right)/\Gamma
\end{align}
where $\tau$ is a rate constant and $\Gamma$ is a normalizing factor.
We  reformulate $\cP2$ in \eqref{eq:problem_ul} as
\begin{align}
	\label{eq:problem_ul_mcmc}
	\pmb \omega^\star = \argmax_{\pmb \omega \in \mathcal{V}} \pi(\pmb \omega).
\end{align} 

To solve \eqref{eq:problem_ul_mcmc}, the proposed algorithm uses a Metropolized independence sampler (MIS) \cite{liu2008monte} for the MCMC sampling, which is performed as follows: for a given current sample $\pmb \omega(i)$, a new sample $\pmb \omega^{\rm new}$ is selected according to a proposal distribution $q(\pmb \omega)$.
Based on a accepting probability $p_{\rm accept}(\pi, q) = {\rm min}\{1, \frac{\pi(\pmb \omega^{\rm new})}{\pi(\pmb \omega(i))}\frac{q(\pmb \omega(i))}{q(\pmb \omega^{\rm new})}\}$, we obtain a next sample as $\pmb \omega(i+1) = \pmb \omega^{\rm new}$ if accepted, or we have $\pmb \omega(i+1) = \pmb \omega(i)$, otherwise.
{\color{black} After $N_{\rm MCMC}$ iterations, we have a set of $(1+N_{\rm MCMC})$ samples including an initial sample $\pmb \omega(0)$, i.e., $\{\pmb \omega(0),\pmb \omega(1),\dots ,\pmb \omega(N_{\rm MCMC})\}$.}

For the proposal distribution, we use the product of Bernoulli distributions which is given as
\begin{align}
	\label{eq:proposalPDF}
	q(\pmb \omega; {\bf p}) = \frac{1}{\Gamma'} {\prod_{j=1}^{N_{\rm BS}}p_j^{ [\pmb\omega_{v}]_j}(1-p_j)^{1- [\pmb\omega_{v}]_j}}
%	\propto  \prod_{i=1}^{N_{\rm BS}}p_i^{\pmb \omega_{v,i}}(1-p_i)^{1-\pmb \omega_{v,i}}
\end{align}
where $p_j$ represents the probability of receive antenna $j$ to be selected and $[\pmb \omega]_j$ denotes the $j$th element of $\pmb \omega$.
Since $\Gamma'$ is unnecessary for computing the accepting probability $p_{\rm accept}$, we use $q(\pmb \omega; {\bf p})$ without  $\Gamma'$.
Similarly, we also use $\pi(\pmb \omega)$ without the normalizing factor $\Gamma$ for $p_{\rm accept}$.  

The selection probabilities $\bp$ will be adaptively updated at each iteration in the algorithm to increase the similarity between $\pi(\pmb \omega)$ and $q(\pmb \omega; {\bf p})$.
We update the probability entries $p_j$ to update the proposal distribution $q(\pmb \omega;\bp)$ by minimizing the Kullback-Leibler divergence between $\pi(\pmb \omega)$ and $q(\pmb \omega; {\bf p})$ \cite{liu2009low}.
Then, the update at $(t+1)$th iteration becomes
\begin{align}
	\label{eq:p_update}
	p_j^{(t+1)} = p_j^{(t)} + r^{(t+1)}\left(\frac{1}{N_{\rm MCMC}}\sum_{i=1}^{N_{\rm MCMC}}\left[\pmb \omega{(i)}\right]_j - p_j^{(t)}\right)
\end{align}
where $r^{(t)}$ is a sequence of decreasing step sizes that satisfies $\sum_{t=0}^\infty r^{(t)} = \infty$ and $\sum_{t=0}^\infty (r^{(t)})^2 < \infty$ \cite{harold1997stochastic}.
Finally, Algorithm~\ref{algo:QMCMC} describes the quantization-aware MCMC-based antenna selection (QMCMC-AS) algorithm. 
Algorithm \ref{algo:QMCMC} stops once it reaches a stopping criterion, which we set as the number of maximum iterations $\tau_{\rm stop}$. 
The computational complexity of the QMCMC-AS method is $\cO(N_rN_{\rm MS}^2N_{\rm MCMC}\tau_{\rm stop})$ \cite{liu2009low}.
We note that unlike the QFAS method, the complexity of the QMCMC-AS method involves additional parameters such as the sample size $N_{\rm MCMC}$ and the number of iterations $\tau_{\rm stop}$. 
When $N_{\rm BS}\choose{N_r}$ is large, the QMCMC-AS method is required to have large $N_{\rm MCMC}$ and $\tau_{\rm stop}$ to find a good subset of antennas \cite{zhang2009receive}.
Accordingly, the complexity of the QMCMC-AS can be unnecessarily high.
Thus, we use the QMCMC-AS method only to provide an approximated optimal performance as a benchmark.

%%%%%%%%%%%%%%%%%%%%%%%%%%%%%%%%%%%%%%%%%%%%%%%%%%%%%%%%%%%%%
\begin{algorithm}[t!]
\label{algo:QMCMC}
% \vspace{.3em}
 {\bf Initialization}: Set original distribution $\pi(\pmb \omega)$ as \eqref{eq:originalPDF} and proposal distribution $q(\pmb \omega;\bp)$ as \eqref{eq:proposalPDF} without normalizing factors. 
 Set $\pmb\omega(0)$ as selected antennas from Algorithm \ref{algo:QFAS}, and $\hat{\pmb\omega}_C^* = \pmb\omega(0)$.
 Set $p_j^{(0)} = 1/2$, $\forall j$. \\
 \For{$t = 1:\tau_{\rm stop}$}{
 Run the MIS to draw samples $\{\pmb \omega{(i)}\}_{i=1}^{N_{\rm MCMC}}$ with $p_{\rm accept}( \pi, q)$ \\ 
 If $|\pmb \omega{(i)}| > N_r$, keep only first $N_r$ entries with largest $p_j^{(k)}$. 
 If $|\pmb \omega{(i)}| < N_r$, randomly select $(N_r- |\pmb \omega{(i)}|)$ more antennas.\\
 Update $p_j^{(t)}$ according to \eqref{eq:p_update}.\\
 If $\pi(\pmb \omega{(i)}) > \pi(\hat{\pmb\omega}_C^*)$, for $i= 1,\dots, N_{\rm MCMC}$, set $\pi(\hat{\pmb\omega}_C^*) = \pi(\pmb \omega{(i)})$.
 }
\Return{\ }{$\hat{\pmb\omega}_C^*$\;}
\caption{Quantization-aware MCMC-Antenna Selection (QMCMC-AS) }
\end{algorithm}
%%%%%%%%%%%%%%%%%%%%%%%%%%%%%%%%%%%%%%%%%%%%%%%%%%%%%%%%%%%%

%%%%%%%%%%%%%%%%%%%%%%%%%%%%%
%\subsection{Antenna Selection with Variable ADCs}
%\label{sec:main}
%%%%%%%%%%%%%%%%%%%%%%%%%%%%%
%
%\begin{align}
%    {\bf H}_L = {\bf U} {\bf \Sigma} {\bf V}^H
%\end{align}
%\begin{align}
%    {\bf W}_{\rm RF}^H {\bf r}_L = \sqrt{\rho}{\bf W}_{\rm RF}^H {\bf H}_L{\bf s} +  {\bf W}_{\rm RF}^H {\bf n}_L
%\end{align}
%\begin{align}
%    {\bf W}_{\rm RF} = {\bf U}
%\end{align}
%\begin{align}
%    {\bf U}^H {\bf r}_L = \sqrt{\rho}{\bf \Sigma} {\bf V}^H{\bf s} + \tilde{\bf n}_L
%\end{align}
%\begin{align}
%    \min_{{\bf W}_{\rm RF}} \|{\bf U} - {\bf W}_{\rm RF}\|_{\rm F}^2\quad \text{s.t. } |[{\bf W}_{\rm RF}]_{i,j}| = 1/\sqrt{L}.
%\end{align}
%\begin{align}
%    [{\bf W}_{\rm RF}^{\rm opt}]_{i,j} = \frac{1}{\sqrt{L}}e^{j\measuredangle([{\bf U}]_{i,j})}
%\end{align}
%\begin{align}
%    b_i = \bar b + \log_2 L \|{\bf W}_{\rm RF}^H{\bf H}_L\|_{i,:}^{2/3} - \log_2 \sum_{j=1}^{L} \|{\bf W}_{\rm RF}^H{\bf H}_L\|_{j,:}^{2/3}
%\end{align}
%\begin{align}
%    {\bf H}(t) = a{\bf H}(t-1) + \sqrt{1-a^2} {\bf W}(t),\quad 0<a<1,\  [{\bf W}(t)]_{i,j} \sim \mathcal{CN}(0,1) 
%\end{align}

%%%%%%%%%%%%%%%%%%%%%%%%%%%%%%%%%%
\section{Extension to Wideband Channels}
\label{sec:extension}
%%%%%%%%%%%%%%%%%%%%%%%%%%%%%%%%%%

In this section, we derive the multiuser OFDM system models with quantization error and extend the DL and UL antenna selection problems to the wideband OFDM system.
%We further derive theoretical results for the BS sum rate maximization antenna selection problem in low-resolution ADC systems.
%%%%%%%%%%%%%%%%%%%%%%%%%%%%%%%%%%
\subsection{Downlink OFDM Communications}
\label{sec:extension_DL}
%%%%%%%%%%%%%%%%%%%%%%%%%%%%%%%%%%

%We will show that the transmit antenna selection problem with ZF precoding for the multiuser OFDM in low-resolution ADC systems is also equivalent to that in high-resolution ADC systems. 
%We will further show that similar results can be derived as the transmit antenna selection with ZF precoding in narrowband communications.
Let $N_{\rm sc}$ be the number of subcarriers for the OFDM system and $\bu_n \in \bbC^{N_{\rm MS}}$ be the frequency domain symbol vector of $N_{\rm MS}$ MSs at the $n$th subcarrier after ZF precoding for the selected antennas in $\cT$. 
We consider bulk selection where all subcarriers share a same antenna subset.
Then, $\bu_n \in \bbC^{N_{\rm MS}}$ is given as
\begin{align}
	\nonumber
	\bu_n = \bW_{{\rm BB},n}(\cT)\bP_n^{1/2}\bs_n^{\rm dl}
\end{align} 
where $\bW_{{\rm BB},n}(\cT)\in \bbC^{N_t \times N_{\rm MS}}$ is the ZF precoding matrix, $\bP_n = {\rm diag}\{p_{n,1},\dots,p_{n,N_{\rm MS}}\}$ is the power allocation matrix, and $\bs_n = [s_{n,1},s_{n,2},\dots,s_{n,N_{\rm MS}}]^T$  is the frequency symbol vector for the $n$th subcarrier. 
%Here, where $s_{n,m}$ is the transmit symbol of MS $m$ at the $n$th subcarrier.
Let $\bx_n^{\rm dl}$ be the DL OFDM symbol vectors at time $n$.
Assuming equal transmit power allocation $p_{n,u} = p_{\mathcal{T}}$, $\forall n,u$, we stack $\bx_n^{\rm dl}$ for $N_{\rm sc}$ time duration $\ubx = [\bx_1^{{\rm dl}\,T}, \bx_2^{{\rm dl}\,T}, \dots, \bx_{N_{\rm sc}}^{{\rm dl}\,T}]^T \in \bbC^{N_{\rm sc}N_t}$, which is given as
\begin{align}
	\nonumber
	\ubx^{\rm dl} &= (\bW_{\rm DFT}^H\otimes\bI_{N_t})\ubu \\
	\nonumber
	& = \sqrt{p_\mathcal{T}}(\bW_{\rm DFT}^H\otimes\bI_{N_t}){\rm BlkDiag}\{\bW_{{\rm BB},1}(\cT), \bW_{{\rm BB},2}(\cT), \dots, \bW_{{\rm BB},N_{\rm sc}}(\cT)\}\ubs^{\rm dl} \\
%	& = \sqrt{p_\mathcal{T}}(\bW_{\rm DFT}^H\otimes\bI_{N_t}) \begin{bmatrix}
%		\bW_{{\rm BB},1}(\mathcal{T}) & {\bf 0} &\cdots & \cdots & {\bf 0}\\
%		{\bf 0} & \bW_{{\rm BB,2}}(\cT) &{\bf 0} &\cdots & {\bf 0}\\
%		\vdots & \ddots & \ddots & \ddots & \vdots\\
%		{\bf 0} & \cdots & \cdots &\cdots & \bW_{{\rm BB},N_{\rm sc}}(\cT)
%	\end{bmatrix}
	\nonumber  
	&= \sqrt{p_\mathcal{T}}(\bW_{\rm DFT}^H\otimes\bI_{N_t})\ubW_{\rm BB}\ubs^{\rm dl}
\end{align}
where  $\bW_{\rm DFT}$ is the normalized $N_{\rm sc}$-point DFT matrix, $\ubu = [\bu_1^T, \bu_2^T, \dots, \bu_{N_{\rm sc}}^T]^T \in \bbC^{N_{\rm sc}N_t}$, $\ubs^{\rm dl} = [\bs_1^{{\rm dl}\, T}, \bs_2^{{\rm dl}\, T}, \dots, \bs_{N_{\rm sc}}^{{\rm dl}\, T}]^T \in \bbC^{N_{\rm sc}N_{\rm MS}}$, and $\ubW_{\rm BB} = {BlkDiag}\{\bW_{{\rm BB},1}(\cT),  \dots, \bW_{{\rm BB},N_{\rm sc}}(\cT)\}$.

Let the analog received signals of $N_{\rm MS}$ MSs after CP removal at time $n$ be $\br_n^{\rm dl} \in \bbC^{N_{\rm MS}}$. 
We stack the vector of received signals $\br_n^{\rm dl} $ for $N_{\rm sc}$ time duration as 
%thereby obtaining the analog received signals before quantization as
\begin{align}
	\nonumber
	\ubr^{\rm dl}  &= \ubH_{\mathcal{T}}^{\rm dl} \ubx^{\rm dl}  + \ubn^{\rm dl} \\
	\label{eq:r_dl_zf_ofdm}
	& = \sqrt{p_\mathcal{T}}\ubH_{\mathcal{T}}^{\rm dl}(\bW_{\rm DFT}^H\otimes\bI_{N_t})\ubW_{\rm BB}\ubs^{\rm dl}  + \ubn^{\rm dl}
\end{align}
where $\ubr^{\rm dl}  = [\br_1^{{\rm dl}\,T} , \br_2^{{\rm dl}\,T}, \dots, \br_{N_{\rm sc}}^{{\rm dl}\,T}]^T \in \bbC^{N_{\rm sc}N_{\rm MS}}$, and the DL channel matrix for $N_t$ selected transmit antennas $\ubH_{\mathcal{T}}^{\rm dl} \in \bbC^{N_{\rm sc}N_{\rm MS} \times N_{\rm sc}N_t}$ is given as 
\begin{align}
	\ubH_{\mathcal{T}}^{\rm dl} = {\rm BlkCirc}\big\{\bH_{\mathcal{T},0}^{\rm dl}, {\bf 0}, \cdots, {\bf 0}, \bH_{\mathcal{T},L-1}^{\rm dl}, \cdots,\bH_{\mathcal{T},1}^{\rm dl}\big\}
\end{align}
where $\bH_{\mathcal{T},\ell}^{\rm dl}\in \bbC^{N_{\rm MS}\times N_t}$ is the channel matrix of the selected antennas in $\cT$ for the $(\ell+1)$th channel tap, $L$ is the number of channel taps,
%\begin{align}
%	\ubH_{\mathcal{K}} = 
%	\begin{bmatrix}
%		\bH_0 & {\bf 0}, & \cdots & {\bf 0} & \bH_{L-1}, & \cdots & \bH_1 \\
%		\bH_1 & \bH_0, & {\bf 0} & \cdots & {\bf 0} & \bH_{L-1} & \bH_1 \\
%		\vdots & \ddots & \ddots &\ddots & \ddots & \ddots & \vdots \\
%		\bH_0 & {\bf 0}, & \cdots & {\bf 0} & \bH_{L-1}, & \cdots & \bH_1
%	\end{bmatrix}
%\end{align}
and $\ubn^{\rm dl} = [\bn_1^{{\rm dl}\, T}, \bn_2^{{\rm dl}\, T}, \dots, \bn_{N_{\rm sc}}^{{\rm dl}\, T}]^T \in \bbC^{N_{\rm sc}N_{\rm MS}}$ denotes the vector of the AWGN noise vectors stacked for $N_{\rm sc}$ time duration. 

The received OFDM signals $\ubr^{\rm dl}$ are quantized at the ADCs.
The quantized signal are expressed with the AQNM as \cite{fletcher2007robust} 
\begin{align}
	\nonumber
	\uby^{\rm dl} 
%	&= \alpha_b\ubr^{\rm dl} + \ubq^{\rm dl}  \\ \number
	 = \alpha_b\sqrt{p_\mathcal{T}}\ubH_{\mathcal{T}}^{\rm dl}(\bW_{\rm DFT}^H\otimes\bI_{N_t})\ubW_{\rm BB}\ubs^{\rm dl}  + \alpha_b\ubn^{\rm dl} + \ubq^{\rm dl} 
%	= \alpha_b\ubH_\mathcal{T}^{\rm dl}\ubx^{\rm dl} + \alpha_b\ubn^{\rm dl} + \ubq^{\rm dl}
\end{align} 
where $\ubq^{\rm dl} =[\bq_1^{{\rm dl}\, T}, \bq_2^{{\rm dl}\, T}, \dots, \bq_{N_{\rm sc}}^{{\rm dl}\, T}]^T \in \bbC^{N_{\rm sc}N_{\rm MS}}$ is the additive quantization noise vector and $\ubq^{\rm dl} \sim \cC\cN({\bf 0}, \bR_{\underline{\bf q}^{\rm dl}\underline{\bf q}^{\rm dl}})$.
Finally, the quantized signal is combined through a DFT matrix as
\begin{align}
	\nonumber
	\ubz^{\rm dl} &= (\bW_{\rm DFT} \otimes \bI_{N_{\rm MS}})\uby^{\rm dl} \\
%	& =  \alpha_b(\bW_{\rm DFT} \otimes \bI_K)\ubH_\mathcal{K}\ubx + (\bW_{\rm DFT} \otimes \bI_K)(\alpha_b\ubn + \ubq )\\
	\nonumber
	 &= \alpha_b\sqrt{p_{\mathcal{T}}}(\bW_{\rm DFT} \otimes \bI_{N_{\rm MS}})\ubH_\mathcal{T}^{\rm dl}(\bW_{\rm DFT}^H\otimes\bI_{N_{t}})\ubW_{\rm BB}\ubs^{\rm dl} + (\bW_{\rm DFT} \otimes \bI_{N_{\rm MS}})(\alpha_b\ubn^{\rm dl} + \ubq^{\rm dl} ) \\ 
	\nonumber
	& = \alpha_b\sqrt{p_{\mathcal{T}}}\ubG_\mathcal{T}^{\rm dl} \ubW_{\rm BB}\ubs^{\rm dl} +  \ubv^{\rm dl}\\
	\nonumber
	& \stackrel{(a)}= \alpha_b\sqrt{p_{\mathcal{T}}}\ubs^{\rm dl} +  \ubv^{\rm dl}.
\end{align}
Here, $\ubG_\mathcal{T}^{\rm dl} = (\bW_{\rm DFT} \otimes \bI_{N_{\rm MS}})\ubH_\mathcal{T}^{\rm dl}(\bW_{\rm DFT}^H\otimes\bI_{N_t}) = {\rm BlkDiag}\{\bG^{\rm dl}_{\mathcal{T},1},\cdots,\bG^{\rm dl}_{\mathcal{T},N_{\rm sc}}\}$ where $\bG^{\rm dl}_{\mathcal{T},n} = \sum_{\ell = 0}^{L-1}\bH_{\mathcal{T},\ell}^{\rm dl} \,e^{-\frac{j2\pi(n-1)\ell}{N_{\rm sc}}}$ is the frequency domain DL channel matrix for subcarrier $n$, and $\ubv^{\rm dl} = (\bW_{\rm DFT} \otimes \bI_{N_{\rm MS}})(\alpha_b\ubn^{\rm dl} + \ubq^{\rm dl} ) = [\bv_1^{{\rm dl}\,T},\cdots,\bv_{N_{\rm sc}}^{{\rm dl}\,T}]^T$.
The equality $(a)$ follows from the ZF precoding $\ubW_{\rm BB} = {\rm BlkDiag}\{\bW_{{\rm BB},1}(\cT),\cdots,\bW_{{\rm BB},N_{\rm sc}}(\cT)\} = \ubG_\mathcal{T}^{{\rm dl}\,H}(\ubG_\mathcal{T}^{\rm dl}\ubG_\mathcal{T}^{{\rm dl}\,H})^{-1}$, i.e., 
\begin{align}
	\nonumber
	\bW_{{\rm BB},n}(\cT) = \bG^{{\rm dl}\,H}_{\mathcal{T},n}(\bG^{\rm dl}_{\mathcal{T},n}\bG^{{\rm dl}\,H}_{\mathcal{T},n})^{-1}.
\end{align}
%because we adopt the ZF precoder.
Under coarse quantization, the received digital signal after DFT for subcarrier $n$ becomes 
\begin{align}
	\nonumber
	\bz_n^{\rm dl} = \alpha_b\sqrt{p_\mathcal{T}}\bs_n^{\rm dl} + \bv_n^{\rm dl}.
\end{align}

Now, we compute the covariance matrix of $\bv_n^{\rm dl}$. 
Let $\bW_{\rm MS} = (\bW_{\rm DFT} \otimes \bI_{N_{\rm MS}})$ and $\bW_{\rm BS} = (\bW_{\rm DFT} \otimes \bI_{N_{t}})$. 
Then, the covariance matrix of $\bv_n^{\rm dl}$ is expressed as
\begin{align}
	\nonumber
	\bR_{\bv_n^{\rm dl}\bv_n^{\rm dl}} 
%	&= \bbE\big[\bv_n^{\rm dl}\bv_n^{{\rm dl}\, H}\big]\\
%	\nonumber
	&= \alpha_b^2\bW_{{\rm MS},n} \bbE\big[\ubn^{\rm dl} \ubn^{{\rm dl}\,H} \big]\bW_{{\rm MS},n}^{H}+ \bW_{{\rm MS},n}\bbE\big[\,\ubq^{\rm dl}\ubq^{{\rm dl}\,H}\big]\bW_{{\rm MS},n}^{H}\\
	\nonumber
	&= \alpha_b^2\bI_{N_{\rm MS}} + \bW_{{\rm MS},n} \bR_{\underline{\bf q}^{{\rm dl}}\underline{\bf q}^{{\rm dl}}}\bW_{{\rm MS},n}^{H}
\end{align}
where $\bW_{{\rm MS},n} = ([\bW_{\rm DFT}]_{n,:}\otimes \bI_{N_{\rm MS}})$, and $\bR_{\underline{\bf q}^{{\rm dl}}\underline{\bf q}^{{\rm dl}}} = \bbE\big[\,\ubq^{\rm dl}\ubq^{{\rm dl}\,H}\big]$ is the covariance matrix of $\ubq^{\rm dl}$.
To derive $\bR_{\underline{\bf q}^{{\rm dl}}\underline{\bf q}^{{\rm dl}}}$, we first simplify the precoding matrix $\ubW_{\rm BB}$ as follows:
\begin{align}
	\nonumber
	\ubW_{\rm BB} &= \ubG_\mathcal{T}^{{\rm dl}\,H}(\ubG_\mathcal{T}^{\rm dl}\ubG_\mathcal{T}^{{\rm dl}\,H})^{-1}\\
	\nonumber
	& \stackrel{(a)}= \bW_{\rm BS}\ubH_\mathcal{T}^{{\rm dl}\,H}\bW_{\rm MS}^H\left(\bW_{\rm MS}\ubH_\mathcal{T}^{\rm dl}\bW_{\rm BS}^H\bW_{\rm BS}\ubH_\mathcal{T}^{{\rm dl}\,H}\bW_{\rm MS}^H\right)^{-1}\\
%	\nonumber
%	& = \bW_{\rm BS}\ubH_\mathcal{T}^{{\rm dl}\,H}\bW_{\rm MS}^H\left(\bW_{\rm MS}\ubH_\mathcal{T}^{\rm dl}\ubH_\mathcal{T}^{{\rm dl}\,H}\bW_{\rm MS}^H\right)^{-1}\\
	\label{eq:WBB_simple}
	& \stackrel{(b)}= \bW_{\rm BS}\ubH_\mathcal{T}^{{\rm dl}\,H}\left(\ubH_\mathcal{T}^{\rm dl}\ubH_\mathcal{T}^{{\rm dl}\,H}\right)^{-1}\bW_{\rm MS}^{-1}
\end{align}
where $(a)$ comes from the definition of $\ubG_\mathcal{T}^{\rm dl} = \bW_{\rm MS}\ubH_\mathcal{T}^{\rm dl}\bW_{\rm BS}^H$ and $(b)$ follows from the fact that $\bW_{\rm MS}$, $\bW_{\rm BS}$, and $\ubH_\mathcal{T}^{\rm dl}\ubH_\mathcal{T}^{{\rm dl}\,H}$ are invertible.
Then, the covariance matrix of $\ubq^{\rm dl}$ becomes \cite{fletcher2007robust,fan2015uplink}
\begin{align}
	\nonumber
	\bR_{\underline{\bf q}^{{\rm dl}}\underline{\bf q}^{{\rm dl}}} &= \alpha_b(1-\alpha_b){\rm diag}\left\{\bbE[\ubr^{\rm dl}\ubr^{{\rm dl}\,H}]\right\} \\
	\nonumber
%	&= \alpha_b(1-\alpha_b){\rm diag}\big\{p_\mathcal{T}\ubH_\mathcal{T}^{\rm dl}\bW_{\rm BS}^H\ubW_{\rm BB}\bbE[\ubs^{\rm dl}\ubs^{{\rm dl}\, H}]\ubW_{\rm BB}^H\bW_{\rm BS}\ubH_\mathcal{T}^{{\rm dl}\,H} + \bbE[\ubn^{\rm dl}\ubn^{{\rm dl}\,H}]\big\}\\
	& = \alpha_b(1-\alpha_b){\rm diag}\big\{p_\mathcal{T}\ubH_\mathcal{T}^{\rm dl}\bW_{\rm BS}^H\ubW_{\rm BB}\ubW_{\rm BB}^H\bW_{\rm BS}\ubH_\mathcal{T}^{{\rm dl}\,H} + \bI_{N_{\rm sc}N_{\rm MS}}\big\}\\
	\label{eq:Rqq_dl_OFDM}
	& \stackrel{(a)}= 
%	 \alpha_b(1-\alpha_b){\rm diag}\big\{p_\mathcal{T}\bI_{N_{\rm sc}N_{\rm MS}}+ \bI_{N_{\rm sc}N_{\rm MS}}\big\}\\
	 \alpha_b(1-\alpha_b)(p_\mathcal{T}+1)\bI_{N_{\rm sc}N_{\rm MS}}
\end{align}
where $(a)$ follows from \eqref{eq:WBB_simple}. 
Finally, using \eqref{eq:Rqq_dl_OFDM}, the covariance matrix $\bR_{\bv_n^{\rm dl}\bv_n^{\rm dl}}$ becomes $\bR_{\bv_n^{\rm dl}\bv_n^{\rm dl}} = (\alpha_b+\alpha_b(1-\alpha_b)p_\mathcal{T})\bI_{N_{\rm MS}}$.
%\begin{align}
%	\nonumber
%	\bR_{\bv_n^{\rm dl}\bv_n^{\rm dl}} = (\alpha_b+\alpha_b(1-\alpha_b)p_\mathcal{T})\bI_{N_{\rm MS}}.
%\end{align}
Accordingly, the SINR of user $u$ for $n$th subcarrier is given as
\begin{align}
	\label{eq:SINR_dl_OFDM}
	{\rm SINR}_{u,n}(\cT) = \frac{\alpha_b p_\mathcal{T}}{1+(1-\alpha_b)p_\mathcal{T}}.	
\end{align}

Using \eqref{eq:SINR_dl_OFDM}, we formulate the transmit antenna selection problem for the OFDM system as
\begin{align}
	\nonumber
	\cP 3:\quad\quad \cT^\star_{\rm ofdm} = \argmax_{\mathcal{T}\subseteq \mathcal{S} :|\mathcal{T}| = N_t\geq N_{\rm MS}}\cR^{\rm dl,ofdm}(\cT)
\end{align}
where $\cR^{\rm dl,ofdm}(\cT) = \frac{1}{N_{\rm sc}} \sum_{n=1}^{N_{\rm sc}}\sum_{u=1}^{N_{\rm MS}}\log_2\big(1+{\rm SINR}_{u,n}(\cT)\big)$ is the average sum rate.
From \eqref{eq:SINR_dl_OFDM}, it can be shown that the achievable rate is equal for all $u$ and $n$. 
Consequently, maximizing the sum rate is equivalent to maximizing the SINR in \eqref{eq:SINR_dl_OFDM}, and we need to select transmit antennas that maximize the transmit power $p_\mathcal{T}$.
We consider that the total transmit power is constrained by $P$ as ${\rm tr}\{\bbE[\ubx^{\rm dl}\ubx^{{\rm dl}\, H}]\}\leq P$. 
Assuming equal power allocation for each user and subcarrier, we have ${\rm tr}\big\{\bbE[\ubx^{\rm dl}\ubx^{{\rm dl}\, H}]\big\} = p_\mathcal{T}{\rm tr}\big\{\bW_{\rm BS}^H\ubW_{\rm BB}\ubW_{\rm BB}^H\bW_{\rm BS}\big\}
 = p_\mathcal{T} {\rm tr}\big\{(\ubH_\mathcal{T}^{\rm dl}\ubH_\mathcal{T}^{{\rm dl}\, H})^{-1}\big\}$
and thus, the power allocation $p_\mathcal{T}$ with maximum transmit power is given as
\begin{align}
	\label{eq:pt_dl_ofdm}
	p_\mathcal{T} = \frac{P}{{\rm tr}\big\{(\ubH_\mathcal{T}^{\rm dl}\ubH_\mathcal{T}^{{\rm dl}\, H})^{-1}\big\}}.
\end{align} 
\begin{remark}
The transmit power in \eqref{eq:pt_dl_ofdm} shows that the transmit antenna selection problem for DL OFDM communications in low-resolution ADC systems with ZF precoding and equal power allocation is equivalent to that in high-resolution ADC systems.
\end{remark}
Accordingly, any state-of-the-art transmit antenna selection methods for high-resolution ADC OFDM systems with ZF-precoding can be employed for low-resolution ADC OFDM systems, which was also true for narrowband communications as shown in Section~\ref{sec:DL_txantenna}.
In addition, we note that the analysis derived in Section~\ref{subsec:DL_analysis} also holds for the DL OFDM systems.
%, and $(a)$ follows from $\bbE[\ubn^{\rm dl}\ubn^{{\rm dl}\, H}] = \bI_{N_{\rm sc}N_{\rm MS}}$
\begin{corollary}
	\label{cor:monotonic_dl_ofdm}
	For the multiuser DL OFDM system with ZF precoding and equal power distribution in \eqref{eq:r_dl_zf_ofdm}, the maximum achievable sum rate of MSs with low-resolution ADCs is monotonically increasing with the number of selected transmit antennas:
	\begin{align}
		\nonumber
		\cR^{\rm dl,ofdm}(\cT_{\rm opt1}) < \cR^{\rm dl,ofdm}(\cT_{\rm opt2})
	\end{align}
	where $\cT_{\rm opt1}$ and $\cT_{\rm opt2}$ are the optimal antenna subsets with $|\cT_{\rm opt1}| < |\cT_{\rm opt2}|$.
\end{corollary}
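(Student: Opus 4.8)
The plan is to mirror the narrowband argument of Theorem~\ref{thm:monotonic}, since the OFDM sum rate inherits exactly the same functional form. First I would note that, because ${\rm SINR}_{u,n}(\cT)$ in \eqref{eq:SINR_dl_OFDM} depends on neither the user index $u$ nor the subcarrier index $n$, the average sum rate collapses to
\begin{align}
	\nonumber
	\cR^{\rm dl,ofdm}(\cT)=N_{\rm MS}\log_2\!\left(1+\frac{\alpha_b p_\mathcal{T}}{1+(1-\alpha_b)p_\mathcal{T}}\right),
\end{align}
which is structurally identical to \eqref{eq:sumrate_dl} with $p_\mathcal{T}$ given by \eqref{eq:pt_dl_ofdm}. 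Since $0<\alpha_b<1$, the map $p\mapsto\alpha_b p/(1+(1-\alpha_b)p)$ has positive derivative $\alpha_b/(1+(1-\alpha_b)p)^2$, so the rate is strictly increasing in $p_\mathcal{T}$. Hence it suffices to show that enlarging the antenna subset strictly increases $p_\mathcal{T}$, equivalently strictly decreases ${\rm tr}\big\{(\ubH_\mathcal{T}^{\rm dl}\ubH_\mathcal{T}^{{\rm dl}\,H})^{-1}\big\}$.

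Next I would reduce the block-circulant trace to a per-subcarrier sum. Using the unitarity of $\bW_{\rm MS}=\bW_{\rm DFT}\otimes\bI_{N_{\rm MS}}$ and $\bW_{\rm BS}=\bW_{\rm DFT}\otimes\bI_{N_t}$ together with $\ubG_\mathcal{T}^{\rm dl}=\bW_{\rm MS}\ubH_\mathcal{T}^{\rm dl}\bW_{\rm BS}^H={\rm BlkDiag}\{\bG_{\mathcal{T},1}^{\rm dl},\dots,\bG_{\mathcal{T},N_{\rm sc}}^{\rm dl}\}$, I obtain $\ubH_\mathcal{T}^{\rm dl}\ubH_\mathcal{T}^{{\rm dl}\,H}=\bW_{\rm MS}^H\ubG_\mathcal{T}^{\rm dl}\ubG_\mathcal{T}^{{\rm dl}\,H}\bW_{\rm MS}$, so the trace is invariant under this unitary similarity and decomposes across the block diagonal:
\begin{align}
	\nonumber
	{\rm tr}\big\{(\ubH_\mathcal{T}^{\rm dl}\ubH_\mathcal{T}^{{\rm dl}\,H})^{-1}\big\}=\sum_{n=1}^{N_{\rm sc}}{\rm tr}\big\{(\bG_{\mathcal{T},n}^{\rm dl}\bG_{\mathcal{T},n}^{{\rm dl}\,H})^{-1}\big\}.
\end{align}
This isolates the comparison to a single subcarrier at a time.

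Then I would apply Lemma~\ref{lem:trace} subcarrier by subcarrier. For $\cT_1\subset\cT_2\subseteq\cS$ with $\bar\cT=\cT_2-\cT_1$, the columns of $\bG_{\mathcal{T}_2,n}^{\rm dl}$ indexed by $\cT_1$ form $\bG_{\mathcal{T}_1,n}^{\rm dl}$ and those indexed by $\bar\cT$ form the $N_{\rm MS}\times|\bar\cT|$ submatrix $\bG_{\bar{\mathcal{T}},n}^{\rm dl}$; since $\bG\bG^H$ is invariant under column permutation, $\bG_{\mathcal{T}_2,n}^{\rm dl}\bG_{\mathcal{T}_2,n}^{{\rm dl}\,H}=\bG_{\mathcal{T}_1,n}^{\rm dl}\bG_{\mathcal{T}_1,n}^{{\rm dl}\,H}+\bG_{\bar{\mathcal{T}},n}^{\rm dl}\bG_{\bar{\mathcal{T}},n}^{{\rm dl}\,H}$. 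Taking $\tau=0$, $\bH=\bG_{\mathcal{T}_2,n}^{\rm dl}$, and $\tilde{\bH}=\bG_{\bar{\mathcal{T}},n}^{\rm dl}$ in Lemma~\ref{lem:trace}---valid because each frequency-domain channel $\bG_{\mathcal{T}_2,n}^{\rm dl}$ has full row rank $N_{\rm MS}$ and $1\le|\bar\cT|\le|\cT_2|-N_{\rm MS}$ follows from $|\cT_1|\ge N_{\rm MS}$---yields ${\rm tr}\big\{(\bG_{\mathcal{T}_1,n}^{\rm dl}\bG_{\mathcal{T}_1,n}^{{\rm dl}\,H})^{-1}\big\}-{\rm tr}\big\{(\bG_{\mathcal{T}_2,n}^{\rm dl}\bG_{\mathcal{T}_2,n}^{{\rm dl}\,H})^{-1}\big\}>0$ for each $n$. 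Summing over $n$ gives $p_{\mathcal{T}_1}<p_{\mathcal{T}_2}$, hence $\cR^{\rm dl,ofdm}(\cT_1)<\cR^{\rm dl,ofdm}(\cT_2)$.

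Finally, the optimality step is the same as in Theorem~\ref{thm:monotonic}: choosing any $\cT_2$ with $\cT_{\rm opt1}\subset\cT_2$ and $|\cT_2|=|\cT_{\rm opt2}|$ gives $\cR^{\rm dl,ofdm}(\cT_{\rm opt1})<\cR^{\rm dl,ofdm}(\cT_2)\le\cR^{\rm dl,ofdm}(\cT_{\rm opt2})$, where the last inequality follows from the optimality of $\cT_{\rm opt2}$. The main obstacle is the middle step: one must confirm that the block-circulant channel genuinely reduces, via the DFT, to a per-subcarrier column augmentation so that the rank and dimension hypotheses of Lemma~\ref{lem:trace} remain intact on every subcarrier. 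Once this reduction is secured, the monotonicity in $p_\mathcal{T}$ and the optimality chaining transfer verbatim from the narrowband case.
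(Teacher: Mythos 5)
Your proof is correct, but it takes a genuinely different route from the paper's. The paper's proof is a one-line substitution: it replaces $\bH_\mathcal{T}^{\rm dl}$ by the stacked time-domain block-circulant channel $\ubH_\mathcal{T}^{\rm dl}$ and reruns the proof of Theorem~\ref{thm:monotonic} verbatim, which amounts to invoking Lemma~\ref{lem:trace} a single time at the enlarged dimensions $m = N_{\rm sc}N_{\rm MS}$, $\ell = N_{\rm sc}|\bar\cT|$, $n = N_{\rm sc}|\cT_2|$ (adding one antenna appends $N_{\rm sc}$ interleaved columns to $\ubH_\mathcal{T}^{\rm dl}$, which is admissible since the lemma allows any column sub-matrix and the Gram matrix is permutation-invariant). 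You instead diagonalize first: using the unitarity of $\bW_{\rm DFT}\otimes\bI$, you convert ${\rm tr}\{(\ubH_\mathcal{T}^{\rm dl}\ubH_\mathcal{T}^{{\rm dl}\,H})^{-1}\}$ into the per-subcarrier sum $\sum_{n=1}^{N_{\rm sc}}{\rm tr}\{(\bG_{\mathcal{T},n}^{\rm dl}\bG_{\mathcal{T},n}^{{\rm dl}\,H})^{-1}\}$ and apply Lemma~\ref{lem:trace} $N_{\rm sc}$ times at the narrowband dimensions $N_{\rm MS}\times|\cT_2|$. The two routes are unitarily equivalent, but yours buys explicitness: the rank hypothesis becomes full row rank of each frequency-domain channel $\bG_{\mathcal{T}_2,n}^{\rm dl}$ (exactly the per-subcarrier ZF feasibility condition already implicit in the system model), and the fact that enlarging $\cT$ is a column augmentation on every subcarrier is verified rather than assumed; the paper's substitution is more compact but leaves these checks to the reader. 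Your reduction to monotonicity in $p_\mathcal{T}$ and the final optimality chaining $\cR^{\rm dl,ofdm}(\cT_{\rm opt1})<\cR^{\rm dl,ofdm}(\cT_2)\le\cR^{\rm dl,ofdm}(\cT_{\rm opt2})$ match the paper's argument exactly.
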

\begin{proof}
	We replace $\bH_\mathcal{T}^{\rm dl}$ in the proof of Theorem~\ref{thm:monotonic} with $\ubH_\mathcal{T}^{\rm dl}$ and follow the same proof.
\end{proof}
According to Corollary~\ref{cor:monotonic_dl_ofdm}, we need to use as many antennas at the BS for DL OFDM systems with ZF-precoding to maximize the achievable rate even with quantization error at the MSs.

%%%%%%%%%%%%%%%%%%%%%%%%%%%%%%%%%%
\subsection{Uplink ODFM Communications}
\label{sec:extension_UL}
%%%%%%%%%%%%%%%%%%%%%%%%%%%%%%%%%%

Similarly to the DL OFDM system model with low-resolution ADCs derived in the previous section, the UL ODFM system with low-resolution ADCs  can be modeled as follows \cite{prasad2019optimizing}. 
Let $\bx_n^{\rm ul} \in \bbC^{N_{\rm MS}}$ be a vector of the OFDM symbols of $N_{\rm MS}$ MSs at time $n$. 
%The OFDM symbol vectors $\bx_n^{\rm ul}$ stacked for $N_{\rm sc}$ time duration is 
Let $\ubx^{\rm ul} = [\bx_1^{{\rm ul}\,T}, \bx_2^{{\rm ul}\,T}, \dots, \bx_{N_{\rm sc}}^{{\rm ul}\,T}]^T \in \bbC^{N_{\rm sc}N_{\rm MS}}$, which is given as
\begin{align}
	\nonumber
	\ubx^{\rm ul} = \sqrt{\rho}(\bW_{\rm DFT}^H\otimes\bI_{N_{\rm MS}})\ubs^{\rm ul}
\end{align}
where $\ubs^{\rm ul} = [\bs_1^{{\rm ul}\,T}, \bs_2^{{\rm ul}\,T}, \dots, \bs_{N_{\rm sc}}^{{\rm ul}\,T}]^T \in \bbC^{N_{\rm sc}N_{\rm MS}}$ and $\bs_n^{{\rm ul}} = [s_{n,1}^{\rm ul},s_{n,2}^{\rm ul},\dots,s_{n,N_{\rm MS}}^{\rm ul}]^T$.
%where $s_{n,m}$ is the transmit symbol of MS $m$ at the $n$th subcarrier.

Let the analog received signals at the BS with $N_r$ selected antennas in $\cK$ after CP removal at time $n$ be $\br_n^{\rm ul} \in \bbC^{N_r}$. 
We stack the vector of received signals $\br_n^{\rm ul}$ for $N_{\rm sc}$ time duration as
\begin{align}
	\nonumber
	\ubr^{\rm ul} &= \ubH_{\mathcal{K}}^{\rm ul}\ubx^{\rm ul} + \ubn^{\rm ul}\\
	\nonumber
	&= \sqrt{\rho}\ubH_{\mathcal{K}}^{\rm ul}(\bW_{\rm DFT}^H\otimes\bI_{N_{\rm MS}})\ubs^{\rm ul} + \ubn^{\rm ul}
\end{align}
where $\ubr^{\rm ul} = [\br_1^{{\rm ul}\,T}, \br_2^{{\rm ul}\,T}, \dots, \br_{N_{\rm sc}}^{{\rm ul}\,T}]^T \in \bbC^{N_{\rm sc}N_r}$, and the UL channel matrix in the time domain for $N_r$ selected antennas $\ubH_{\mathcal{K}}^{\rm ul} \in \bbC^{N_{\rm sc}N_r \times N_{\rm sc}N_{\rm MS}}$ is given as 
\begin{align}
	\nonumber
	\ubH_{\mathcal{K}}^{\rm ul} = {\rm BlkCirc}\big\{\bH_{\mathcal{K},0}^{\rm ul}, {\bf 0}, \cdots, {\bf 0}, \bH_{\mathcal{K},L-1}^{\rm ul}, \cdots,\bH_{\mathcal{K},1}^{\rm ul}\big\}
\end{align}
where $\bH_{\mathcal{K},\ell}^{\rm ul}$ is the UL channel matrix of the selected antennas for the $(\ell+1)$th channel tap, $L$ is the number of channel taps, and $\ubn^{\rm ul} = [\bn_1^{{\rm ul}\,T}, \bn_2^{{\rm ul}\,T}, \dots, \bn_{N_{\rm sc}}^{{\rm ul}\,T}]^T \in \bbC^{N_{\rm sc}N_r}$ denotes the vector of the AWGN noise vectors.
%\begin{align}
%	\ubH_{\mathcal{K}} = 
%	\begin{bmatrix}
%		\bH_0 & {\bf 0}, & \cdots & {\bf 0} & \bH_{L-1}, & \cdots & \bH_1 \\
%		\bH_1 & \bH_0, & {\bf 0} & \cdots & {\bf 0} & \bH_{L-1} & \bH_1 \\
%		\vdots & \ddots & \ddots &\ddots & \ddots & \ddots & \vdots \\
%		\bH_0 & {\bf 0}, & \cdots & {\bf 0} & \bH_{L-1}, & \cdots & \bH_1
%	\end{bmatrix}
%\end{align}

After quantization, the quantized OFDM signals are expressed by adopting the AQNM as \cite{fletcher2007robust} 
\begin{align}
	\nonumber
	\uby^{\rm ul} 
%	&= \alpha_b\ubr^{\rm ul} + \ubq^{\rm ul} \\ \nonumber
	= \alpha_b\sqrt{\rho}\ubH_\mathcal{K}^{\rm ul}(\bW_{\rm DFT}^H\otimes\bI_{N_{\rm MS}})\ubs^{\rm ul} + \alpha_b\ubn^{\rm ul} + \ubq^{\rm ul}
\end{align} 
where $\ubq^{\rm ul} =[\bq_1^{{\rm ul}\,T}, \bq_2^{{\rm ul}\,T}, \dots, \bq_{N_{\rm sc}}^{{\rm ul}\,T}]^T \in \bbC^{N_{\rm sc}N_r}$ is the additive quantization noise vector and $\bq_n^{\rm ul} \sim \cC\cN({\bf 0}, \bR_{\bq_n^{\rm ul}\bq_n^{\rm ul}})$.
The covariance matrix $\bR_{\bq_n^{\rm ul}\bq_n^{\rm ul}}$ is derived as \cite{fletcher2007robust}
%=  \alpha_b(1-\alpha_b){\rm diag}\{\bbE[\br_n\br_n^H]\}
\begin{align}
	\nonumber
	\bR_{\bq_n^{\rm ul}\bq_n^{\rm ul}} &=  \alpha_b(1-\alpha_b){\rm diag}\{\bbE[\br_n^{\rm ul}\br_n^{{\rm ul}\,H}]\}\\
%	&=  \alpha_b(1-\alpha_b){\rm diag}\big\{\bI_{N_r} + \rho\bB_\mathcal{K}(\bW_{\rm DFT}^H\otimes \bI_{N_{\rm MS}})(\bW_{\rm DFT}\otimes \bI_{N_{\rm MS}})\bB_\mathcal{K}^H\big\} 	\\
	\label{eq:Rqq_uplink_wide}
	& = \alpha_b(1-\alpha_b){\rm diag}\big\{ \rho\bB_\mathcal{K}\bB_\mathcal{K}^H+\bI_{N_r}\big\}
\end{align}
where $\bB_\mathcal{K} = [\bH_{\mathcal{K},0}^{\rm ul}, {\bf 0}, \cdots, {\bf 0}, \bH_{\mathcal{K},L-1}^{\rm ul}, \cdots,\bH_{\mathcal{K},1}^{\rm ul}]$.
We note that $\bR_{\bq_n^{\rm ul}\bq_n^{\rm ul}}= \bR_{\bq_{m}^{\rm ul}\bq_{m}^{\rm ul}}, \forall n \neq m$, i.e., $\bR_{\bq_n^{\rm ul}\bq_n^{\rm ul}}$ is independent to subcarriers.
Finally, $\uby^{\rm ul}$ is combined through a DFT matrix as
\begin{align}
	\nonumber
	\ubz^{\rm ul} &= (\bW_{\rm DFT} \otimes \bI_{N_r})\uby^{\rm ul} \\
%	& =  \alpha_b(\bW_{\rm DFT} \otimes \bI_K)\ubH_\mathcal{K}\ubx + (\bW_{\rm DFT} \otimes \bI_K)(\alpha_b\ubn + \ubq )\\
	\nonumber
	& = \alpha_b\sqrt{\rho}(\bW_{\rm DFT} \otimes \bI_{N_r})\ubH_\mathcal{K}^{\rm ul}(\bW_{\rm DFT}^H\otimes\bI_{N_{\rm MS}})\ubs^{\rm ul} + (\bW_{\rm DFT} \otimes \bI_{N_r})(\alpha_b\ubn^{\rm ul} + \ubq^{\rm ul} ) \\ 
	\nonumber
	& = \alpha_b\sqrt{\rho}\ubG_\mathcal{K}^{\rm ul} \ubs^{\rm ul} +  \ubv^{\rm ul}
\end{align}
where $\ubG_\mathcal{K}^{\rm ul} = (\bW_{\rm DFT} \otimes \bI_{N_r})\ubH_\mathcal{K}^{\rm ul}(\bW_{\rm DFT}^H\otimes\bI_{N_{\rm MS}}) = {\rm BlkDiag}\{\bG_{\mathcal{K},1}^{\rm ul},\cdots,\bG_{\mathcal{K},N_{\rm sc}}^{\rm ul}\}$, $\bG_{\mathcal{K},n}^{\rm ul} = \sum_{\ell = 0}^{L-1}\bH_{\mathcal{K},\ell}^{\rm ul} e^{-\frac{j2\pi(n-1)\ell}{N_{\rm sc}}}$, and $\ubv^{\rm ul} = (\bW_{\rm DFT} \otimes \bI_{N_r})(\alpha_b\ubn^{\rm ul} + \ubq^{\rm ul} ) = [\bv_1^{{\rm ul}\,T},\cdots,\bv_{N_{\rm sc}}^{{\rm ul}\,T}]^T$.
Accordingly, under coarse quantization, the received digital signal after DFT for subcarrier $n$ becomes 
\begin{align}
	\label{eq:zn_ul_ofdm}
	\bz_n^{\rm ul} = \alpha_b\sqrt{\rho}\bG_{\mathcal{K},n}^{\rm ul} \bs_n^{\rm ul} + \bv_n^{\rm ul}.
\end{align}
The covariance matrix of $\bv_n^{\rm ul}$ is derived as $\bR_{\bv_n^{\rm ul}\bv_n^{\rm ul}}= \alpha_b^2\bI_{N_r} + \bR_{\bq_n^{\rm ul}\bq_n^{\rm ul}}$
%\begin{align}
%	\nonumber
%	\bR_{\bv_n^{\rm ul}\bv_n^{\rm ul}}= \alpha_b^2\bI_{N_r} + \bR_{\bq_n^{\rm ul}\bq_n^{\rm ul}}
%\end{align}
where $\bR_{\bq_n^{\rm ul}\bq_n^{\rm ul}}$ is defined in \eqref{eq:Rqq_uplink_wide}.
Using \eqref{eq:zn_ul_ofdm}, the UL capacity for subcarrier $n$ is derived as
\begin{align}
	\label{eq:Rn_uplink_wide}
	\cR_n^{\rm ul}(\cK) = \log_2\Big|\bI_{N_r} + \rho\alpha_b^2(\alpha^2_b \bI_{N_r} + \bR_{\bq_n^{\rm ul}\bq_n^{\rm ul}})^{-1}\bG_{\mathcal{K},n}^{\rm ul}\bG_{\mathcal{K},n}^{{\rm ul}\,H}\Big|.
\end{align}
Note that the capacity of the wideband OFDM system for each subcarrier in \eqref{eq:Rn_uplink_wide} shows similar structure as that of the narrowband system in \eqref{eq:capacity}.

%%%%%
Since all subcarriers share a same subset of antennas, i.e., $\cK$ is same for all subcarriers, the maximization cannot be applied to each subcarrier separately. 
Accordingly, we need to find the best subset of antennas $\cK$ for the entire subcarriers, and the receive antenna selection problem for the wideband UL OFDM system is formulated as
\begin{align}
	\label{eq:problem_ofdm_uplink}
	\cP 4:\quad\quad\cK^{\star}_{\rm ofdm} = \argmax_{\mathcal{K}\subseteq \mathcal{S}:|\mathcal{K}|= N_r\geq N_{\rm MS}} \sum_{n=1}^{N_{\rm sc}} \cR_n^{\rm ul}(\cK).
\end{align}
To solve \eqref{eq:problem_ofdm_uplink}, we extend the greedy approach for the narrowband communications in Section~\ref{sec:UL_rxantenna}.
%, which provides a sub-optimal solution. 
We also show that the MCMC approach can be naturally adopted with modification.

Similarly to \eqref{eq:greedy-max},  let $\bG^{\rm ul}_{\mathcal{K}_t\cup \{j\},n}$ be the channel matrix of $t$ selected antennas during the first $t$ greedy selections and a candidate antenna $j\in \mathcal{S}\setminus \mathcal{K}_t$ at the next selection. 
Then, the greedy maximization problem is formulated as
\begin{align}
	\label{eq:problem_greedy_ul_ofdm}
%	J = \argmax_{j \in \mathcal{S}\setminus \mathcal{K}(1:n)}  \log_2 \Big|{\bf I}_K +  \rho\alpha_b{\bf D}_{\mathcal{K}(1:n)\cup \{j\}}^{-1}\bH_{\mathcal{K}(1:n)\cup \{j\}}\bH_{\mathcal{K}(1:n)\cup \{j\}}^H \Big|.
	J = \argmax_{j \in \mathcal{S}\setminus \mathcal{K}_t} \sum_{n=1}^{N_{\rm sc}}\cR_n^{\rm ul}\left({\cK}_t\cup \{j\}\right).
\end{align}
Now, we decompose \eqref{eq:Rn_uplink_wide}.
Let $\bar{\bD}_{\mathcal{K}_t\cup\{j\}}  = \bI_{t+1} + \rho(1-\alpha_b){\rm diag}\{\bB_{\mathcal{K}_t\cup \{j\}}\bB_{\mathcal{K}_t\cup \{j\}}^H\}$. 
At the $(t+1)$th selection stage, we have
%\vspace{-0.5em}
\begin{align}
	\nonumber
    \cR_n^{\rm ul}(\cK_t\cup \{j\}) 
%    & = \log_2 \Big|{\bf I}_{t+1} + \rho\alpha_b \left(\bI_{t+1} + \rho(1-\alpha_b){\rm diag}\{\bB_{\mathcal{K}_t\cup \{j\}}\bB_{\mathcal{K}_t\cup \{j\}}^H\}\right)^{-1}\bG^{\rm ul}_{\mathcal{K}_t\cup \{j\},n}\bG_{\mathcal{K}_t\cup \{j\},n}^{{\rm ul}\,H} \Big|\ \\ 
%    \nonumber
	& = \log_2 \Big|{\bf I}_{N_{\rm MS}} + \rho\alpha_b{\bf G}^{{\rm ul}\,H}_{{\mathcal{K}_t\cup \{j\}},n} \bar{\bf D}^{-1}_{{\mathcal{K}_t\cup \{j\}}}{\bf G}^{\rm ul}_{{\mathcal{K}_t\cup \{j\}},n} \Big|\\
	\nonumber
	& = \log_2 \biggl|{\bf I}_{N_{\rm MS}} + \rho\alpha_b\Bigl({\bf G}^{{\rm ul}\,H}_{\mathcal{K}_t,n} \bar{\bf D}^{-1}_{\mathcal{K}_t}{\bf G}^{\rm ul}_{\mathcal{K}_t,n}\! +\! \frac{1}{\bar{d}_{j}}{\bf f}_{n,j}{\bf f}^H_{n,j}\Bigr)\biggr|\\
%	\nonumber
%	& = \log_2 \biggl|{\bf I}_{N_{\rm MS}} + \rho\alpha_b{\bf G}^{{\rm ul}\,H}_{\mathcal{K}_t,n} \bar{\bf D}^{-1}_{\mathcal{K}_t}{\bf G}_{\mathcal{K}_t,n}^{\rm ul}\biggr|+ \log_2\biggl(1\!+\!\frac{\rho \alpha_b}{\bar d_j}c_{n,t}(j) \biggr)\\
	\label{eq:rate_reduced_ul_ofdm}
	&= \cR^{\rm ul}_n(\cK_t) + \log_2\biggl(1\!+\!\frac{\rho \alpha_b}{d_j}c_{n,t}(j) \biggr)
\end{align}
where  ${\bf f}^H_{n,j}$ is $j$th row of ${\bf G}_n^{\rm ul}$, $\bar{d}_{j}$ is the corresponding diagonal entry of $\bar{\bf D}_{{\mathcal{K}_t\cup \{j\}}}$, and $c_{n,t}(j)$ is  
\begin{align}
	\label{eq:capacity gain_ofdm}
	c_{n,t}(j) ={\bf f}^H_{n,j} \Bigl({\bf I}_{N_{\rm MS}}\! +\! \rho\alpha_b{\bf G}^{{\rm ul}\,H}_{\mathcal{K}_t,n} \bar{\bf D}^{-1}_{\mathcal{K}_t}{\bf G}^{\rm ul}_{\mathcal{K}_t,n}\Bigr)^{-1}{\bf f}_{n,j}.
\end{align}
With \eqref{eq:rate_reduced_ul_ofdm}, the greedy maximization problem  in \eqref{eq:problem_greedy_ul_ofdm} reduces to
\begin{align}
	\label{eq:problem3_ul_ofdm}
	J = \argmax_{j\in \mathcal{S}\setminus \mathcal{K}_t} \sum_{n=1}^{N_{\rm sc}} \log_2\left(1+\frac{\rho\alpha_b}{d_j}c_{n,t}(j)\right).
\end{align}
Therefore, a greedy algorithm that is similar to Algorithm \ref{algo:QFAS} can be used for \eqref{eq:problem3_ul_ofdm}.
In addition, let $\bQ_{n,t}=({\bf I}_{N_{\rm MS}}\! +\! \rho\alpha_b{\bf G}^{{\rm ul}\,H}_{\mathcal{K}_t,n} \bar{\bf D}^{-1}_{\mathcal{K}_t}{\bf G}^{\rm ul}_{\mathcal{K}_t,n})^{-1}$. 
Then, $c_{n,t}(j)$ in \eqref{eq:capacity gain_ofdm} can also be updated without matrix inversion for each subcarrier as shown in Algorithm~\ref{algo:QFAS}.
Accordingly, the complexity of the proposed QFAS algorithm for the UL OFDM system becomes $\cO(N_{\rm sc} N_r N_{\rm MS} N_{\rm BS})$.
% assuming $N_{\rm MS} \gg N_{\rm MS}$.
\begin{corollary}\label{cor:submodular_wide}
	The capacity of the QFAS method for the UL OFDM system is lower bounded by
	\begin{align}
		\label{eq:lowerbound_ofdm}
		\sum_{n=1}^{N_{\rm sc}}\cR^{\rm ul}_n(\cK_{\rm qfas}) \geq \left(1-\frac{1}{e}\right)\sum_{n=1}^{N_{\rm sc}}\cR^{\rm ul}_n(\cK^\star_{\rm ofdm})
	\end{align}
	where $\cK^\star_{\rm ofdm}$ is the optimal subset of receive antennas defined in \eqref{eq:problem_ofdm_uplink}.
\end{corollary}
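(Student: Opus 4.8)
The plan is to recognize that the per-subcarrier capacity $\cR_n^{\rm ul}(\cK)$ in \eqref{eq:Rn_uplink_wide} is structurally identical to the narrowband capacity $\cR^{\rm ul}(\cK)$ in \eqref{eq:capacity}, so that the argument used in Corollary~\ref{cor:submodular_flat} applies verbatim to each summand, and then to appeal to the fact that submodularity, monotonicity, nonnegativity, and normalization are all preserved under finite sums. First I would define the objective $f(\cK) = \sum_{n=1}^{N_{\rm sc}}\cR_n^{\rm ul}(\cK)$ and observe that \eqref{eq:Rn_uplink_wide} is obtained from \eqref{eq:capacity} simply by substituting $\bH_\mathcal{K}^{\rm ul}\to\bG_{\mathcal{K},n}^{\rm ul}$ and $\bR_{\bq^{\rm ul}\bq^{\rm ul}}\to\bR_{\bq_n^{\rm ul}\bq_n^{\rm ul}}$.

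The essential property that made the entropy argument work in Corollary~\ref{cor:submodular_flat} is that $\bR_{\bq^{\rm ul}\bq^{\rm ul}}$ in \eqref{eq:Rqq_ul} is diagonal with a per-antenna structure, so that augmenting $\cK$ by one antenna appends a single coordinate to a jointly Gaussian vector whose covariance grows as a principal submatrix. I would verify that the OFDM quantization covariance $\bR_{\bq_n^{\rm ul}\bq_n^{\rm ul}}$ in \eqref{eq:Rqq_uplink_wide} shares exactly this property: it equals $\alpha_b(1-\alpha_b)\,{\rm diag}\{\rho\bB_\mathcal{K}\bB_\mathcal{K}^H + \bI_{N_r}\}$, whose $i$th diagonal entry depends only on the channel taps of the $i$th selected antenna and is independent of $n$. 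With this verified, the construction of Corollary~\ref{cor:submodular_flat} carries over for each fixed $n$: define ${\pmb \Gamma}_{\mathcal{K},n}$ analogously via the whitened Gram matrix of $\bG_{\mathcal{K},n}^{\rm ul}$, let $\bx_{\mathcal{K},n}\sim\cC\cN({\bf 0},{\pmb \Gamma}_{\mathcal{K},n})$, and use the conditional-entropy inequality $h(\bx_{\{s\},n}\mid\bx_{\mathcal{A},n})\geq h(\bx_{\{s\},n}\mid\bx_{\mathcal{B},n})$ for $\cA\subseteq\cB$ to conclude that $\cR_n^{\rm ul}(\cK)$ is submodular; it is also normalized, nonnegative, and monotone by the same reasoning.

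Then I would invoke that a finite sum of submodular (resp.\ monotone, nonnegative, normalized) set functions is again submodular (resp.\ monotone, nonnegative, normalized), so that $f(\cK)=\sum_{n=1}^{N_{\rm sc}}\cR_n^{\rm ul}(\cK)$ inherits all four properties. Since the greedy selection \eqref{eq:problem_greedy_ul_ofdm}--\eqref{eq:problem3_ul_ofdm} is precisely greedy maximization of $f$, Theorem~\ref{thm:submodular} yields $f(\cK_{\rm qfas})\geq(1-\tfrac{1}{e})f(\cK^\star_{\rm ofdm})$, which is \eqref{eq:lowerbound_ofdm}. The main obstacle is the per-subcarrier submodularity claim, and within it the single point worth checking is that the diagonal, subcarrier-independent, per-antenna form of $\bR_{\bq_n^{\rm ul}\bq_n^{\rm ul}}$ is what guarantees ${\pmb \Gamma}_{\mathcal{K},n}$ grows as a principal submatrix when an antenna is added; once that is in place the entropy submodularity is immediate and the summation step is purely formal.
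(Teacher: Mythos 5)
Your proposal is correct and follows essentially the same route as the paper's proof: per-subcarrier submodularity inherited from the narrowband argument of Corollary~\ref{cor:submodular_flat}, closure of submodularity (and monotonicity, nonnegativity, normalization) under nonnegative sums, equivalence of the QFAS update \eqref{eq:problem3_ul_ofdm} with the greedy maximization \eqref{eq:problem_greedy_ul_ofdm}, and then Theorem~\ref{thm:submodular}. The only difference is that you explicitly verify the diagonal, subcarrier-independent structure of $\bR_{\bq_n^{\rm ul}\bq_n^{\rm ul}}$ that makes the narrowband entropy argument carry over, a detail the paper leaves implicit.
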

\begin{proof}
	The class of submodular functions is closed under nonnegative linear combinations, and we showed that the capacity with the quantization error is submodular in the proof of Corollary~\ref{cor:submodular_flat}.
	Consequently, the sum capacity for all carrier frequencies in \eqref{eq:problem_ofdm_uplink} is also submodular. 
	Since the proposed QFAS for the wideband OFDM system solves \eqref{eq:problem3_ul_ofdm}, which is equivalent to the greedy maximization in \eqref{eq:problem_greedy_ul_ofdm}, from Theorem~\ref{thm:submodular}, we derive \eqref{eq:lowerbound_ofdm}.
\end{proof}

To find an approximated optimal solution, we can also use the adaptive MCMC approach described in \ref{subsec:mcmc}. 
To this end, the original PDF $\pi(\pmb \omega)$ needs to be modified as 
%to the sum capacity of the multiuser OFDM systems as
\begin{align}
	\label{eq:originalPDF_ofdm}
	\pi(\pmb \omega) \triangleq  \exp\left(\frac{1}{\tau}{\sum_{n=1}^{N_{\rm sc}}\mathcal{R}^{\rm ul}_n(\pmb \omega)}\right)/\Gamma_{\rm ofdm}
\end{align}
where $\tau$ is a rate constant and $\Gamma_{\rm ofdm}$ is a normalizing factor for the PDF.
Then, the adaptive MCMC-based antenna selection method for the OFDM system can be performed similarly to the QMCMC-AS method in \ref{subsec:mcmc}.
%Recall that it is not required to compute the normalizing factor $\Gamma_{\rm ofdm}$ for the method.
The complexity of the QMCMC-AS method for the OFDM system is $\cO(N_{\rm sc}N_rN_{\rm MS}^2N_{\rm MCMC}\tau_{\rm stop})$.
%%%%%%%%%%%%%%%%%%%%%%%%%%%%%%%%%%
\section{Simulation Results}
\label{sec:simulation}
%%%%%%%%%%%%%%%%%%%%%%%%%%%%%%%%%%

In this section, we validate the theoretical results and proposed methods.
We assume Rayleigh channels with a zero mean and unit variance for small scale fading.
For a large scale fading, we adopt the log-distance pathloss model \cite{erceg1999empirically}.
We consider randomly distributed MSs over a single cell with radius of $1km$.
We assume the minimum distance between the BS and MSs to be $100m$.
Considering a $2.4$ GHz carrier frequency with $10$ MHz bandwidth, we use $8.7$ dB lognormal shadowing variance and $12$ dB noise figure at receivers.

%%%%%%%%%%%%%%
\subsection{Downlink Transmit Antenna Selection}
\label{subsec:sim_dl_ofdm}
%%%%%%%%%%%%%%

%%%%%% FIGURE (TX power) %%%%%
\begin{figure}[!t]
\centering
$\begin{array}{c c}
{\resizebox{0.48\columnwidth}{!}
{\includegraphics{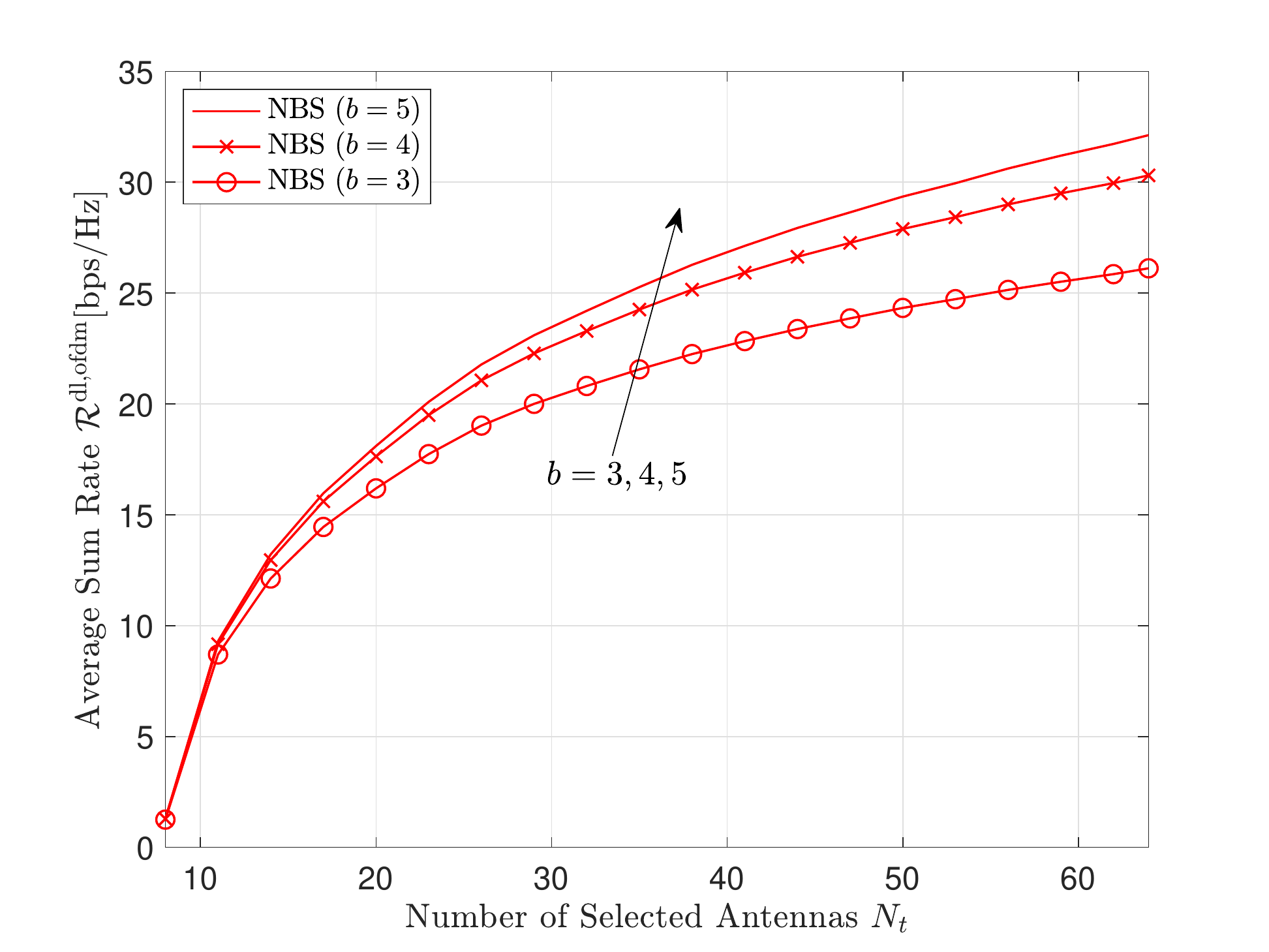}}
} &
{\resizebox{0.48\columnwidth}{!}
{\includegraphics{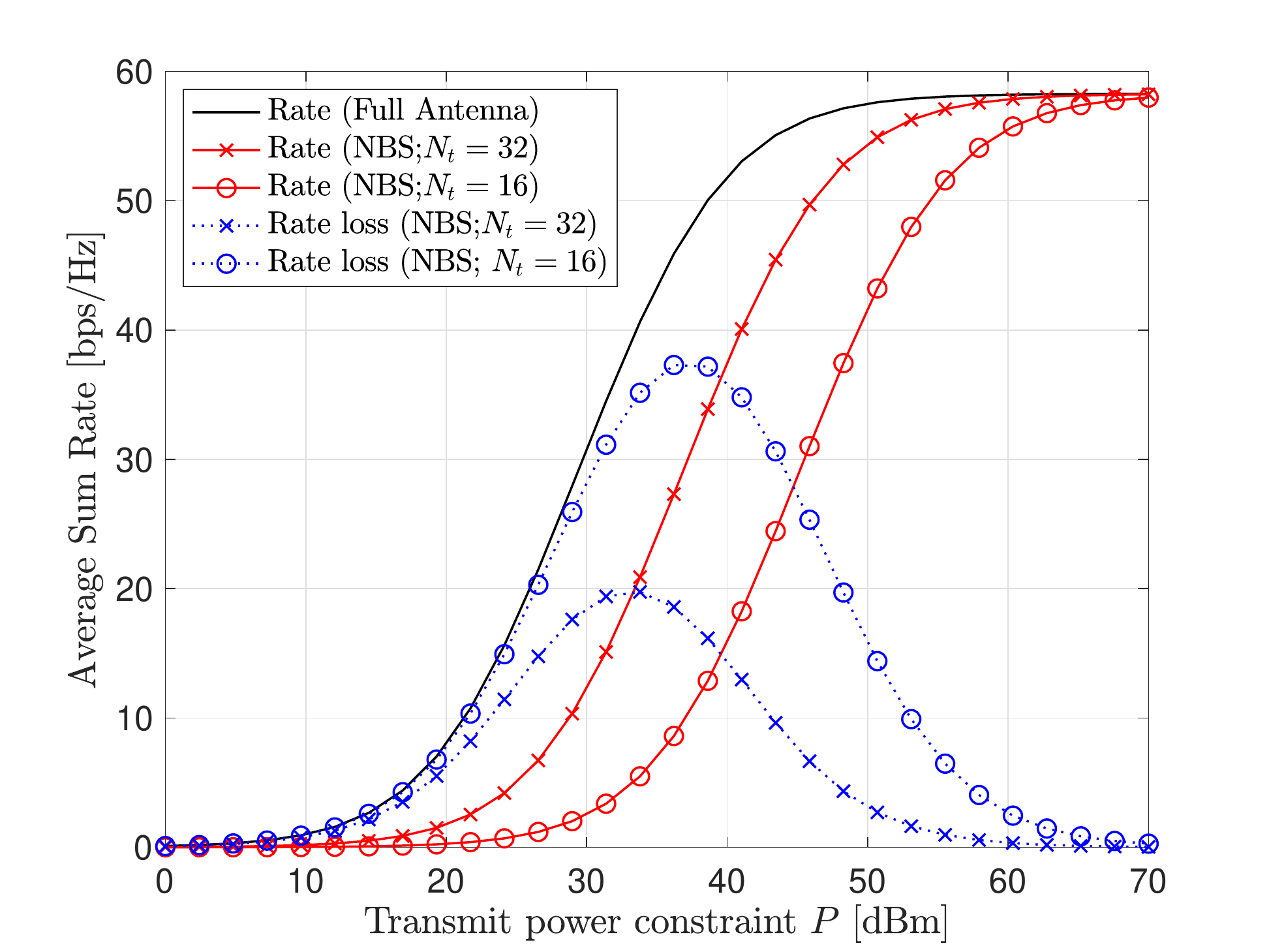}}
}\\
\mbox{\small (a)}& \mbox{\small (b)}
\end{array}$
\caption{
Average sum rate $\cR^{\rm dl,ofdm}$ (a) with respect to the number of selected antennas $N_t$ for $N_{\rm BS} = 64$ BS antennas, $N_{\rm MS} = 8$ MSs, $P = 30$ dBm total power constraint, and $b \in  \{3, 4, 5\}$ ADC bits, and (b) with respect to the total transmit power constraint $P$ for $N_{\rm BS} =128$ BS antennas, $N_{\rm MS} = 12$ MSs, $N_t =16$ selected antennas, and $b = 3$ ADC bits.} 
\label{fig:rate_dl_ofdm}
\end{figure}

We consider the DL ODFM system with $N_{\rm sc} = 64$ subcarriers for channels with $L = 4$ taps.
To validate the analysis, we use the norm-based selection (NBS) method in simulations, which selects antennas in the order of channel norm that corresponds to each antenna \cite{liu2009low, zhang2009receive}.
Note that the NBS method always provides $\cT_1 \subseteq \cT_2$ when $|\cT_1| \leq |\cT_2|$ for the same channel.
In Fig.~\ref{fig:rate_dl_ofdm}(a), the average sum rate increases with the number of selected antennas, which validates the derived Theorem~\ref{thm:monotonic} and Corollary~\ref{cor:monotonic_dl_ofdm}. 
Fig.~\ref{fig:rate_dl_ofdm}(b) shows the average sum rate versus the total power constraint $P$. 
Unlike the high-resolution ADC systems, there exists a point $P_{D}^{\rm max}$ for the maximum rate loss from not using all antennas, and the rate loss decreases after the point $P_{D}^{\rm max}$ in \eqref{eq:Pmax} for the OFDM channel $\ubH^{\rm dl}$.
Theoretical $P_{D}^{\rm max}$ for the NBS method with $N_t = 32$ and $N_t = 16$ are $33.1351$ dBm and $37.2850$, respectively.
In addition, the theoretical maximum rate loss in \eqref{eq:RLmax} for the OFDM channel $\ubH^{\rm dl}$ with $N_t = 32$ and $N_t = 16$ are $19.8034$ bps/Hz and $37.5282$ bps/Hz, respectively, which also corresponds to the simulation results.

%%%%%%%%%%%%%%%%%%%%%%%%%%%%%%%%%%
\subsection{Uplink Receive Antenna Selection}
\label{subsec:sim_ul}
%%%%%%%%%%%%%%%%%%%%%%%%%%%%%%%%%%

%%%%%% FIGURE (TX power) %%%%%
\begin{figure}[!t]
\centering
$\begin{array}{c c}
{\resizebox{0.48\columnwidth}{!}
{\includegraphics{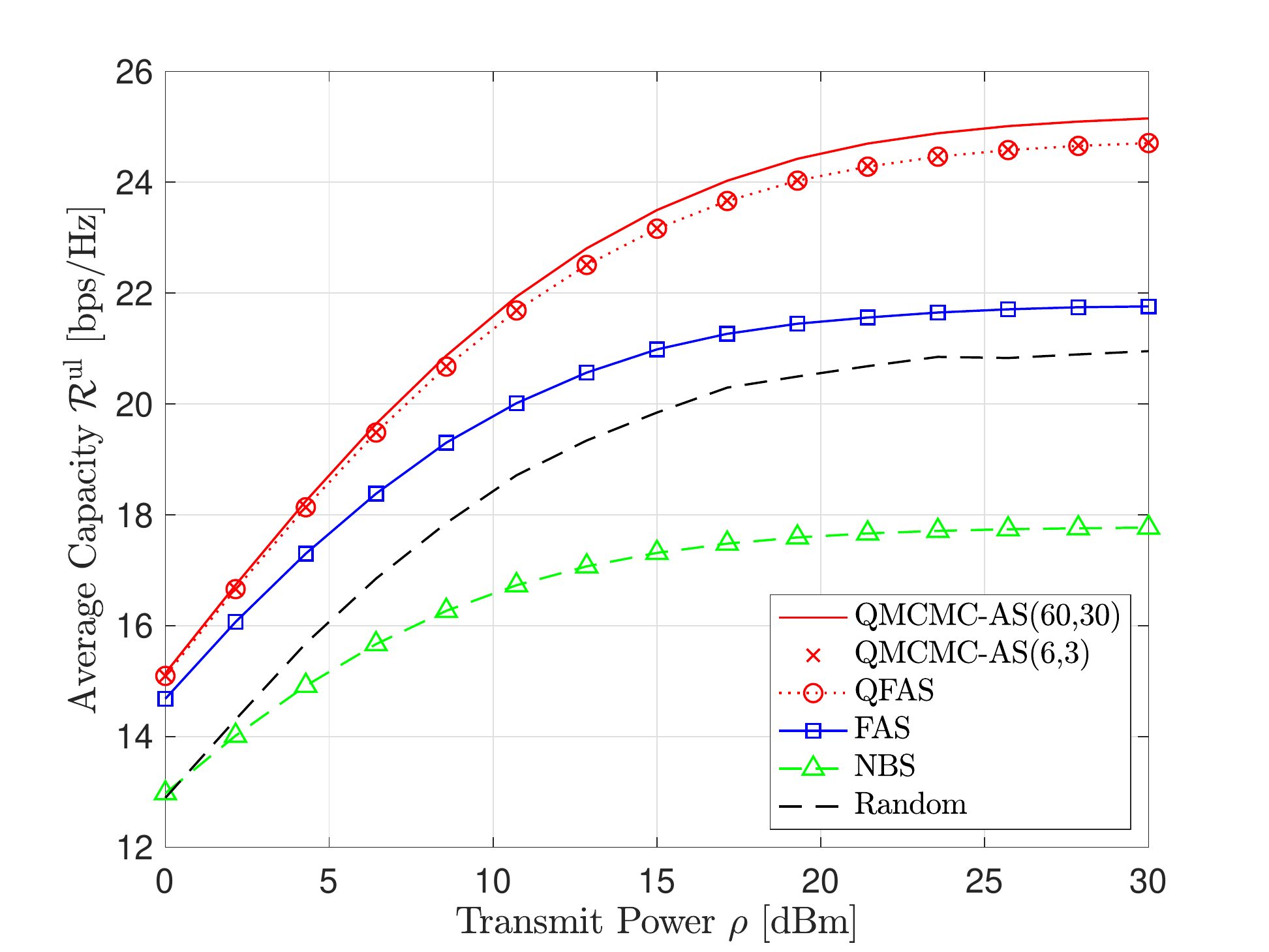}}
} &
{\resizebox{0.48\columnwidth}{!}
{\includegraphics{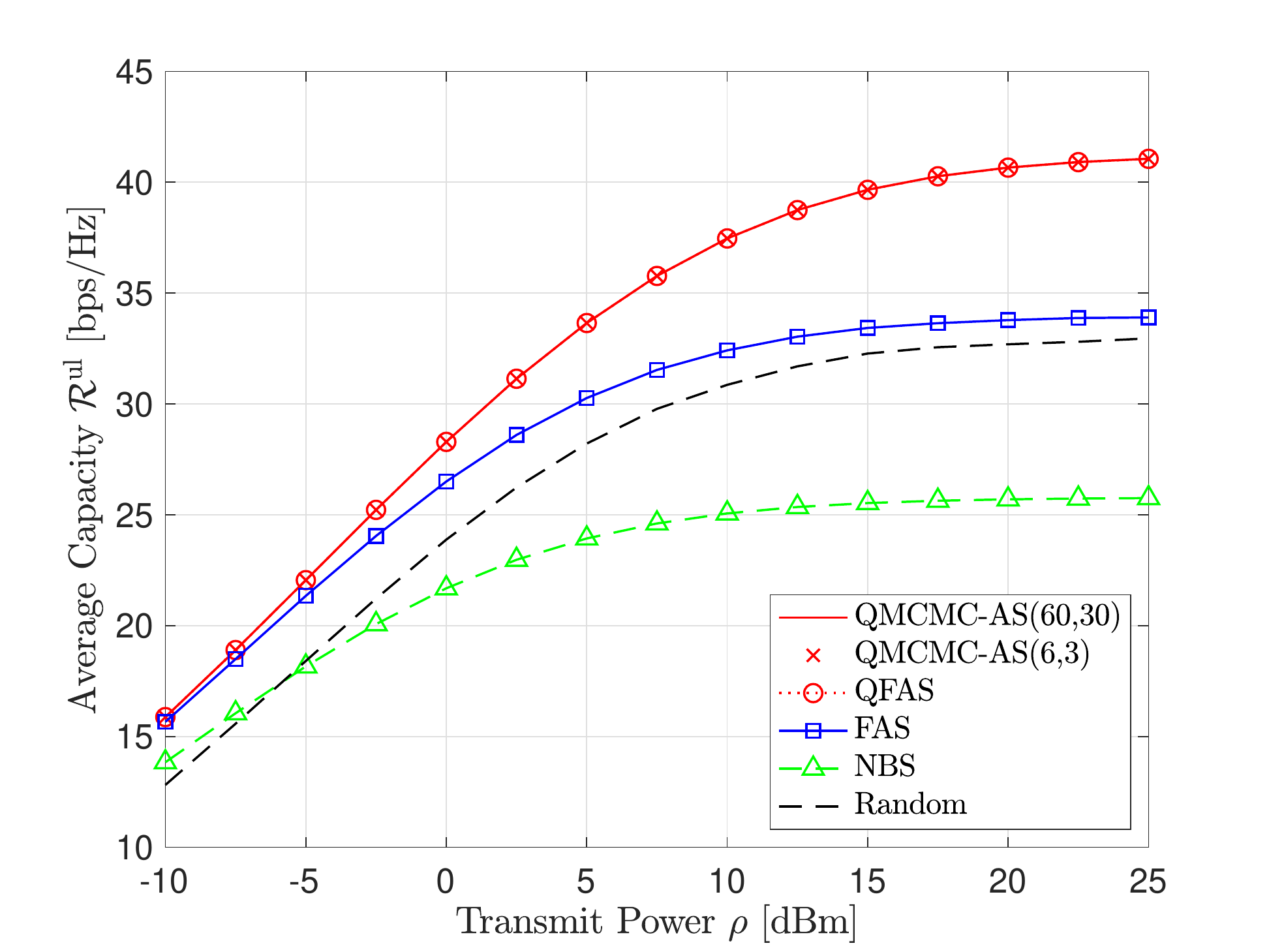}}
}\\
\mbox{\small (a)}& \mbox{\small (b)}
\end{array}$
\caption{
Average capacity $\mathcal{R}^{\rm ul}$ with respect to transmit power $\rho$ for (a) $N_{\rm BS} = 32$ BS antennas, $N_{\rm MS} = 8$ MSs, $N_r =8$ selected antennas, and $b = 3$ quantization bits, and for (b) $N_{\rm BS} =128$ BS antennas, $N_{\rm MS} = 12$ MSs, $N_r =16$ selected antennas, and $b = 3$ ADC bits.} 
\label{fig:capacity_power}
\end{figure}

We evaluate the proposed algorithms for the UL antenna selection---QFAS and QMCMC-AS methods.
We also simulate the NBS method \cite{liu2009low, zhang2009receive} and the fast antenna selection (FAS) algorithm in \cite{gharavi2004fast}, which shows a comparable performance to the optimal selection under perfect quantization.
Although the NBS method presents low performance improvement, because of its low complexity $\cO(N_{\rm MS}N_r)$, it is considered as a reasonable antenna selection method for high-resolution ADC systems \cite{zhang2009receive}.
%The Matlab code is available at \cite{choi2017antennasoftware}.
A random selection is simulated to offer a reference performance.
%%%%%%%%
\subsubsection{Narrowband Communications}
%%%%%%%%

%%%%%% FIGURE (Bits) %%%%%
\begin{figure}[!t]\centering
	\includegraphics[scale = 0.43]{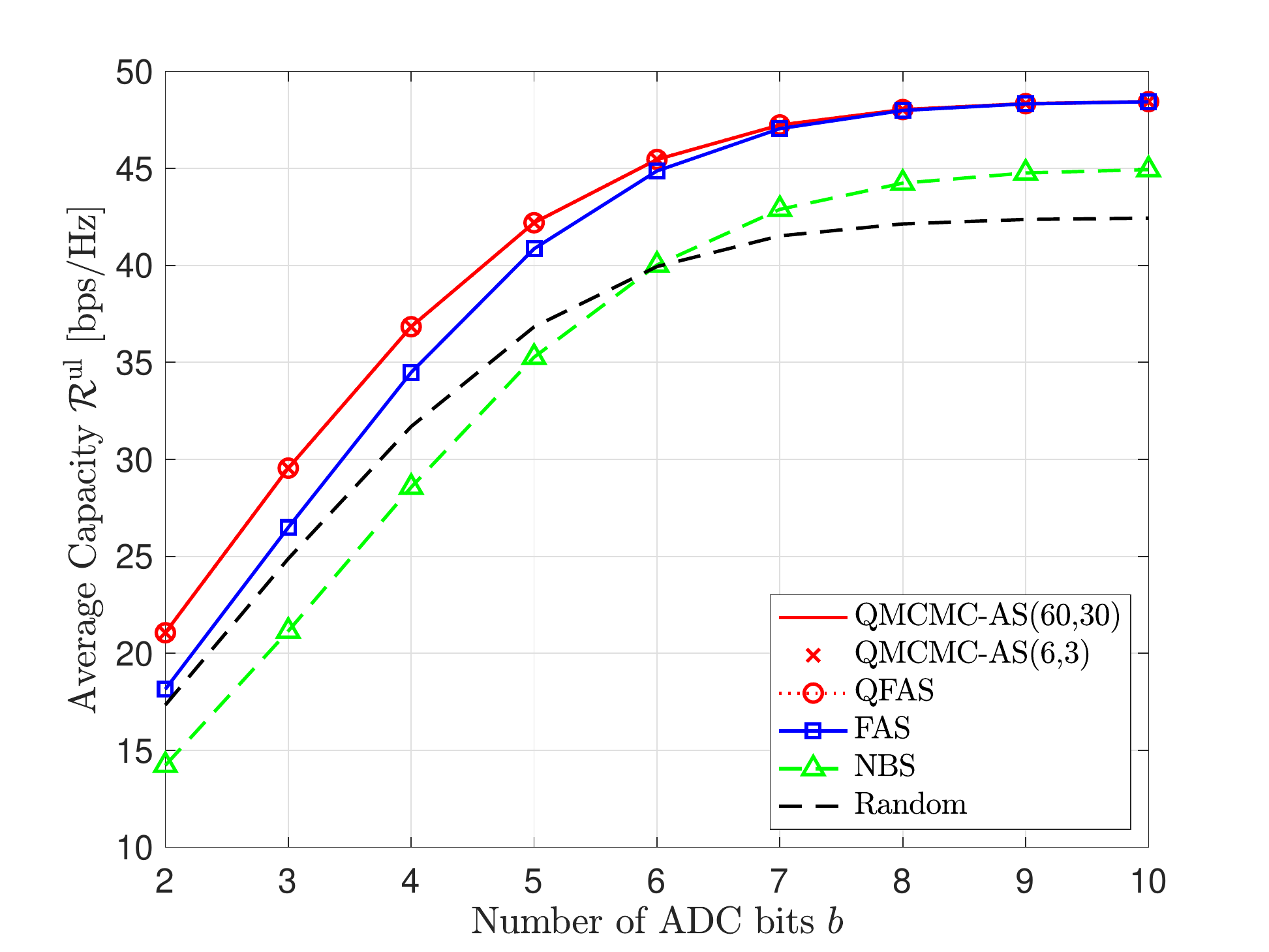}
	\caption{Average capacity $\mathcal{R}^{\rm ul}$ with respect to the number of ADC bits $b$ for $N_{\rm BS} =128$ BS antennas, $N_{\rm MS} = 8$ MSs, $N_r =16$ selected antennas, and $\rho= 10$ dBm transmit power.} 
	\label{fig:capacity_bit}
\end{figure}

%In Fig.~\ref{fig:capacity_power}, we evaluate the performance of the proposed QFAS and QMCMCM-AS methods in terms of capacity $\mathcal{R}^{\rm ul}$ versus MS transmit power $\rho$. 
%Fig.~\ref{fig:capacity_power}(a) considers $N_{\rm BS} = 32$, $N_{\rm MS} = 8$, $N_r =8$, and $b = 3$. 
in Fig.~\ref{fig:capacity_power}(a) the QFAS shows higher capacity than FAS, NBS, and random selection cases.
% for the considered range of transmit power. 
Noting that the initial point of the QMCMC-AS method is the antenna subset from the QFAS, the QMCMC-AS with $(N_{\rm MCMC}= 6, \tau_{\rm stop} = 3)$ provides no capacity increase from the QFAS method. 
Although the QMCMC-AS with $(60, 30)$ shows capacity increase from the QFAS method, it is marginal.
%and the complexity of the QMCMC-AS with $N_{\rm MCMC} = 60$ and $\tau_{\rm stop}=30$ is greatly larger than the QFAS.
Accordingly, the QFAS method achieves a near optimal performance in terms of capacity with low complexity.
The FAS method offers marginal improvement from the random selection case as it ignores quantization error associated with selected antennas.
The NBS method shows the worst performance in low-resolution ADC systems, which means that selecting the subset of antennas that gives the largest channel gains not only increases the inter-user interference but also increases quantization error.

%%%%%% FIGURE (TX power) %%%%%
\begin{figure}[!t]
\centering
$\begin{array}{c c}
{\resizebox{0.48\columnwidth}{!}
{\includegraphics{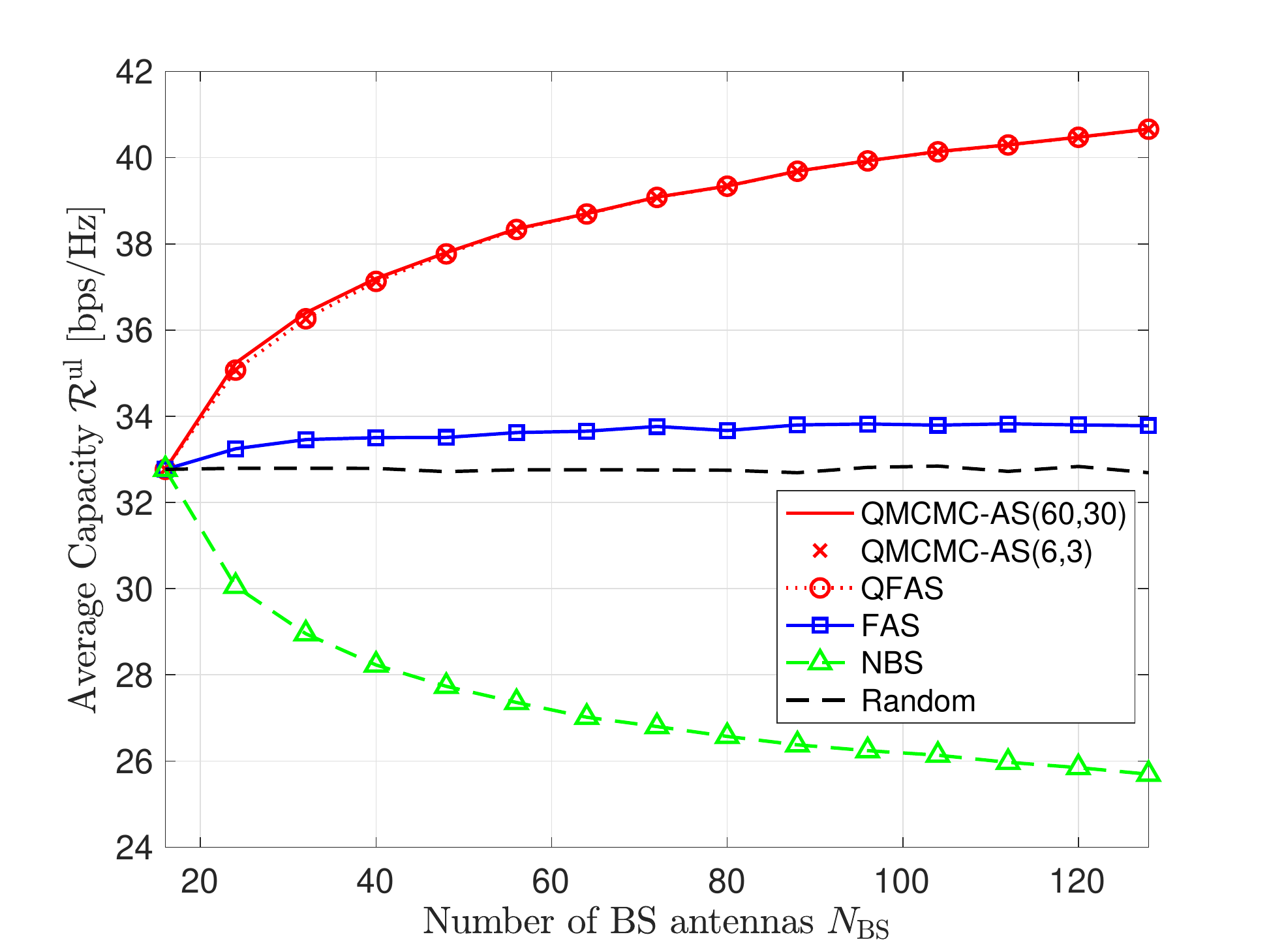}}
} &
{\resizebox{0.48\columnwidth}{!}
{\includegraphics{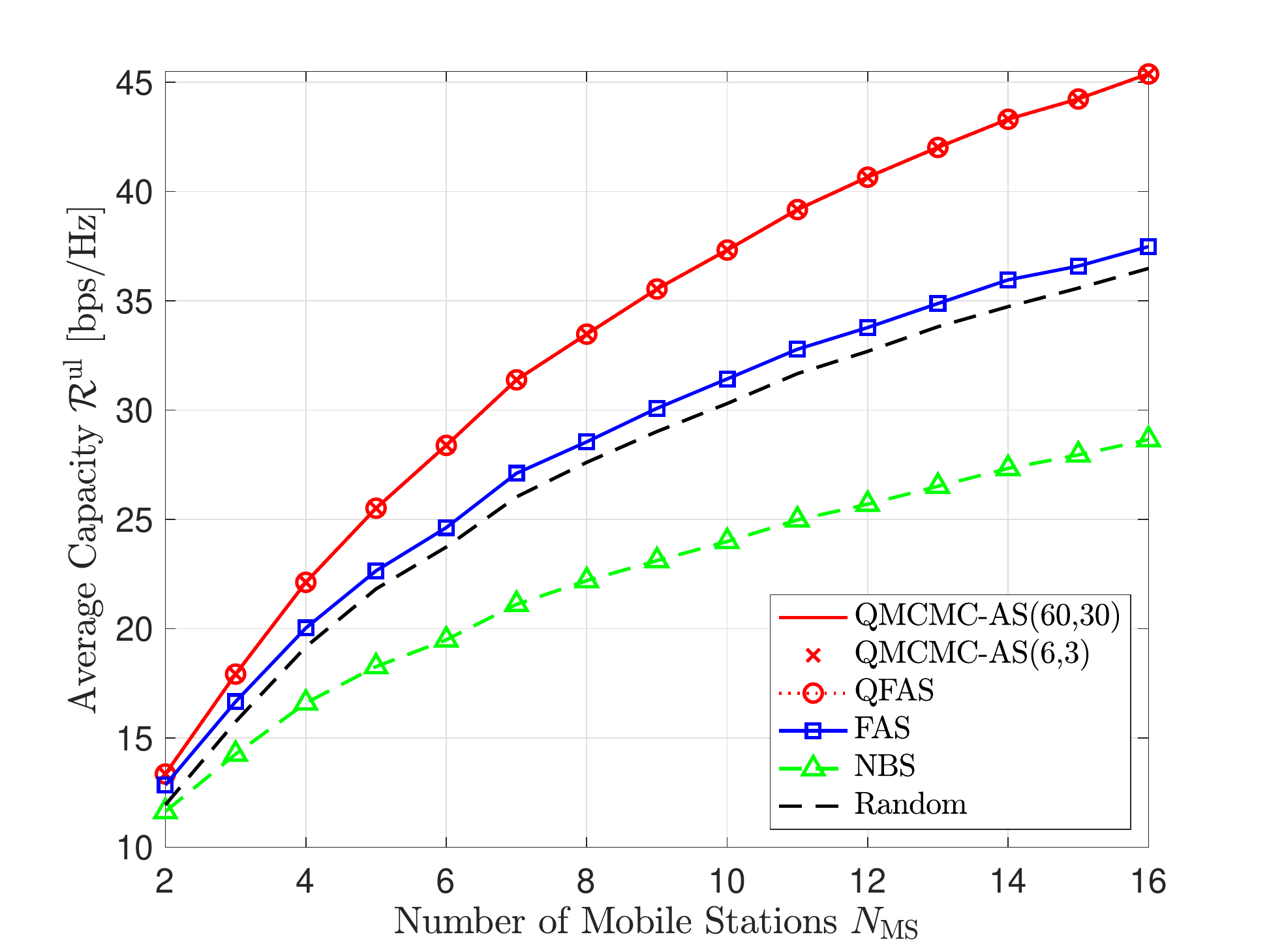}}
}\\
\mbox{\small (a)}& \mbox{\small (b)}
\end{array}$
\caption{
Average capacity $\mathcal{R}^{\rm ul}$ (a) with respect to the number of BS antennas $N_{\rm BS}$ for $N_{\rm MS} = 12$ MSs, $N_r =16$ selected antennas, $\rho= 20$ dBm transmit power, and $b = 3$  ADC bits, and (b) with respect to the number of MSs $N_{\rm MS}$ for $N_{\rm BS} = 128$ BS antennas, $N_r =16$ selected antennas, $\rho= 20$ dBm transmit power, and $b = 3$  ADC bits.} 
\label{fig:capacity_Nbs_Nms}
\end{figure}

%Fig.~\ref{fig:capacity_power}(b) considers $N_{\rm BS} = 128$, $N_{\rm MS} = 12$, $N_r =16$, and $b = 3$.
With the increased number of receive antennas, selected antennas, and  MSs, the trend of the curves in Fig.~\ref{fig:capacity_power}(b) is similar to Fig.~\ref{fig:capacity_power}(a). 
The QMCMC-AS with $(60,30)$, however, shows no improvement from the QFAS. 
This shows that the QMCMC-AS is not scalable with the number of BS antennas and selected antennas.  
In both Fig.~\ref{fig:capacity_power}(a) and (b), the capacity gap between the QFAS algorithm and the conventional algorithms increases with the transmit power $\rho$ because the quantization error becomes more dominant than the AWGN as the transmit power increases.
%Accordingly, the proposed QFAS algorithm which incorporates the quantization error when selecting antennas improves the capacity performance more when the quantization error becomes prevailing. 
In addition, the results in Fig.~\ref{fig:capacity_power} demonstrate that the conventional UL antenna selection approaches are not applicable to the low-resolution ADC receivers.

%In Fig.~\ref{fig:capacity_bit}, we evaluate the capacity performance with respect to the number of ADC bits $b$ for $N_{\rm BS} =128$, $N_{\rm MS} = 8$, $N_r =16$, and $\rho= 10$ dBm. 
In Fig.~\ref{fig:capacity_bit}, we note that in the low-resolution ADC regime, the capacity of the QFAS method is higher than the FAS, NBS, and random selection.
This corresponds to the intuition for the proposed method such that considering the quantization error is critical when selecting antennas in low-resolution ADC systems.
The capacity of the QFAS and FAS methods converges as the number of ADC bits $b$ increases, thereby showing that the proposed QFAS method is generalized version of the FAS in terms of quantization precision.
The NBS method performs better than the random selection in high-resolution ADC regime while it still performs worse in the low-resolution ADC regime.
Again, this validates the intuition that the antenna selection approaches for high-resolution ADC systems cannot directly be applied to the low-resolution ADC receivers.

%%%%%% FIGURE (TX power) %%%%%
\begin{figure}[!t]\centering
	\includegraphics[scale = 0.43]{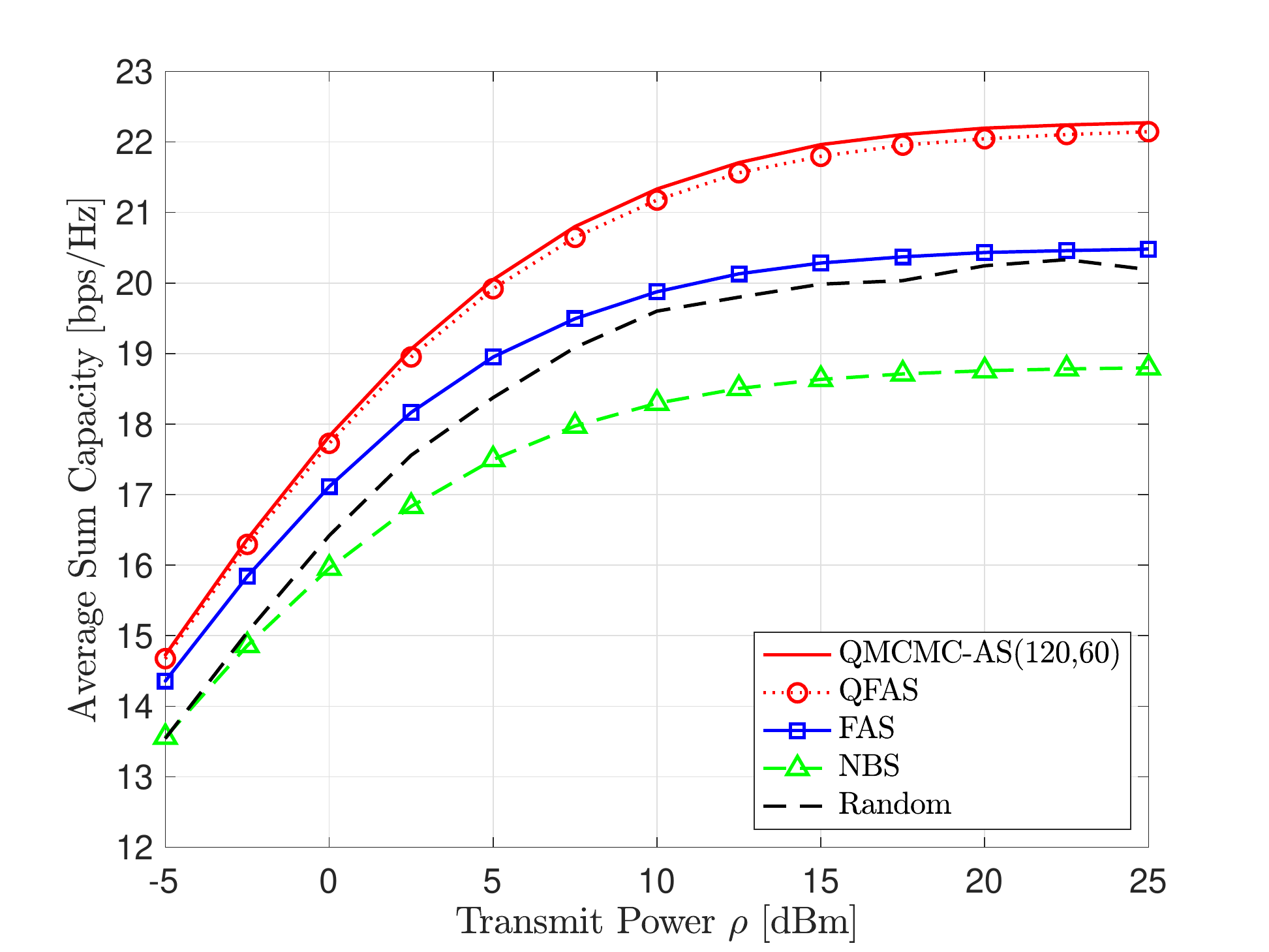}
	\caption{Average sum capacity $\frac{1}{N_{\rm sc}}\sum_{n}\mathcal{R}^{\rm ul}_n$ with respect to transmit power $\rho$ for $N_{\rm BS} = 32$ BS antennas, $N_{\rm MS} = 8$ MSs, $N_r =8$ selected antennas, $b = 3$ quantization bits, and $N_{\rm sc} = 64$ subcarriers with $L =4$-tap channels.} 
	\label{fig:capacity_power_ofdm}
\end{figure}

%The capacity performance with respect to the number of BS antennas $N_{\rm BS}$ for $N_{\rm MS} = 12$, $N_r =16$, $\rho= 20$ dBm, and $b = 3$ is evaluated in Fig.~\ref{fig:capacity_Nbs_Nms}(a).
In Fig.~\ref{fig:capacity_Nbs_Nms}(a), we observe large improvement from the random selection for the QFAS method as $N_{\rm BS}$ increases whereas the FAS and NBS cannot provide such improvement. 
Accordingly, the proposed QFAS method can be effective in the large antenna array systems with low-resolution ADCs by efficiently reducing the number of RF chains.
%The FAS method shows only marginal increase from the random selection as quantization error is note taken account.
We note that the capacity with the NBS method even decreases with the number of BS antennas since the increased candidate antenna size worsens the resulting subset of antennas by significantly increasing quantization error and interference.
%In Fig.~\ref{fig:capacity_Nbs_Nms}(b), we demonstrate the capacity performance of the proposed algorithms verses the number of MSs $N_{\rm MS}$ for $N_{\rm BS} = 128$, $N_r =16$, $\rho= 20$ dBm, and $b = 3$.
In Fig.~\ref{fig:capacity_Nbs_Nms}(b), the capacity gap between the QFAS and FAS methods increases with $N_{\rm MS}$, which is desirable in term of maximizing the sum rate.
Overall, performance improvement with the proposed QFAS becomes larger as more users are served and more antennas are deployed for the fixed number of selected antennas (equivalently RF chains), which is desirable for future communication systems that are likely to serve more users with more antennas.

%%%%%%%%%%%%%%%%%%%%%%%%%%%%%%%%%%%
%\subsection{Uplink Wideband ODFM Communications}
%\label{subsec:sim_ul_ofdm}
%%%%%%%%%%%%%%%%%%%%%%%%%%%%%%%%%%%

%%%%%%%
\subsubsection{Wideband OFDM Communications}
%%%%%%%

we consider UL wideband ODFM communications with $N_{\rm sc} = 64$ subcarriers for channels with $L = 4$ taps.
%Fig.~\ref{fig:capacity_power_ofdm} shows average capacity $\frac{1}{N_{\rm sc}}\sum_{n}\mathcal{R}^{\rm ul}_n$ versus transmit power $\rho$ for $N_{\rm BS} = 32$, $N_{\rm MS} = 8$, $N_r =8$, $b = 3$.
Similarly to the simulation results for the narrowband system, the proposed QFAS method in Fig.~\ref{fig:capacity_power_ofdm} shows higher capacity than the FAS, NBS, and random selection. 
In addition, the QFAS method almost achieves the capacity of the QMCMC-AS with the increased number of sampling and iterations $(N_{\rm MCMC} =120, \tau_{\rm stop} = 60)$.
Therefore, the QFAS can also achieve near optimal selection performance in wideband OFDM systems while the FAS method shows marginal improvement from the random selection and the NBS method shows the worst performance in low-resolution ADC systems.

%The average capacity for a different number of selected antennas $N_r$ is evaluated in Fig.~\ref{fig:capacity_power_ofdm} with  $N_{\rm BS} = 128$, $N_{\rm MS} = 12$, and $b = 3$.
In Fig.~\ref{fig:capacity_Nr_ofdm}, we note that the proposed QFAS performs better than the FAS, NBS, and random selection for any size of antenna subset $N_r$.
The QFAS provides saving of about $10$ RF chains on average compared to the FAS and random selection,
Such saving can be considered as large for receivers with the relatively small number RF chains compared to the number of antennas.
Overall, the simulation results demonstrate that the conventional receive antenna selection is not adequate under non-negligible quantization error and that the proposed QFAS can effectively incorporate the quantization error in antenna selection.
%, thereby achieving large improvement compared to the conventional selection methods.

%%%%%% FIGURE (Nr) %%%%%
\begin{figure}[!t]\centering
	\includegraphics[scale = 0.43]{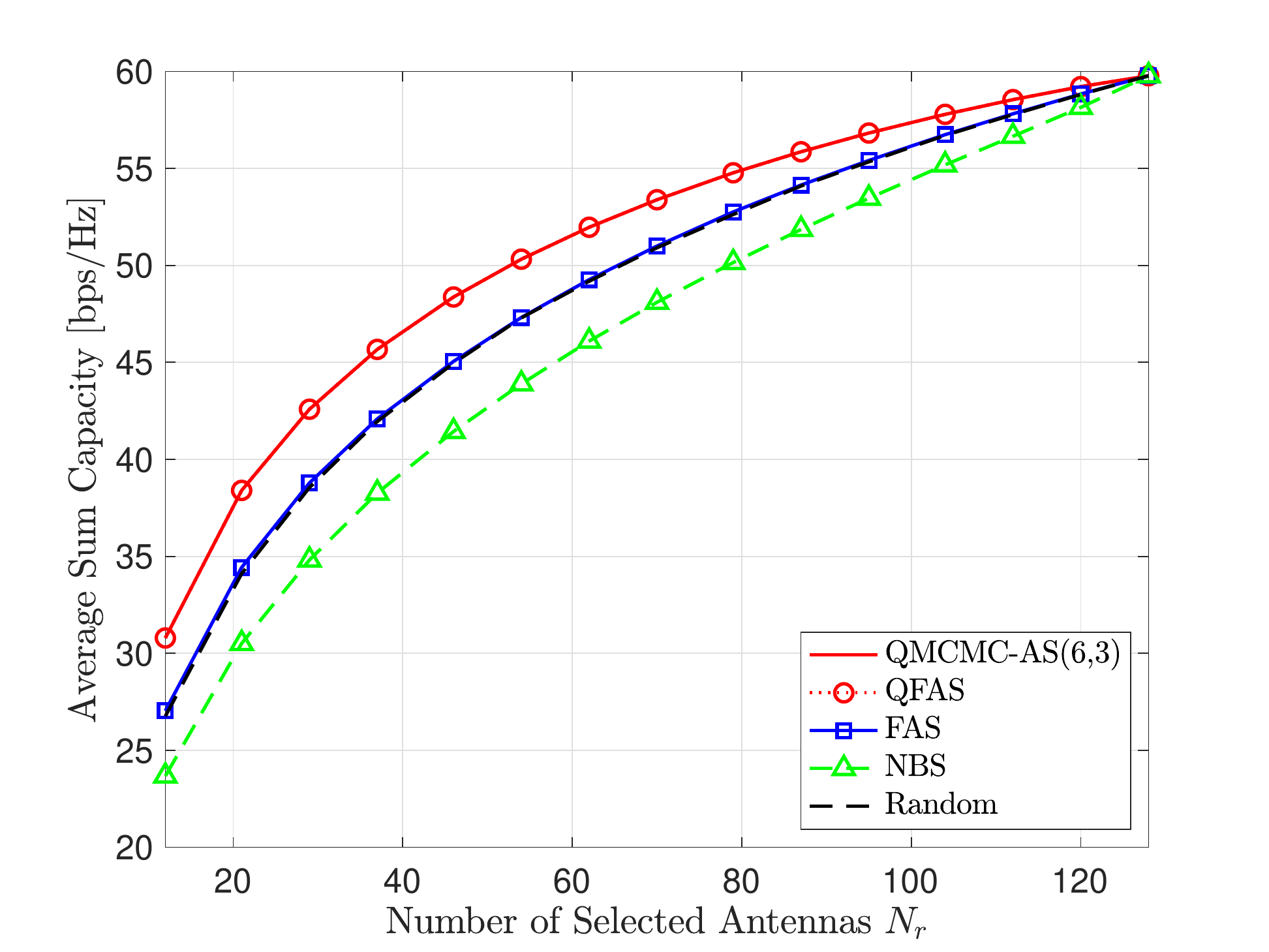}
	\caption{Average sum capacity $\frac{1}{N_{\rm sc}}\sum_{n}\mathcal{R}^{\rm ul}_n$ with respect to the number of selected antennas $N_r$ for $N_{\rm BS} = 128$ BS antennas, $N_{\rm MS} = 12$ MSs, $b = 3$ quantization bits, $N_{\rm sc} = 64$ subcarriers with $L =4$-tab channels, and $\rho = 20$ dBm.} 
	\label{fig:capacity_Nr_ofdm}
\end{figure}

%%%%%%%%%%%%%%%%%%%%%%%%%%%%%%%%%%
\section{Conclusion}
\label{sec:conclusion}
%%%%%%%%%%%%%%%%%%%%%%%%%%%%%%%%%%

In this paper, we investigate antenna selection at a BS in low-resolution ADC systems to achieve power-efficient wireless communication systems.
%%%% DL Communications
For downlink narrowband and wideband OFDM systems, we showed that the existing state-of-the-art transmit antenna selection techniques can be applicable to the low-resolution ADC systems when the BS employs the ZF precoding with equal power distribution. 
In addition, we proved that it is beneficial to use more antennas in terms of maximizing the sum rate.
%, which is equivalent to the result in the high-resolution ADC systems.
Unlike the high-resolution ADC systems, we validated that the transmit antenna selection can achieve a comparable sum rate to the system that uses all antennas by increasing the total transmit power constraint, which allows to reduce the number of RF chains with marginal sum rate loss.
%%%% UL Communications
For an uplink narrowband and wideband OFDM systems, we showed that the conventional receive antenna selection criteria are insufficient for the low-resolution ADC systems.
The generalized greedy selection criterion provided that capturing the balance between the channel gain and increase in quantization error is critical when there is non-negligible quantization error at the receiver.
The propose greedy selection algorithm showed that it guarantees $(1-\frac{1}{e})$ of the capacity with an optimal antenna subset.
%We also derive a lower bound of the sum rate achieved by the proposed algorithm based on properties of submodularity functions.
%%%% OFDM Communications
%Extending the UL and DL antenna selection problems to wideband OFDM systems, we show that the derived results for the DL narrowband systems also hold for the DL OFDM systems when entire subcarriers share a common subset of antennas. 
%In UL OFDM systems, we propose receive antenna selection algorithms by modifying the proposed UL algorithms for narrowband channels and further derive a lower bound of the sum rate.
%%%% Simulations
In simulations, theoretical analyses were validated and the proposed algorithms outperformed the conventional algorithms in capacity, achieving a near optimal performance with low complexity. 
%Therefore, using the proposed algorithm, the antenna selection for large-scale MIMO systems with low-resolution ADCs offers more flexibility in the resolution and number of ADCs with higher rates than other selection methods.

\bibliographystyle{IEEEtran}
\bibliography{BSAntSel.bib}
\end{document}